\documentclass[12pt]{article}

\usepackage[utf8]{inputenc}
\usepackage{fullpage}
\usepackage{amsmath}

\usepackage{amsthm}
\usepackage{mathtools}
\usepackage{bbm}
\usepackage{frcursive}
\usepackage{commath}
\usepackage{mathrsfs}
\usepackage{xcolor}
\usepackage{amssymb}
\usepackage{float}
\usepackage{subcaption}
\usepackage[margin=1in]{geometry} % specify the margin

\usepackage{hyperref}
\usepackage{comment}
\usepackage[UKenglish]{babel}

\usepackage[nodisplayskipstretch]{setspace}
\newcommand{\eps}{\varepsilon}

\newtheorem{theorem}{Theorem}[section]

\newtheorem{lemma}[theorem]{Lemma}

\newtheorem{proposition}[theorem]{Proposition}

\newtheorem{definition}[theorem]{Definition}

\newtheorem{remark}[theorem]{Remark}
\newtheorem{example}[theorem]{Example}
\newtheorem*{example*}{Example}

\begin{document}

\title{Delegate Pricing to Algorithms:\\ When Slow and Steady Wins the Race}

\author{
Galit Askenazi-Golan\textsuperscript{1},
\hspace{0.2cm}
Edward Plumb\textsuperscript{1},\\
Clemens Possnig\textsuperscript{1,2},
\hspace{0.2cm}
Yufei Zhang\textsuperscript{3}
}

\maketitle

\footnotetext[1]{Department of Mathematics, London School of Economics. 
Email: \texttt{g.askenazi-golan, e.plumb @lse.ac.uk}}

\footnotetext[2]{Department of Economics, University of Waterloo. 
Email: \texttt{c.possnig (@) uwaterloo.ca}\\
Financial support from the
Social Sciences and Humanities Research Council of Canada
(SSHRC) through an Insight Development Grant
 is gratefully acknowledged.}

\footnotetext[3]{Department of Mathematics, Imperial College London. 
Email: \texttt{yufei.zhang (@) imperial.ac.uk}}

% \author{Author Name\thanks{}

\begin{abstract}
 We study the strategic design of learning algorithms in a canonical continuous-time pricing game. We introduce a meta-game in which firms select learning rates for a gradient dynamic, evaluating payoffs as the discounted sum of profits accrued along the entire learning path. We uncover a fundamental dichotomy driven by stage-game incentives: if the stage game exhibits strategic substitutability, each firm strictly prefers faster learning. Under strategic complementarity, however, an excessively fast algorithm accelerates the rival's competitive response, destroying transitional profit margins. Even though the competitive price is a dominant action in the stage game and convergence to this price is guaranteed, we prove that firms design sluggish algorithms in equilibrium.
\end{abstract}

\section{Introduction}

The delegation of strategic decision-making to autonomous learning algorithms is fundamentally reshaping oligopolistic competition. Firms increasingly rely on algorithmic agents to set prices and determine output \cite{calvano2020artificial,assad2024algorithmic}. The vast majority of the literature on algorithmic competition satisfies at least one of two facts: (1), the algorithmic learning rule is treated as exogenous, and asymptotic performance is the object of study \cite{banchio2022artificial,possnig2023reinforcement}, or (2), the algorithms studied are learning strategies contingent on opponents' actions \cite{lamba2026pricing}. We depart from both points, by studying the strategic design of learning algorithms constrained to one-shot policies that are guaranteed to converge to a competitive price. Nevertheless, we show that in equilibrium, supra-competitive prices can be sustained during a prolonged transition phase.

This is important from a regulatory perspective: our results imply that to prevent supra-competitive prices from occurring, regulators can neither rely on the fact that algorithms may converge, in the long run, to competitive equilibria; nor on the fact that algorithms may be constrained to learn only simple, myopic rules.

% In practice, learning algorithms are themselves objects of strategic design: firms can deliberately tune hyperparameters, such as learning rates, to control the speed at which their algorithms adapt along a gradient estimate. 

% Furthermore, since learning can be costly while agents adapt, firms have incentives to optimize the profits generated throughout the learning phase, rather than merely its asymptotic outcome.

% This transitional perspective has been largely understudied. We address this gap by considering the strategic design of the learning path itself. Specifically, we ask a fundamental question: 

% \begin{center}
%     \textit{When strategic interaction is delegated to algorithms, what equilibrium behavior emerges in the ex ante stage, when firms choose their algorithms to maximize accumulated payoffs along the learning path?}
%     \end{center}

We study a meta-game of algorithmic adaptation. Firms  compete in a Bertrand oligopoly stage game by delegating their actions to algorithmic agents. These agents update their strategies autonomously in continuous time, following a gradient dynamic that moves in the direction of steepest local payoff increase. The firms themselves strategically choose their algorithm's \textit{learning rate}, a parameter governing the speed of gradient adjustment.\footnote{Gradient-based policy optimisation is a standard component of modern
reinforcement-learning implementations \cite{raffin2021stable}, with
commercial applications including online retail pricing
\cite{liu2019dynamic} and automated bidding in advertising auctions
\cite{mou2022sustainable}, all of which require the choice of a learning rate parameter.} 
The firms maximize accumulated  profits generated throughout the learning phase, rather than merely its asymptotic outcome, as learning can be costly. 

Our central contribution is the discovery that the strategic incentives governing algorithmic design are  dictated by the fundamental nature of the economic interaction within the Bertrand market. By modeling Bertrand competition as a Prisoner's Dilemma, where the baseline stage game guarantees convergence to the competitive Nash equilibrium regardless of learning rates, we isolate the strategic dimension of surplus extraction along the out-of-equilibrium adjustment path. We reveal a stark dichotomy: the meta-game over learning rates can exhibit significantly different equilibrium predictions, depending on the relative gains from undercutting a high-price versus a low-price rival in the stage game.

This comparison of relative gains  pins down whether the stage game exhibits strategic substitutability or complementarity. Strategic substitutability is satisfied if the gain from undercutting a high-price rival is larger than the gain from undercutting a low-price rival. In this scenario, faster learning is an unambiguously optimal strategy, and no equilibrium exists in the meta game. Under strategic complementarity however, Nash equilibria of moderate learning rates can exist.

The mechanism is simple: acceleration towards the low price always has a beneficial direct effect on a firm's flow of profits. However, acceleration also affects the learning trajectory of the competitor's algorithm. When the stage game exhibits strategic substitutes, the acceleration of one firm \textit{slows} the opponent's convergence, allowing for longer-term profitable exploitation of the opponent. As a result, both firms always find it profitable to accelerate, and no equilibrium exists (Theorem \ref{thm:equilibrium}).

Under strategic complementarity, however, the effect of increasing one firm's learning rate  \textit{accelerates} the opponent's learning trajectory. While the direct effect of acceleration is still beneficial for the firm, the reaction of the opponent's trajectory is a disadvantage. In our main theoretical result, we characterise open sets of payoffs and initial conditions under which this opposing effect leads to interior Nash equilibria of the meta-game, where both firms optimally moderate their learning rates (Theorem \ref{thm:interior-NE-informal}).

We further examine how this interior equilibrium learning rate responds to weaker strategic complementarity in the stage game (Proposition~\ref{prop:delta_sensitivity}). The direction of this response depends on the normalized loss a low-pricing firm incurs when its rival switches from high to low pricing. 
When the normalized loss is above a threshold \(\bar t\), weaker complementarity raises the equilibrium learning rate; below it, the equilibrium learning rate falls.

{

The underlying mechanism differs across the two regimes. Above the threshold \(\bar t\), a firm learns slowly primarily to avoid inducing faster adjustment by its rival. Weaker complementarity dampens this strategic feedback, reducing the cost of learning quickly and thereby raising the equilibrium learning rate. Below the threshold, slow learning instead serves primarily to prolong the high-price transition. In this regime, weaker complementarity makes a firm’s high prices less effective at slowing its rival’s convergence. The firm therefore lowers its own learning rate to sustain the gradual transition. Thus, weaker complementarity mitigates the risk of inducing rival acceleration when the relative loss is high, but weakens the ability to sustain slow joint adjustment when the relative loss is low. The equilibrium speed of learning therefore depends on the relative importance of these two strategic motives, rather than on strategic complementarity alone.
}

% This distinction reflects the relative timing of the marginal gains from faster own adjustment and the marginal losses associated with accelerating the rival. Above the threshold, the losses are more back-loaded than the gains; below it, this ordering reverses. Holding the learning rates fixed, weaker complementarity strengthens each firm’s unilateral incentive to accelerate in both regimes. The equilibrium response nevertheless differs: above the threshold, both firms accelerate, bringing the later losses forward and increasing their relative discounted weight; below it, both firms slow down, delaying the later gains and discounting them more heavily. These adjustments restore mutual best responses, showing that the direction of the equilibrium response depends on the timing of marginal payoff effects, not on strategic complementarity alone.

% Algorithmic deceleration acts as a dynamic commitment device. Theoretically, it parallels  strategic delegation and the ``puppy-dog ploy'' of dynamic oligopoly \cite{fudenberg1984fat}, where firms commit to acting ``soft'' to induce a softer competitive response from a rival, reimagined here in the space of algorithmic hyperparameters. 

% Our results provide a novel explanation for supra-competitive pricing in algorithmic markets. Regulators need not exclusively look for complex, collusive reward-punishment schemes in the asymptotic limit. Elevated market prices can emerge simply because firms optimally decelerate learning to delay the onset of the competitive equilibrium.

While we focus our main exposition on oligopolistic pricing, our framework naturally extends to broader classes of economic interaction. In ongoing research, we map qualitatively distinct phenomena in other stage games, demonstrating that choices of learning rates introduce a game of equilibrium selection in coordination games, and generate strict, monotonic incentives to accelerate in zero-sum environments. Furthermore, we establish that near any stable equilibrium, firms always prefer to learn faster, implying that algorithmic sluggishness is fundamentally driven by global transition dynamics rather than local steady-state optimization.

\subsection{Related Literature}

This paper bridges the classic economic theory of strategic delegation with the modern industrial organization of algorithmic pricing, extending literature across economic theory, computer science, and dynamical systems. The closest connections are to work that treats an algorithm, its
initial policy, or its adaptation parameters as a strategic choice. While there is strong emerging interest in the area of strategic choice of algorithms, all existing works we know of \cite{carissimo2025cycles,luo2026algorithmic} take a numerical approach. To the best of our knowledge, we provide the first theoretical framework and analytical results of a meta-game in algorithmic hyperparameters, and consequently also the first characterisation of Nash equilibria in such a setting. 

% \cite{carissimo2025cycles} formulate parameter-choice
% meta-games for Q-learning in congestion games, with payoffs evaluated over the
% repeated learning process. Their numerical analysis finds profitable
% unilateral deviations in the configurations studied. We share their focus on
% incentives at the algorithm-design stage. Our analysis concerns a different source
% of strategic value: delaying a transition that necessarily ends at the competitive
% outcome, rather than exploiting recurrent learning-induced cycles. Within this
% setting, we establish sufficient conditions for a strict symmetric equilibrium over
% the continuum of positive learning rates.

% \cite{luo2026algorithmic} study pricing meta-games in which
% agents choose initial policies together with test-time adaptation rules. They estimate
% empirical payoff matrices and analyse equilibria among candidate meta-strategies,
% including learning-rate choices. According to these estimates,
% smaller learning rates can outperform larger ones. We instead
% fix the initial state and study deterministic projected-gradient dynamics with a
% common competitive terminal outcome for every positive rate profile. Our contribution
% is an analytical characterisation of strategic delay, including equilibria that
% withstand every unilateral change to a positive constant learning rate and a local
% comparative-static reversal governed by stage-game payoffs.

Other theoretical works related to our problem explore the commitment aspect of algorithm design. \cite{lamba2026pricing} analyse a repeated duopoly in which firms choose
algorithms mapping the rival's price into their own.  Their equilibrium analysis shows how these commitments can displace competitive pricing with persistent supracompetitive outcomes. We instead study whether higher prices survive when such long-run outcomes are impossible: every positive learning-rate profile reaches the competitive price in finite time. We show when firms can nevertheless sustain an equilibrium with prolonged periods of elevated prices, while the eventual
competitive outcome remains unchanged.

The interpretation as strategic commitment also connects our results to managerial
delegation \cite{fershtman1987equilibrium} and strategic investment
\cite{fudenberg1984fat}. In algorithmic pricing,
\cite{brown2023competition} endogenise pricing frequencies and show that adopting
slower pricing technology can be profitable. Where their frequency choices alter
the timing of price-setting opportunities, our learning rate scales the magnitude
of continuous gradient adjustment. We identify when this restricted instrument
supports symmetric interior equilibrium choices despite the strict dominance of
the competitive price in the stage game.

Another strand studies supracompetitive behaviour generated by prescribed learning
algorithms \cite{calvano2020artificial,banchio2022artificial}. Our question is not
whether learning sustains a noncompetitive long-run outcome: such an outcome is
ruled out by the chosen stage game and gradient dynamic. Instead, we study how
firms' incentives to select learning speeds affect the transition. The distinction
is between eventual competitive convergence and the discounted profitability of
the path leading to it.

Finally, our framework is related to strategic interaction with adaptive agents
\cite{deng2019strategizing,arslantas2024strategizing}. 
\cite{kolumbus2022auctions} show that users of regret-minimising auction agents can
have incentives to misreport valuations to their own agents. We keep payoff
representations unchanged and vary a learning parameter. Likewise,
opponent-learning-aware methods explicitly modify updates to account for
anticipated changes in rivals' policies \cite{foerster2017learning}. Our agents use
ordinary independent payoff gradients; it is the firms' ex ante choice of a
constant multiplier, rather than an opponent-shaping correction within the update
rule, that strategically influences the rival's trajectory.

\section{The Model}\label{sec:model}

Consider two firms, indexed by $i \in N= \{1, 2\}$, competing repeatedly over prices of a homogeneous good. We model this interaction as a Prisoner's Dilemma played repeatedly in continuous time. Firms choose between maintaining a high, supra-competitive price $p_M$ and undercutting to a low, competitive price $p_C<p_M$. Let $x(t) \in [0,1]$ denote the probability that Firm~$1$ sets the $p_M$ at time $t$, and let $y(t) \in [0,1]$ denote the corresponding probability for Firm~$2$. The state of the market at any time $t$ is fully described by the profile $(x(t), y(t))$ in the state space $S = [0,1]^2$. Let $u_i(x,y)$ represent the expected stage-game profit for firm $i$ given the current pricing state. We refer to the tuple $G=\{N, A, (u_i)_{i\in N}\}$ as the stage game, where $A=\{p_C,p_M\}$ is the action set of each firm.

The normal-form representation of $G$ is:
\begin{equation}
\label{eq:payoff-matrix}
\begin{array}{|l|c|c|}
    \hline
    & {p_M} & {p_C} \\
    \hline
    {p_M} & R,R & S,T \\
    \hline
    {p_C} & T,S & P,P \\
    \hline
\end{array}
\end{equation}
The payoff parameters satisfy $T > R > P > S$, as is standard in the Prisoner's Dilemma: low prices $(p_C, p_C)$ constitute the unique Nash equilibrium in dominant strategies, whilst mutual cooperation $(p_M, p_M)$ yields higher payoffs for both players. 

\subsection{Algorithmic Delegation and Learning Dynamics}

Firms delegate their strategic interactions to autonomous algorithmic agents. These agents update their pricing strategies via a projected gradient dynamic that moves in the direction of steepest local profit increase. 

Crucially, the firms engineer their algorithms by strategically selecting a \textit{learning rate}, $\gamma_i > 0$, which acts as a scalar parameter governing the speed of gradient adjustment. Given a profile of learning rates $\gamma = (\gamma_1, \gamma_2)$, the scaled projected gradient dynamics of the algorithms are:
$$\dot{x}(t) = \Pi_{T_{[0,1]}(x(t))} \left( \gamma_1 \frac{\partial u_1(x(t), y(t))}{\partial x} \right)$$
$$\dot{y}(t) = \Pi_{T_{[0,1]}(y(t))} \left( \gamma_2 \frac{\partial u_2(x(t), y(t))}{\partial y} \right),$$
where $T_{[0,1]}(\cdot)$ denotes the tangent cone, and $\Pi$ is the Euclidean projection onto this closed convex cone, ensuring that the state $(x(t), y(t))$ remains within the bounds of $S$ for all $t \ge 0$.

Because the underlying Bertrand stage game is a dominance-solvable Prisoner's Dilemma, these gradient dynamics guarantee that the market will eventually converge to the unique competitive stage-game Nash equilibrium $(p_C,p_C)$. Note that while the profile of learning rates $(\gamma_1, \gamma_2)$ dictates the exact trajectory and speed of convergence, the stationary points of the dynamics are independent of the learning rates and coincide with the stage game's equilibria.

\subsection{The Meta-Game}

The algorithms will inevitably converge to the competitive equilibrium regardless of the chosen learning rates, but firms accrue payoffs along the entire time the algorithms learn and compete. Hence, the central strategic question for the firms shifts from \textit{what} outcome is reached to the \textit{path} along which it is reached. 

Fix a strictly positive common discount rate $r > 0$ and an initial market condition $(x_0, y_0) \in S$, which may represent a state of historically high prices. For any profile of learning rates $\gamma = (\gamma_1, \gamma_2) \in (0,\infty)^2$, the gradient dynamics yield a unique deterministic learning trajectory. In the ex ante meta-game, each firm $i$ chooses its learning rate $\gamma_i$ from the   set $  (0, \infty)$ to maximize its discounted accumulated profit along this entire learning path:
\begin{align}\label{eq: V_def}
    V_i(\gamma_1, \gamma_2; x_0, y_0) = r \int_0^\infty e^{-rt} u_i(x(t), y(t)) dt
\end{align}

The normalization by $r$  ensures that the meta-game payoff $V_i$ is mapped back into the space of feasible stage-game payoffs. 

The solution concept for this strategic delegation problem is a  Nash equilibrium in the meta-game. 

\begin{definition}\label{def:meta-NE}
A profile $\gamma^* = (\gamma_1^*, \gamma_2^*) \in (0,\infty)^2$ is a \emph{Nash equilibrium of the meta-game} if
\begin{align*}
    V_1(\gamma_1^*, \gamma_2^*;\, x_0, y_0) &\geq V_1(\gamma_1, \gamma_2^*;\, x_0, y_0), \quad \forall \gamma_1 >0;  \\
    V_2(\gamma_1^*, \gamma_2^*;\, x_0, y_0) &\geq V_2(\gamma_1^*, \gamma_2;\, x_0, y_0), \quad \forall \gamma_2>0.
\end{align*}
\end{definition}

When the initial action profile $(x_0,y_0)$ is clear from the context, we omit it for readability.
% \paragraph{Scope of the analysis.}
% In this paper we restrict attention to two-player, two-action stage games. Despite this apparent simplicity, the meta-game exhibits a rich structure that varies qualitatively across game classes: monotone best responses under strategic complementarity (Section Complements), non-monotonic best responses under strategic substitutability (Section Substitutes), payoff discontinuities under multiple equilibria (Section~\ref{sec: BoS}), and arms-race dynamics in zero-sum games (Section Matching Pennies).

\subsection{Example}\label{subsec: PD example}

To illustrate the strategic considerations of the meta game, we consider the following numerical example:
\begin{equation}
\label{eq:pd-example}
\begin{array}{|l|c|c|}
    \hline
    & {p_M} & {p_C} \\
    \hline
    {p_M} & 8,8 & -3,9 \\
    \hline
    {p_C} & 9,-3 & -1,-1 \\
    \hline
\end{array}
\end{equation}

Recall that $x(t)\in [0,1]$ refers to Firm~$1$'s likelihood of choosing $p_M$ in period $t$, and $y(t)\in [0,1]$ that of Firm~$2$. Under the scaled projected gradient dynamics with learning rates $\gamma_1, \gamma_2 > 0$, the evolution of strategies is governed by (see 
\cite{lahkar2008projection}):
\begin{align*}
\dot{x}(t) &= \begin{cases}
-\frac{\gamma_1}{2}(2 - y(t)), & \text{if } x(t) \in (0, 1], \\
0, & \text{if } x(t) = 0,
\end{cases} \quad 
\dot{y}(t) = \begin{cases}
-\frac{\gamma_2}{2}(2 - x(t)), & \text{if } y(t) \in (0, 1], \\
0, & \text{if } y(t) = 0.
\end{cases}
\end{align*}

These dynamics reflect the dominance structure of the game. Since $2 - y(t) > 0$ and $2 - x(t) > 0$ for all feasible states,  
$ 
\dot{x}(t) \leq 0$ and $\dot{y}(t) \leq 0$
for all $(x(t), y(t)) \in [0,1]^2$, with strict inequality in the interior. Thus, both algorithms monotonically decrease the probability assigned to $p_M$, their dominated actions.

Given that each firm has a strictly dominant action $p_C$, intuition might suggest that firms should adopt the highest feasible learning rate to have their algorithms converge rapidly to this action. However, this intuition proves misleading. When Firm~$1$ increases the learning rate, $x(t)$ decreases quicker. Then, $-\frac{\gamma_2}{2}(2 - x(t))$ is smaller sooner, meaning that Firm~$2$ decreases $y(t)$ faster as a result of Firm~$1$'s acceleration. The intuition for this is due to the fact that the payoff gradient from switching to a low price $p_C$ is \textit{steeper} when facing a low-price rival firm than a high-pricing one.

 This gap is reflected in Firm~$2$'s gradient dynamics, causing them to switch faster to $p_C$ when Firm~$1$ switches faster to $p_C$.  Firm~$2$ accelerating the transition to $p_C$ reduces Firm~$1$'s payoff. 

These observations are reflected in our numerical analysis:

\begin{figure}[ht]
            \centering
        \includegraphics[width=0.5\linewidth]{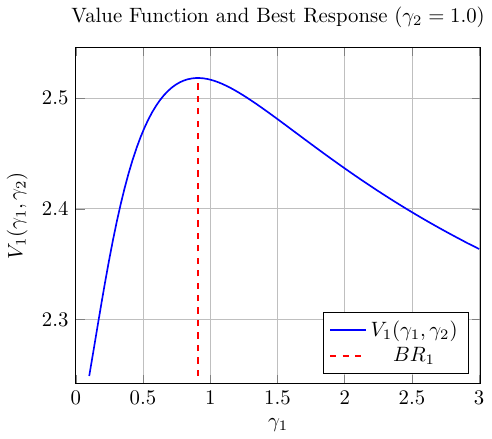}
        \caption{\footnotesize Firm~$1$'s value $V_1(\gamma_1,\gamma_2)$ as function of $\gamma_1$, given $\gamma_2=1$ and $(x_0,y_0)=(0.9,0.9)$.}
    \label{fig: V1}
\end{figure}

\begin{example}[Non-monotonicity of Optimal Learning Rates]\label{ex:nonmonotone}
Consider the initial condition $(x_0, y_0) = (0.9, 0.9)$ and discount rate $r = 1$. Fix Firm~$2$'s learning rate at $\gamma_2 = 1$. As shown in Figure~\ref{fig: V1}, numerical analysis reveals  that Firm~$1$'s discounted utility $V_1(\gamma_1, 1)$ is maximised in the interior at $\gamma_1^* \approx 0.9082$.
\end{example}

\section{Pricing, Fast and Slow}\label{sec: PD}

 Going back to the general stage game $G$, with the payoff matrix \eqref{eq:payoff-matrix}.  We introduce the following parametrisation to facilitate the analysis:
\begin{align*}
\kappa &\coloneqq P - S > 0, \\
\delta &\coloneqq (T - R) - (P - S).
\end{align*}
Here $\kappa$ is the absolute gain from pricing low against a low-price rival, and $\delta$ measures the degree of strategic complementarity discussed informally in the introduction. Specifically, $\delta$ compares the volume gained from undercutting a high-price rival $(T-R)$ with the loss avoided by matching a low-price rival $(P-S)$. 

The sign of $\delta$ partitions the strategic landscape. When $\delta \ge 0$ (strategic \emph{substitutes}), the volume gains from undercutting a cooperator are large; here, an acceleration by Firm~1 towards $p_C$ \emph{reduces} the instantaneous value of an undercut for Firm~2, slowing the opponent's convergence. This acts as an added strategic benefit of speed. 

When $\delta < 0$ (strategic \emph{complements}), failing to match a low price is highly punitive ($P-S > T-R$), a feature characteristic of markets with capacity constraints, tipping points, or strong brand loyalty. In this regime, the reverse holds: Firm~$1$'s acceleration strictly increases the incentive for Firm~$2$ to undercut, which accelerates the rival's descent and is strictly costly to Firm~$1$. These two cases are treated in turn.

Given initial conditions $(x_0, y_0) \in (0,1)^2$ and learning rates $\gamma_1, \gamma_2 > 0$, the projected gradient dynamics in our general case are (see 
\cite{lahkar2008projection}):
\begin{align*}
\dot{x}(t) &= \begin{cases}
-\frac{\gamma_1}{2}(\kappa + \delta y(t)), & \text{if } x(t) \in (0,1], \\
0, & \text{if } x(t) = 0,
\end{cases} \\
\dot{y}(t) &= \begin{cases}
-\frac{\gamma_2}{2}(\kappa + \delta x(t)), & \text{if } y(t) \in (0,1], \\
0, & \text{if } y(t) = 0.
\end{cases}
\end{align*}

Since $\kappa > 0$ and $\kappa + \delta = T - R > 0$, both $\dot{x}(t)$ and $\dot{y}(t)$ are strictly negative in the interior of the state space. Define the boundary hitting time:
\begin{align*}
H \coloneqq \sup\{t \geq 0 : x(t) \in (0,1) \text{ and } y(t) \in (0,1)\},
\end{align*}
which represents the first time at which the trajectory reaches the boundary of $[0,1]^2$. 
%The monotonicity of the dynamics ensures that $(x(t), y(t)) \to (0,0)$ as $t \to H$.

\subsection{Strategic Substitutes}

When $\delta \ge 0$, the strategic and direct effects of speed reinforce one another. The following result makes the mechanism precise: a faster Firm~$1$ reaches the competitive price sooner while \emph{slowing} the opponent.

\begin{proposition}[Trajectory Sensitivity: Substitutes]\label{prop:sensitivity}
Suppose $\delta \geq 0$ and $(x_0, y_0) \in (0,1]^2$. Then for all $t \in (0,H)$:
\begin{align*}
\frac{\partial x^\gamma(t)}{\partial \gamma_1} < 0, \quad \frac{\partial y^\gamma(t)}{\partial \gamma_1} \ge 0, \\
\frac{\partial x^\gamma(t)}{\partial \gamma_2} \ge 0, \quad \frac{\partial y^\gamma(t)}{\partial \gamma_2} < 0,
\end{align*}
where $\gamma = (\gamma_1, \gamma_2)$ denotes the profile of learning rates, and strict inequalities are recovered under $\delta>0$.
\end{proposition}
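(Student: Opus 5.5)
On $(0,H)$ both coordinates lie in the interior, so the projection is inactive. The system is then the linear ODE
\[
\dot x = -\tfrac{\gamma_1}{2}(\kappa+\delta y), \qquad \dot y = -\tfrac{\gamma_2}{2}(\kappa+\delta x).
\]
Its right-hand side is smooth and affine in $(\gamma_1,\gamma_2)$. By standard differentiable dependence on parameters, the solution is $C^1$ in $\gamma$ on this interval. For fixed $t<H(\gamma)$ it also stays interior under small perturbations of $\gamma$, since the open set of interior states is preserved by continuity. So I will work with the variational (sensitivity) equations.

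Write $a=\partial x/\partial\gamma_1$ and $b=\partial y/\partial\gamma_1$. Differentiating the system in $\gamma_1$ gives
\[
\dot a = -\tfrac12(\kappa+\delta y) - \tfrac{\gamma_1\delta}{2}\, b, \qquad \dot b = -\tfrac{\gamma_2\delta}{2}\, a, \qquad a(0)=b(0)=0.
\]
The forcing term $-\tfrac12(\kappa+\delta y(t))$ is strictly negative on $[0,H)$, because $\kappa>0$, $\delta\ge0$ and $y>0$.

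The claimed signs are $a<0$ and $b\ge0$. I will prove them with a first-exit argument.
\begin{itemize}
\item Near $t=0$ we have $\dot a(0)<0$ and $b(0)=0$, so $a<0$ on some interval $(0,\epsilon)$. On that interval $\dot b=-\tfrac{\gamma_2\delta}{2}a\ge0$, hence $b\ge0$.
\item Let $\tau=\sup\{s<H: a<0 \text{ on } (0,s)\}$ and suppose $\tau<H$. On $(0,\tau)$, $b$ is nondecreasing from $0$, so $b\ge0$. Then $\dot a\le -\tfrac12(\kappa+\delta y)<0$ there, so $a(\tau)<a(\epsilon/2)<0$, and by continuity $a$ stays negative past $\tau$. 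This contradicts the definition of $\tau$, so $\tau=H$.
\item If $\delta>0$, then $\dot b>0$ on $(0,H)$, which gives $b>0$ strictly.
\end{itemize}

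The $\gamma_2$ derivatives follow by the symmetric argument with the roles of $x$ and $y$ exchanged: the forcing $-\tfrac12(\kappa+\delta x)$ enters the equation for $\partial y/\partial\gamma_2$.

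The only delicate point is justifying differentiability and the ODE for the sensitivities up to the stopping time $H$, which itself depends on $\gamma$. I handle this by fixing $t<H(\gamma)$ and using lower semicontinuity of $H$ in $\gamma$, which follows from continuous dependence on parameters together with interior positivity. This ensures the unprojected linear ODE governs all nearby trajectories on $[0,t]$.

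Alternatively, the linear system admits an explicit solution in terms of $\cosh$ and $\sinh$ of $\tfrac{\delta}{2}\sqrt{\gamma_1\gamma_2}\,t$. Differentiating that closed form would also work, but the comparison argument above is cleaner.
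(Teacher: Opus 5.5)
Your proposal is correct and follows essentially the same route as the paper: you write the variational equations for $\partial_{\gamma_1}x^\gamma$ and $\partial_{\gamma_1}y^\gamma$ with zero initial data, determine the signs near $t=0$, run a first-exit contradiction argument, and obtain the $\gamma_2$ case by symmetry. Your version is slightly more careful than the paper's, since it treats $\delta\ge 0$ in one pass (the paper proves $\delta>0$ and defers $\delta=0$) and explicitly justifies that the unprojected variational system governs all nearby trajectories up to the $\gamma$-dependent time $H$.
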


When Firm~$1$ increases $\gamma_1$ and converges faster to $p_C$ ($\partial x/\partial\gamma_1 < 0$), the opponent's convergence slows ($\partial y/\partial\gamma_1 \ge 0$). Both effects raise Firm~$1$'s payoff, and the benefit persists along the entire path. Hence, Firm~$1$ can only gain from increasing their learning rate, as further outlined below.

\begin{proposition}
    [Monotone Best Responses]\label{prop:monotone}
Suppose $\delta \geq 0$ and $x_0, y_0 \in (0,1]$. Then for any $\bar{\gamma}_2 \in (0,\infty)$, the function $\gamma_1 \mapsto V_1(\gamma_1, \bar{\gamma}_2)$ is strictly increasing on $(0, \infty)$.
\end{proposition}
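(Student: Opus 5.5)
Firm~$1$'s stage payoff is
\[
u_1(x,y)=Rxy+Sx(1-y)+T(1-x)y+P(1-x)(1-y).
\]
Its partial derivatives are
\[
\partial_x u_1=-(\kappa+\delta y)<0,
\qquad
\partial_y u_1=(T-P)+(R-S-T+P)x=(T-P)-(\kappa+\delta)x+\kappa x...
\]
More simply, $\partial_y u_1 = x(R-S)+(1-x)(T-P)>0$, since $R>S$ and $T>P$. So $u_1$ is strictly decreasing in $x$ and strictly increasing in $y$ on $S$. The plan is to differentiate $V_1$ with respect to $\gamma_1$ under the integral:
\[
\partial_{\gamma_1}V_1=r\int_0^\infty e^{-rt}\Big(\partial_x u_1\,\partial_{\gamma_1}x(t)+\partial_y u_1\,\partial_{\gamma_1}y(t)\Big)\,dt .
\]
By Proposition~\ref{prop:sensitivity}, on $(0,H)$ the first term is strictly positive and the second is nonnegative. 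The integrand is therefore strictly positive on a set of positive measure, and the derivative is positive.

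The obstacle is the projection: the trajectory is only piecewise smooth, with kinks when $x$ or $y$ hits $0$. To avoid differentiability issues, I would instead prove monotonicity by a direct comparison argument. Fix $\gamma_1<\gamma_1'$ and let $(x,y)$ and $(x',y')$ be the two trajectories.

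First, I claim $x'(t)\le x(t)$ and $y'(t)\ge y(t)$ for all $t$, with $x'(t)<x(t)$ on an initial interval. The strict part comes from the initial slopes, $\dot x'(0)=\gamma_1'\dot x(0)/\gamma_1<\dot x(0)$ when $x_0>0$. For the inequalities, the system restricted to the interior is monotone in the "competitive" (mixed) order. The drift of $x$, namely $-\tfrac{\gamma_1}{2}(\kappa+\delta y)$, is nonincreasing in $y$ for $\delta\ge0$. The drift of $y$ is nondecreasing in $x$, which is the Kamke condition for the order $(x\downarrow,y\uparrow)$. Since $\gamma_1'$ increases the speed of $x$'s descent, a standard comparison (Kamke/Müller) argument gives the ordering up to boundary hitting. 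After $x$ reaches $0$ it stays at $0$, and the projection preserves the order: $x'\le x$ with both $\ge 0$. For $y$, I would compare the explicit solutions $y(t)=y_0-\tfrac{\gamma_2}{2}\int_0^t(\kappa+\delta x)$, clipped at $0$, which is monotone in the path of $x$. This essentially re-derives Proposition~\ref{prop:sensitivity} globally in $t$, which is the step needing care at $t\ge H$.

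Second, the ordering gives $u_1(x',y')\ge u_1(x,y)$ pointwise, by the monotonicity of $u_1$. On the initial interval where $x'<x$ the inequality is strict, because $\partial_x u_1<0$ strictly. Integrating against $re^{-rt}$ then yields $V_1(\gamma_1',\bar\gamma_2)>V_1(\gamma_1,\bar\gamma_2)$.

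The edge case $x_0=1$ is handled because the projection at $x=1$ allows inward, i.e.\ decreasing, motion, so the same argument applies. The main obstacle is the comparison argument across the boundary-hitting times; I would handle it via the explicit clipped integral representation.
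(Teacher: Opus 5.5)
Your argument is correct, and it takes a genuinely different route from the paper.

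\textbf{The paper's route.} The paper works infinitesimally. It differentiates $J_1=V_1/r$ in $\gamma_1$ by the Leibniz rule. It signs the integrand on $[0,H]$ term by term using Proposition~\ref{prop:sensitivity}, in particular $[(T-P)-\delta x]\,\partial_{\gamma_1}y\ge (R-S)\,\partial_{\gamma_1}y\ge 0$. It then signs a separate remainder over the boundary phase, where the hitting-time derivative drops out because the interior velocity at the hitting time equals the boundary velocity.

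\textbf{Your route.} You compare two trajectories with $\gamma_1<\gamma_1'$ directly in the mixed order $x'\le x$, $y'\ge y$. Combined with $\partial_x u_1<0<\partial_y u_1$, this gives pathwise dominance of the flow payoff. What this buys:
\begin{itemize}
\item It needs no differentiability of $V_1$ or of the hitting times, which the Leibniz step takes for granted, e.g.\ near profiles with $T_x=T_y$.
\item It does not use Proposition~\ref{prop:sensitivity}, and it treats $t<H$ and $t\ge H$ uniformly.
\item It gives monotonicity for any positive time weighting, not only exponential discounting.
\end{itemize}
The paper's route instead yields an explicit expression for $\partial_{\gamma_1}V_1$. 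It is also the approach that survives in the complements case: there the system becomes cooperative, your ordering turns into $x'\le x$, $y'\le y$, and the two effects on $u_1$ conflict, so a comparison alone cannot sign the payoff change.

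\textbf{Two points to tighten.}
\begin{itemize}
\item The drift of $y$, $-\tfrac{\gamma_2}{2}(\kappa+\delta x)$, is non\emph{increasing} in $x$ for $\delta\ge0$, not nondecreasing as you wrote. The Kamke condition for the order $(x\downarrow,y\uparrow)$ requires both cross-effects to be nonpositive, which is exactly what holds, so your conclusion is unaffected.
\item Your clipped representation settles the boundary issue without any case analysis. Since $z\mapsto\max\{0,z\}$ is monotone and $1$-Lipschitz,
\[
(x'-x)^+(t)\le\tfrac{\gamma_1\delta}{2}\int_0^t(y-y')^+\,ds,
\qquad
(y-y')^+(t)\le\tfrac{\gamma_2\delta}{2}\int_0^t(x'-x)^+\,ds,
\]
so Gronwall gives the ordering for all $t\ge0$. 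Given $y'\ge y$, you then get $x'(t)<x(t)$ whenever $x(t)>0$: either $x'(t)=0$, or $x'(t)-x(t)\le-\tfrac{(\gamma_1'-\gamma_1)\kappa}{2}\,t$. So strictness holds on the whole interval before $x$ is absorbed, not just on an initial interval.
\end{itemize}
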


The monotonicity pins down the meta-game outcome. If there is a maximum learning rate, then it is optimal for both players to play this. If there is none, then the best response correspondence is empty. 

Hence, markets where undercutting a cooperator is more tempting than undercutting a low-price firm validate a basic intuition:  dominant strategies in the stage game lead to a race to the bottom in the meta game of adjustment speeds.

\begin{theorem}
    [Equilibrium Characterisation]\label{thm:equilibrium}
Under strategic substitutability ($\delta \geq 0$), for all initial conditions $(x_0, y_0) \in (0, 1]^2$, the meta-game does not admit a Nash equilibrium in  $(0,\infty)^2$.
\end{theorem}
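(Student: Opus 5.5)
Theorem~\ref{thm:equilibrium} will follow almost immediately from Proposition~\ref{prop:monotone} together with the symmetry of the stage game. The plan is a proof by contradiction. Suppose $\gamma^* = (\gamma_1^*, \gamma_2^*) \in (0,\infty)^2$ were a Nash equilibrium of the meta-game in the sense of Definition~\ref{def:meta-NE}. Under $\delta \ge 0$ and $(x_0,y_0)\in(0,1]^2$, Proposition~\ref{prop:monotone} gives that $\gamma_1 \mapsto V_1(\gamma_1, \gamma_2^*)$ is strictly increasing on $(0,\infty)$. Take the deviation $\gamma_1 = 2\gamma_1^* > \gamma_1^*$, which is admissible because the strategy set $(0,\infty)$ is open above. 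It yields $V_1(2\gamma_1^*, \gamma_2^*) > V_1(\gamma_1^*, \gamma_2^*)$, which violates the first equilibrium inequality. This contradiction already suffices; the best response correspondence of Firm~$1$ is empty at every $\gamma_2^*$.

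For completeness, and to show that the conclusion does not hinge on which firm deviates, I will also note the symmetric statement for Firm~$2$. The stage game is symmetric, so the dynamics are invariant under the swap $(x,y,\gamma_1,\gamma_2)\leftrightarrow(y,x,\gamma_2,\gamma_1)$ with $u_2(x,y)=u_1(y,x)$. Applying Proposition~\ref{prop:monotone} to the relabelled system with initial condition $(y_0,x_0)\in(0,1]^2$ therefore shows that $\gamma_2\mapsto V_2(\gamma_1^*,\gamma_2)$ is also strictly increasing. To justify this carefully, I will check that $V_2(\gamma_1,\gamma_2;x_0,y_0)=V_1(\gamma_2,\gamma_1;y_0,x_0)$. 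This identity follows from uniqueness of trajectories of the projected dynamics: the swapped trajectory solves the swapped system.

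There is no real obstacle here, since the substantive work lies in Proposition~\ref{prop:monotone}. The only points to state explicitly are two. First, the hypotheses of that proposition ($\delta\ge0$, initial condition in $(0,1]^2$) match those of the theorem. Second, the nonexistence stems from the openness of the action set $(0,\infty)$: with no maximal learning rate, strict monotonicity precludes any maximiser.
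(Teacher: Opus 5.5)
Your proposal is correct and follows the paper's proof: at any candidate profile, the strict monotonicity from Proposition~\ref{prop:monotone} gives Firm~1 a profitable upward deviation, and this is available because $(0,\infty)$ has no maximal element. Your symmetric check for Firm~2 via $V_2(\gamma_1,\gamma_2;x_0,y_0)=V_1(\gamma_2,\gamma_1;y_0,x_0)$ is valid but not needed (as is the paper's own ``without loss of generality $\gamma_1\le\gamma_2$''), since one firm's profitable deviation already rules out an equilibrium.
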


\subsection{Strategic Complements}

We now examine the case where $\delta < 0$, corresponding to strategic complementarity in the stage game. Firstly, the impact of one's own speed on the rival's learning trajectory is opposite to that of the previous section.

\begin{proposition}[Trajectory Sensitivity: Complements]\label{prop:sensitivity_delta_negative}
Suppose $\delta < 0$ and $(x_0, y_0) \in (0,1]^2$. Then for all $t \in (0,H)$:
\begin{align*}
\frac{\partial x^\gamma(t)}{\partial \gamma_1} < 0, \quad \frac{\partial y^\gamma(t)}{\partial \gamma_1} \le 0, \\
\frac{\partial x^\gamma(t)}{\partial \gamma_2} \le 0, \quad \frac{\partial y^\gamma(t)}{\partial \gamma_2} < 0,
\end{align*}
where $\gamma = (\gamma_1, \gamma_2)$ denotes the profile of learning rates. 
%Tand strict inequalities are recovered under $\delta<0$.
\end{proposition}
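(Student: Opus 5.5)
On $(0,H)$ both coordinates lie in the interior, so the projection is inactive and the dynamics are the smooth linear system
\begin{align*}
\dot x = -\tfrac{\gamma_1}{2}(\kappa+\delta y), \qquad \dot y = -\tfrac{\gamma_2}{2}(\kappa+\delta x).
\end{align*}
I will differentiate this in $\gamma_1$; the $\gamma_2$ case follows by symmetry, swapping the roles of $(x,\gamma_1)$ and $(y,\gamma_2)$. This mirrors the argument for Proposition~\ref{prop:sensitivity}, with the sign of $\delta$ reversed.

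Set $a(t)=\partial x/\partial\gamma_1$ and $b(t)=\partial y/\partial\gamma_1$. Smooth dependence on parameters for the linear ODE gives the variational system
\begin{align*}
\dot a = -\tfrac12(\kappa+\delta y) - \tfrac{\gamma_1\delta}{2}\, b,\qquad \dot b = -\tfrac{\gamma_2\delta}{2}\, a,\qquad a(0)=b(0)=0.
\end{align*}
The forcing term $-\tfrac12(\kappa+\delta y(t))$ is strictly negative on $[0,H)$. This holds because $y\in[0,1]$ and both $\kappa>0$ and $\kappa+\delta=T-R>0$, so $\kappa+\delta y$ is a convex combination of positive numbers.

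Write $c_1=-\gamma_1\delta/2>0$ and $c_2=-\gamma_2\delta/2>0$, which are positive since $\delta<0$. The system becomes
\begin{align*}
\dot a = f(t) + c_1 b,\qquad \dot b = c_2 a,\qquad f<0.
\end{align*}
This is a cooperative (Metzler) system in $(-a,-b)$. Indeed, with $\alpha=-a$ and $\beta=-b$ we get $\dot\alpha = -f + c_1\beta$ and $\dot\beta = c_2\alpha$, with positive forcing and nonnegative off-diagonal coupling. The solution is given by variation of constants with the matrix exponential of $\begin{pmatrix}0&c_1\\c_2&0\end{pmatrix}$, whose entries are $\cosh$ and $\sinh$ terms and so are nonnegative. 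It follows that $\alpha(t)=\int_0^t \Phi_{11}(t-s)(-f(s))\,ds>0$ for $t>0$, since $\Phi_{11}=\cosh\ge1$. Likewise $\beta(t)=\int_0^t\Phi_{21}(t-s)(-f(s))\,ds\ge0$, and in fact it is strictly positive since $\sinh>0$. Hence $\partial x/\partial\gamma_1<0$ and $\partial y/\partial\gamma_1\le 0$.

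An elementary alternative avoids matrix exponentials. Note $a<0$ for small $t>0$ because $\dot a(0)=f(0)<0$. Suppose $t^*$ is the first time $a$ or $b$ stops having the claimed sign. On $(0,t^*)$ we have $\dot b = c_2 a <0$, so $b<0$, and then $\dot a<0$, so $a$ remains strictly negative, which is a contradiction.

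The main obstacle is technical rather than conceptual: justifying differentiability of $t\mapsto(x,y)$ in $\gamma$ on $(0,H)$, given that $H$ itself depends on $\gamma$. I will handle this by fixing $t<H(\gamma)$ and using continuity of the hitting time. Trajectories depend continuously on $\gamma$ and move strictly monotonically toward the boundary, so $t<H(\gamma')$ for all $\gamma'$ near $\gamma$. On that neighbourhood the flow is governed by the smooth unprojected vector field, so standard parameter-differentiability of ODE solutions applies. The case $x_0=1$ or $y_0=1$ needs only a remark: $\dot x,\dot y<0$ push the state immediately into the interior, so the same vector field applies from $t=0$.
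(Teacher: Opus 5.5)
Your proposal is correct and follows essentially the paper's route. The paper proves this by pointing to the proof of Proposition~\ref{prop:sensitivity}: the same variational system for $(\partial_{\gamma_1}x,\partial_{\gamma_1}y)$, signed by a first-crossing contradiction (exactly your ``elementary alternative''), with the $\gamma_2$ case by symmetry. Your explicit $\cosh/\sinh$ variation-of-constants solution is a harmless variant that also gives strict negativity of the cross-sensitivity for $t>0$, and your argument that $t<H(\gamma')$ for nearby $\gamma'$ fills in a step the paper leaves to the parameter-dependence theorem.
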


When the stage game exhibits strategic complementarity, an increase in Firm~$1$'s rate accelerates Firm~$2$'s transition away from high prices. If Firm~$1$ learns too rapidly, they reach $p_C$; once there, Firm~$1$ ceases to adapt while Firm~$2$'s strategy continues to evolve. By arriving at the boundary too soon, Firm~$1$ minimises the time spent exploiting Firm~$2$'s pre-convergence cooperation. Moderating the learning rate thus prolongs the opponent's exploitability, and the best response becomes non-monotone, as already seen in Example~\ref{ex:nonmonotone}.

The central question is whether this non-monotonicity sustains an interior equilibrium. Our main result answers this affirmatively.

\begin{theorem}[Interior equilibrium under complementarity]\label{thm:interior-NE-informal}
Suppose $\delta < 0$. Define the normalized temptation spread $q=\frac{T-R}{P-S}$. There is a strictly increasing, positive function $F(q)$, and a singularity $\bar t>0$, such that when
\begin{align*}
    \frac{T-P}{P-S}\in (F(q),\infty)\setminus \{\bar t\},
\end{align*}
a symmetric equilibrium in finite learning rates exists. Specifically, there is an open set  $\mathcal N \subset (0,1)$ satisfying that for every initial condition $x_0 = y_0 \in \mathcal N$ there exists $\eta^*(x_0)$ such that for every discount rate $r > 0$, the meta-game admits a finite, strictly positive symmetric profile $(\gamma^*(r), \gamma^*(r))$ that is a strict global symmetric Nash equilibrium in learning rates, where $\gamma^*(r) = \eta^*(x_0)r$.
\end{theorem}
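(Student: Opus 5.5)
The first step is a scaling reduction. Substituting $s=rt$ into the dynamics and into \eqref{eq: V_def} shows that $V_i(\gamma_1,\gamma_2;x_0,y_0)$ with discount rate $r$ equals $V_i(\gamma_1/r,\gamma_2/r;x_0,y_0)$ with discount rate $1$. Hence any equilibrium $(\eta^*,\eta^*)$ found for $r=1$ yields the equilibrium $\gamma^*(r)=\eta^*r$ for every $r>0$, and it suffices to work with $r=1$.

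Next I normalise payoffs. Writing $u_1=P+\kappa\,\tilde u_1$ and using $R-P=\kappa(t-q)$, $T-P=\kappa t$, $S-P=-\kappa$ with $t:=\frac{T-P}{P-S}$, we get
\begin{align*}
\tilde u_1(x,y) = xy(t-q) - x(1-y) + (1-x)y\,t .
\end{align*}
Since $\kappa>0$, this is a positive affine transformation, so equilibria depend only on $(q,t)$, with $q\in(0,1)$ because $\delta<0$. The map $(x,y)\mapsto\tilde u_1$ is affine in $t$. Time can also be rescaled by $\kappa$, which only rescales $\eta$.

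The trajectories are explicit. In the interior the system $\dot x=-\tfrac{\gamma_1}{2}(\kappa+\delta y)$, $\dot y=-\tfrac{\gamma_2}{2}(\kappa+\delta x)$ is linear with eigenvalues $\pm\lambda$, where $\lambda=\tfrac{|\delta|}{2}\sqrt{\gamma_1\gamma_2}$. Its solution is therefore a combination of $\cosh\lambda t$ and $\sinh\lambda t$, centred at the unstable fixed point $x=y=\kappa/|\delta|>1$. After the first firm hits $0$, the other firm moves linearly at rate $\tfrac{\gamma}{2}\kappa$ until it also reaches $0$.

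Fix $\gamma_2=\eta$ and $x_0=y_0$. For $\gamma_1>\eta$, Proposition~\ref{prop:sensitivity_delta_negative} and the explicit solution show that firm $1$ hits first, at a time $H_1(\gamma_1)$; for $\gamma_1<\eta$, firm $2$ hits first. This gives $V_1(\cdot,\eta)$ in closed form on each side as integrals of $e^{-s}$ times hyperbolic and polynomial terms. One checks that $V_1(\cdot,\eta)$ is $C^1$ at $\gamma_1=\eta$, because both hitting times coincide there and the kink contributions vanish to first order. The symmetric first-order condition then reads
\begin{align*}
\partial_{\gamma_1}V_1(\eta,\eta)=A(\eta,x_0,q)+t\,B(\eta,x_0,q)=0,
\end{align*}
which is linear in $t$ because $\tilde u_1$ is affine in $t$. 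Solving gives $t=-A/B$. The singularity $\bar t$ arises where $B$ vanishes along the relevant branch, that is, where the $t$-weighted term (the gain from undercutting) changes sign in its marginal effect; this is why $\bar t$ has to be excluded. I would define $F(q)$ as the infimum of the attainable values of $-A/B$ over $(\eta,x_0)$, and verify that it is positive and increasing in $q$ by differentiating the closed form. For $t>F(q)$, $t\neq\bar t$, the intermediate value theorem in $\eta$ gives a root $\eta^*(x_0)$ at some $x_0$. The implicit function theorem, with $\partial_\eta(A+tB)\neq0$ as a transversality condition away from $\bar t$, then extends this root to an open set $\mathcal N$ of initial conditions.

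The main obstacle is to show that this critical point is a strict global maximiser of $\gamma_1\mapsto V_1(\gamma_1,\eta^*)$ over all of $(0,\infty)$. I would first establish the boundary behaviour from the closed form. As $\gamma_1\to0$, $V_1$ falls towards the payoff from staying at $p_M$ against a descending rival, and the derivative there is positive because the direct gain dominates. As $\gamma_1\to\infty$, firm $1$ jumps to $p_C$ immediately, and this limit must be shown to be below $V_1(\eta^*,\eta^*)$, which is where $t>F(q)$ is used again. On the two branches $\gamma_1<\eta^*$ and $\gamma_1>\eta^*$ separately, I would then show that $\partial_{\gamma_1}V_1$ changes sign only once, for instance by writing it as $e^{-H}$ times a function that is monotone in $\gamma_1$ for $x_0\in\mathcal N$, shrinking $\mathcal N$ if necessary. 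Strict single-crossing on each branch gives the strict global best response, and symmetry gives the same for firm $2$.
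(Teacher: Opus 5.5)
Your scaling reduction is correct, and so is your observation that, after normalising by $\kappa$, the dynamics depend only on $q$ while the symmetric first-order condition is affine in your ratio $t=\frac{T-P}{P-S}$. The genuine gap is the step you yourself call the main obstacle: global optimality. At a fixed finite $\eta$ (with $r=1$), nothing in the closed form supports the claim that $\partial_{\gamma_1}V_1$ crosses zero once on each branch. The factor $e^{-s}$ is what blocks such an argument. Your limiting comparisons as $\gamma_1\to 0$ and $\gamma_1\to\infty$ are asserted rather than derived, and you do not explain how $t>F(q)$ would enter them.

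The missing idea is to look for the equilibrium in the regime $\tilde r=r/\gamma\to 0$, i.e.\ as a perturbation of the undiscounted game. There the first integral $\psi(x)-\rho\psi(y)$, with $\psi(z)=\kappa z+\tfrac{\delta}{2}z^2$, turns the excess payoff $J_1^0(\rho,1;x_0)$ into $-x_0^2/\rho$ plus a time-free integral in $y$. From this one shows that $\rho^2\partial_\rho J_1^0$ is strictly decreasing, so any critical point is the unique strict global maximiser. Global optimality for small $\tilde r>0$ then follows from three facts:
\begin{itemize}
\item uniform convergence $W(\cdot,\tilde r)\to J_1^0$ on compacts and on $[1,\infty)$;
\item strict concavity of $W$ near $\rho=1$;
\item an explicit penalty on the lower tail $\rho\to 0$.
\end{itemize}
No analogue of this is available at a generic finite $\eta$.

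The same reduction is what produces $F$ and $\bar t$, and your account of both does not hold up.
\begin{itemize}
\item The undiscounted first-order condition at $\rho=1$ is an explicit function $E(x_0)$. It is positive for small $x_0$ and has at most one zero $\bar x\in(0,1]$. A zero in $(0,1)$ exists iff $E(1)<0$, which is precisely $t>F(q)=\frac{2(1-q)^2}{4q-q^2-3-2\ln q}$.
\item Defining $F$ as an infimum of $-A/B$ over $(\eta,x_0)$ gives no reason that every $t>F(q)$ is attained, and no way to prove monotonicity in $q$.
\item The claim that $\bar t$ is where $B$ vanishes is wrong. Where $B=0$ the equation $A+tB=0$ does not involve $t$ at all and $-A/B$ blows up. So this singles out no particular value of $t$, let alone one independent of $q$.
\end{itemize}
In the paper, $\bar t\approx 2.13$ is the value of $t$ at which the transversality constant $\partial^2_{\rho\tilde r}W(1,0;\bar x)$ vanishes. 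It is computed in closed form and reduced, via the root condition $E(\bar x)=0$, to the unique zero of an explicit function of $\bar q=1+\delta\bar x/\kappa$. When this constant is nonzero, the implicit function theorem in $(\tilde r,x_0)$ at $(0,\bar x)$, combined with $E'(\bar x)<0$, yields a branch $\tilde r^*(x_0)>0$ on one side of $\bar x$. The set $\mathcal N$ is that one-sided neighbourhood, and $\eta^*=1/\tilde r^*$. By contrast, you simply assert that your condition $\partial_\eta(A+tB)\neq 0$ at a finite-$\eta$ root is equivalent to $t\neq\bar t$, and that has no basis.
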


 We prove the result in two stages, developed in full in Appendices~\ref{sec:undiscounted} and~\ref{sec:discounted}. 

 First, we exploit the fact that the discounted payoff has a homogeneity property ensuring that payoffs on learning rate profiles $(\gamma_1,\gamma_2)$ only through their ratio $\gamma_1/\gamma_2$, and on the discount factor only through the effective rate $r/\gamma_2$. Second, we solve the simpler problem of the undiscounted limit game ($r=0$), and transport the result back to our discounted target payoffs via a uniform convergence argument.

The equilibrium rests on a trade-off between the local, myopic incentive to optimize against the rival's current price and the global, strategic cost of accelerating the rival's convergence to the competitive limit price. We show that this is summarized by the comparison between what we call the `normalized temptation spread' $\frac{T-R}{P-S}$, and the `acceleration loss' $\frac{T-P}{P-S}$. The former represents the relative gains from undercutting a high-price rival. The latter measures the relative loss of undercutting a high price rival who switches to a low price. One can think of this quantity as measuring the relative loss when faced with an accelerating rival. When this opportunity cost of destroying the transitional phase of high prices ($\frac{T-P}{P-S}$) outweighs the short-term benefit of rapid optimization along the gradient ($\frac{T-R}{P-S}$), an equilibrium exists; firms optimally design sluggish algorithms.

The proportionality $\gamma^{*}(r) \propto r$ is a direct consequence of the homogeneity property of the discounted payoff : the equilibrium learning rate precisely tracks the discount rate. Patient decision makers optimally learn more slowly.

Figure~\ref{fig: br intersect} illustrates the theorem for our example developed in Subsection~\ref{subsec: PD example} with $r = 1$: the best response correspondences intersect on the diagonal at an interior symmetric profile $\gamma^* \approx 0.479$.

\begin{figure}
    \centering
    \includegraphics[width=0.6\linewidth]{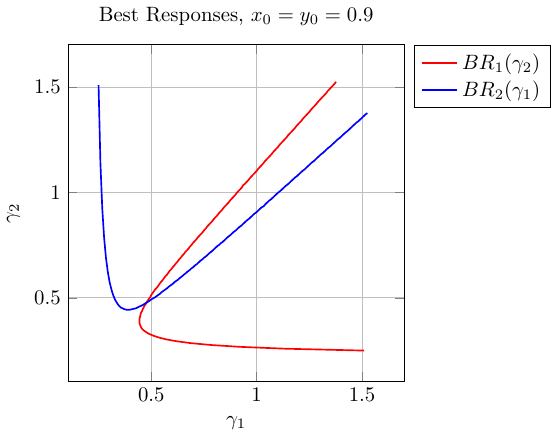}
    \caption{\footnotesize Best response correspondences for both players with symmetric initial condition $x_0=y_0=0.9$. The intersection at $\gamma_1=\gamma_2\approx 0.479$ provides numerical evidence for a symmetric Nash equilibrium in learning rates.}
    \label{fig: br intersect}
\end{figure}

Finally, we study how the symmetric equilibrium in learning rates responds to changes in the complementarity parameter $\delta<0$. As in Theorem~\ref{thm:interior-NE-informal}, the acceleration loss plays a crucial role.

\begin{proposition}
\label{prop:delta_sensitivity}

Suppose the conditions of Theorem \ref{thm:interior-NE-informal} hold. 
% Consider a marginal increase in $\delta$ (a weakening of strategic complementarity), holding $\kappa$ and $T-P$ constant. 
There is 
  an open set $\mathcal N\subset(0,1)$ and 
a threshold $\bar t>0$ such that
for every initial
condition $x_0=y_0\in \mathcal N$ and every discount rate $r>0$,
the symmetric
equilibrium learning rate $\gamma^*(r,x_0;\delta)$   is locally
continuously differentiable in $\delta$ (holding $\kappa$, $T-P$,
$x_0$, and $r$ fixed), and    its response to a marginal increase in $\delta$
(a weakening of strategic complementarity) satisfies
\begin{equation}
 \frac{\partial \gamma^*(r,x_0;\delta)}{\partial \delta}    \begin{cases}
        >0, & \frac{T-P}{P-S}> \bar t,
        \\
        <0, & \frac{T-P}{P-S}<\bar t.
    \end{cases}
\end{equation}
\end{proposition}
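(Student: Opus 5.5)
The plan is to reduce the comparative static to the undiscounted limit game, in the same way Theorem~\ref{thm:interior-NE-informal} is proved, and then obtain the sign by an implicit-function argument. By the homogeneity property, $V_1(\gamma_1,\gamma_2;r)$ depends only on the ratio $\rho=\gamma_1/\gamma_2$ and on the effective rate $r/\gamma_2$. Writing $\gamma_1=\rho\eta r$ and $\gamma_2=\eta r$, the symmetric equilibrium condition becomes
\[
\Phi(\eta;\delta)\coloneqq \partial_\rho W_1(\rho,\eta;\delta)\big|_{\rho=1}=0,
\]
where $W_1$ is the rescaled payoff. Its solution $\eta^*(x_0;\delta)$ does not depend on $r$, and $\gamma^*=\eta^* r$. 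The proposition therefore reduces to the sign of $\partial\eta^*/\partial\delta$, and
\[
\frac{\partial\eta^*}{\partial\delta}=-\frac{\partial_\delta\Phi}{\partial_\eta\Phi},
\]
provided $\partial_\eta\Phi\neq 0$ at the root.

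First, I will parametrise the payoffs by $\kappa$, $T-P$ and $\delta$, as the statement prescribes. Then $S=P-\kappa$, $T-R=\kappa+\delta$, and $R=T-\kappa-\delta$, so raising $\delta$ lowers $R$ one-for-one. Before any boundary is hit, the dynamics are linear:
\[
\dot x=-\tfrac{\gamma_1}{2}(\kappa+\delta y),\qquad \dot y=-\tfrac{\gamma_2}{2}(\kappa+\delta x).
\]
This system has explicit hyperbolic-function solutions. The hitting times of $x=0$ and of $y=0$ are smooth in $(\rho,\delta)$ when the initial state is symmetric, and after one coordinate hits zero the remaining one decays linearly. So $W_1$ can be written as explicit integrals over three phases: both interior, firm~1 absorbed, and both absorbed. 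It is smooth in $(\rho,\eta,\delta)$ near $\rho=1$ away from the degenerate configuration of simultaneous hitting. At $\rho=1$ the hitting is simultaneous, so I will take one-sided derivatives in $\rho$. I will use the construction from Appendix~\ref{sec:undiscounted} and Appendix~\ref{sec:discounted}, which shows that the equilibrium is strict, to read off $\partial_\eta\Phi<0$, the transversality that makes the root a strict maximum. The local $C^1$ dependence on $\delta$ then follows from the implicit function theorem.

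The sign of $\partial_\delta\Phi$ carries the whole argument. Following the proof of Theorem~\ref{thm:interior-NE-informal}, I will first compute it in the undiscounted limit $r/\gamma_2\to 0$. In that limit $\Phi$ is a rational-type function of $x_0$, $q$ and $t:=\frac{T-P}{P-S}$, and its $\delta$-derivative should factor as a positive quantity times an expression that is affine in $t$. The affine expression splits into two parts. The first is the direct change in the firm's own gain from faster learning, through $\kappa+\delta$. The second is the change in the rival feedback term $\delta\,\partial y/\partial\gamma_1$ from Proposition~\ref{prop:sensitivity_delta_negative}, which is weighted by the loss $T-P$ from facing an accelerated rival. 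Setting the affine expression to zero defines $\bar t$, which I expect to coincide with the singularity in Theorem~\ref{thm:interior-NE-informal}, where $\partial_\eta\Phi$ and $\partial_\delta\Phi$ have a common degeneracy. Above $\bar t$ the sign gives $\partial\eta^*/\partial\delta>0$, and below $\bar t$ it gives $\partial\eta^*/\partial\delta<0$.

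Finally, I will extend the result from the limit to the discounted game by the same uniform $C^1$ convergence used in Appendix~\ref{sec:discounted}. Strict signs persist under small perturbations, so shrinking $\mathcal N$ if necessary yields the statement for all $r$; the independence from $r$ comes from homogeneity.

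The main obstacle is obtaining a clean, sign-definite factorisation of $\partial_\delta\Phi$ in which the threshold is independent of $x_0$. I will first try to express everything in terms of the hitting time $H$ and the state of the surviving coordinate at $H$. If $x_0$-dependence remains, I will restrict $\mathcal N$ to a neighbourhood of $x_0\to 1$ or $x_0\to 0$ and take $\bar t$ as the limiting threshold, accepting an $\mathcal N$-dependent margin around it.
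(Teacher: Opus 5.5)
Your implicit-function framework is the right one, but you have put the sign change in the wrong factor of $-\partial_\delta\Phi/\partial_\eta\Phi$, and the plan as written would prove the wrong conclusion. Strictness of the equilibrium is the own-concavity $\partial_{\rho\rho}W<0$ at $\rho=1$ (Theorem~\ref{thm:undiscounted-global-max}, Lemma~\ref{lem:final}(iii)); it says nothing about the cross-partial $\partial_\eta\Phi$. Since the effective rate is $\tilde r=1/\eta$, you have $\partial_\eta\Phi=-\eta^{-2}\partial_{\tilde r}\Phi$. Moreover $\partial_{\tilde r}\Phi(0,\bar x)=\mathcal T$ is exactly the transversality constant of Lemma~\ref{lem:universal_transversality}, which is positive for $\frac{T-P}{P-S}>\bar t$ and negative below. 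This is also why the equilibrium set $\mathcal N$ lies to the right of $\bar x$ in one regime and to the left in the other.

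Conversely, $\partial_\delta\Phi$ is sign-definite. At $\tilde r=0$ one has $\Phi=E(x_0;\delta)=\frac{T-P}{\kappa}x_0^2\bigl[\tfrac12+g(z)\bigr]+x_0^2$ with $z=\delta x_0/\kappa\in(-1,0)$ and $g(z)=[\ln(1+z)-z]/z^2$. Hence $\partial_\delta E=\frac{(T-P)x_0^3}{\kappa^2}g'(z)>0$ for every value of $\frac{T-P}{P-S}$, and by continuity $\partial_\delta\Phi>0$ at the nearby discounted equilibrium point. So there is no affine factor in $\frac{T-P}{P-S}$ vanishing at $\bar t$, and $\partial_\delta\Phi$ does not degenerate at the singularity; only $\partial_\eta\Phi$ does. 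Carried out literally, your computation together with the claimed $\partial_\eta\Phi<0$ would give $\partial\eta^*/\partial\delta>0$ in both regimes, contradicting the second case.

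The missing ingredient is a sign analysis of the transversality constant itself, which proceeds in three steps:
\begin{enumerate}
\item Compute $\partial^2_{\rho\tilde r}W(1,0;x_0)$ in closed form as the time-weighted first-order integrand (Lemma~\ref{lem:transversality-constant}).
\item Evaluate it at the undiscounted root $\bar x$, and use $E(\bar x)=0$ to write $\frac{T-P}{P-S}$ as a strictly increasing function of $\bar q=1+\delta\bar x/\kappa$.
\item This reduces the sign of $\mathcal T$ to that of $D(\bar q)$, which has a unique root $\bar q_c\in(0,1)$ (Lemma~\ref{lem:D_unique_root}), giving $\bar t=f(\bar q_c)$.
\end{enumerate}
Eliminating $x_0$ through the root condition is what makes $\bar t$ independent of the initial condition. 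Your fallback of sending $x_0\to0$ or $x_0\to1$ does not work. The equilibrium branch is only constructed for $x_0$ near $\bar x$, where $\tilde r^*(x_0)\to0$; for small $x_0$ one has $E(x_0)>0$, so $\rho=1$ is not even an undiscounted critical point there.

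With $\partial_\delta\Phi>0$ and $\mathrm{sign}\,\partial_{\tilde r}\Phi=\mathrm{sign}\,\mathcal T$ on a neighbourhood of $(0,\bar x,\delta_0)$, the implicit function theorem at the equilibrium point gives $\partial\gamma^*/\partial\delta=\frac{r}{\tilde r_0^{2}}\,\partial_\delta\Phi/\partial_{\tilde r}\Phi$. This has the sign of $\mathcal T$, as required.
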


 The threshold $\bar t$ partitions the meta-game into two distinct strategic regimes:

\paragraph{High Relative Loss ($\frac{T-P}{P-S}> \bar t$).}
When the relative loss from a rival's acceleration is severe, the firm engineers a sluggish algorithm primarily to avoid steepening the rival's learning gradient. In this regime, as strategic complementarity weakens ($\delta \to 0$), the rival's updating speed becomes less sensitive to the firm's price cuts. Because the firm's own acceleration now triggers a weaker retaliatory acceleration from the rival, the strategic risk of learning fast is mitigated. The firm therefore responds by increasing its optimal learning rate.

\paragraph{Low Relative Loss ($\frac{T-P}{P-S}< \bar t$).}
Conversely, when the penalty from an accelerating rival is relatively small, the firm's deceleration is driven instead by the broader desire to jointly prolong the cooperative transition phase. The comparative statics reverse. As $\delta \to 0$, the firm's high prices act as a less effective drag on the rival's convergence. To compensate for this weakened transmission mechanism and ensure the market descent remains gradual, the firm is forced to maintain elevated prices for longer, adopting an \textit{even more sluggish} algorithm.

\section{Conclusion}\label{sec: conclusion}

Our central finding challenges the conventional intuition that firms unambiguously prefer faster algorithms to rapidly optimize towards a dominant action. In markets exhibiting strategic complementarity, an excessively fast algorithm rapidly undercuts the rival but inadvertently steepens the rival's payoff gradient, accelerating their descent toward the competitive equilibrium and destroying transitional profit margins. Consequently, firms optimally design sluggish algorithms. The parametric condition for algorithmic sluggishness arises naturally in vulnerable markets characterized by stark competitive tipping points, loyal consumer bases, or capacity constraints.

This deliberate deceleration uncovers a distinct mechanism for anti-competitive harm. Even when standard learning algorithms are guaranteed to eventually converge to a competitive Nash equilibrium, firms can extract substantial profits above the competitive benchmark simply by tuning hyperparameters to delay this convergence. For competition authorities, this implies that a purely steady-state analysis of algorithmic pricing is insufficient. Algorithms that appear benign in the limit can be engineered to act as sophisticated commitment devices that artificially prolong states of historically high prices. Detecting anti-competitive behavior in algorithmic markets may therefore require scrutinizing the \emph{ex ante} strategic choices that govern the speed of market adjustment.

The meta-game framework developed here naturally extends beyond pricing games, mapping a rich taxonomy of algorithmic behavior. In ongoing research, we show that in zero-sum settings, the strictly monotone incentive to accelerate traps players in an algorithmic arms race. In coordination games, asymmetric learning rates can serve as an equilibrium selection device. 

A natural avenue for future research is to endogenize the initialization of the algorithms. In an enlarged meta-game where firms jointly choose both their initial states and their learning rates, the incentive to learn slowly might establish initialization near mutual cooperation as a robust equilibrium. Ultimately, as strategic decisions are increasingly delegated to autonomous agents, integrating the strategic design of learning rules into economic theory provides a rich and necessary agenda for future work. 

\begingroup
\singlespacing
\normalsize
\bibliographystyle{plain}
\bibliography{reference}
\endgroup
% \bibliographystyle{plain}
% \bibliography{reference}

\appendix

\section{Appendix}\label{sec: Appendix}

\subsection{Preliminary Results of Section~\ref{sec: PD}}

\subsubsection{Proof of Proposition~\ref{prop:sensitivity}}

\begin{proof}
We only prove the case with $\delta>0$. The results for  $\delta=0$ follow  from similar arguments, since the dynamics of $x$ and $y$ are decoupled.

By the theorem on parameter dependence of differential equations \cite[Chapter V, Theorem 3.1]{hartman2002ode}, the partial derivatives $u(t) \coloneqq \frac{\partial x^\gamma(t)}{\partial \gamma_1}$ and $v(t) \coloneqq \frac{\partial y^\gamma(t)}{\partial \gamma_1}$ satisfy the variational equations:
\begin{align}
\dot{u}(t) &= -\frac{1}{2}(\kappa + \delta y(t)) - \frac{\gamma_1 \delta}{2} v(t), \quad u(0) = 0, \label{eq:variation_u}\\
\dot{v}(t) &= -\frac{\gamma_2 \delta}{2} u(t), \quad v(0) = 0. \label{eq:variation_v}
\end{align}

\textit{Step 1: Local behaviour.} We first establish that $u < 0$ and $v > 0$ on an interval $(0, \varepsilon)$ for some $\varepsilon > 0$. Since $\kappa > 0$, $\delta > 0$, and $y(0) \in (0,1)$, continuity ensures that for sufficiently small $\varepsilon > 0$:
\begin{align*}
-\frac{1}{2}(\kappa + \delta y(t)) - \frac{\gamma_1 \delta}{2} v(t) < 0 \quad \text{for all } t \in (0, \varepsilon].
\end{align*}
From equation \eqref{eq:variation_u} with $u(0) = 0$, this implies $u(t) < 0$ for $t \in (0, \varepsilon]$. Substituting into \eqref{eq:variation_v} yields $\dot{v}(t) > 0$, hence $v(t) > 0$ for $t \in (0, \varepsilon]$.

\textit{Step 2: Global behaviour.} We prove $u(t) < 0$ for all $t \in (0,H)$. Suppose, for contradiction, that there exists $t_0 > 0$ such that $u(t_0) \geq 0$. Let:
\begin{align*}
\bar{t} = \inf\{t > 0 : u(t) \geq 0\}.
\end{align*}
By Step 1, $\bar t>0$. 
By continuity, $u(\bar{t}) = 0$ and $u(t) < 0$ for $t \in (0, \bar{t})$. From \eqref{eq:variation_v} with $\delta > 0$, we have $v(t) \geq 0$ on $[0, \bar{t}]$. But then \eqref{eq:variation_u} implies:
\begin{align*}
\dot{u}(\bar{t}) = -\frac{1}{2}(\kappa + \delta y(\bar{t})) - \frac{\gamma_1 \delta}{2} v(\bar{t}) < 0,
\end{align*}
contradicting the fact that $u$ reaches zero from below at $\bar{t}$. Therefore $u(t) < 0$ for all $t \in (0,H)$, which by \eqref{eq:variation_v} implies $v(t) > 0$.

\textit{Step 3: Symmetric argument.} By the symmetry of the game, interchanging the roles of players yields
$ 
\frac{\partial x^\gamma(t)}{\partial \gamma_2} > 0 $ and $\text{and} \quad \frac{\partial y^\gamma(t)}{\partial \gamma_2} < 0 $ 
for all $t \in (0,H)$.
\end{proof}

\subsubsection{Proof of Proposition~\ref{prop:monotone}}

\begin{proof}
Define $J_1=V_1/r$.
Fix $x_0, y_0 \in (0,1]$ and $\gamma_2 > 0$. By the Leibniz integral rule, 
\begin{align*}
\frac{\partial J_1}{\partial \gamma_1}(\gamma_1, \gamma_2) = \int_0^H e^{-rt} \frac{\partial}{\partial \gamma_1}\big[(S-P)x^\gamma(t) + (T-P)y^\gamma(t) - \delta x^\gamma(t)y^\gamma(t)\big]\, dt + R,
\end{align*}
where the remainder term $R$ depends on which player hits zero first:
$$
R\coloneqq \begin{cases}
    \int_{T_y}^{T_{y,x}} e^{-rt}(S{-}P)\frac{\partial x(t)}{\partial \gamma_1}\, dt, &  T_y \le  T_x, 
    \\
    \int_{T_x}^{T_{x,y}} e^{-rt}(T{-}P)\frac{\partial y(t)}{\partial \gamma_1}\, dt, &  T_x < T_y.
\end{cases}
$$
It suffices to show that both terms are non-negative, with at least one strictly positive.

On $[0,H]$ we sign the integrand by Proposition~\ref{prop:sensitivity}. Since $S < P$,
\begin{align*}
(S - P)\frac{\partial x^\gamma(t)}{\partial \gamma_1} > 0,
\end{align*}
and since $\delta \geq 0$,
\begin{align*}
-\delta\, \frac{\partial x^\gamma(t)}{\partial \gamma_1}\, y^\gamma(t) \geq 0.
\end{align*}
The terms involving $\partial_{\gamma_1} y^\gamma$ combine as
\begin{align*}
\big[(T - P) - \delta x^\gamma(t)\big] \frac{\partial y^\gamma(t)}{\partial \gamma_1}
\geq \big[T - P - \delta\big] \frac{\partial y^\gamma(t)}{\partial \gamma_1}
= (R - S) \frac{\partial y^\gamma(t)}{\partial \gamma_1} \geq 0,
\end{align*}
using $x^\gamma(t) \leq 1$, the identity $\delta = S + T - P - R$, $R > S$, and $\partial_{\gamma_1} y^\gamma \geq 0$; the inequality is strict when $\delta > 0$, while at $\delta = 0$ we have $\partial_{\gamma_1} y^\gamma \equiv 0$ and the term vanishes. In either case the $(S-P)\partial_{\gamma_1}x^\gamma$ contribution is strictly positive, so the integrand is strictly positive and
\begin{align}\label{eq:before_hitting_align}
\int_0^{H} e^{-rt} \frac{\partial}{\partial \gamma_1}\big[(S - P)x^\gamma(t) + (T - P)y^\gamma(t) - \delta x^\gamma(t) y^\gamma(t)\big]\, dt > 0.
\end{align}

We now consider the remainder term. If $T_x = T_y$, then $R = 0$ and the claim follows from~\eqref{eq:before_hitting_align}. Otherwise we sign $R$ using the post-boundary sensitivities. Suppose first $H = T_y < T_x$. We have $x^\gamma(t) = x(\gamma, T_y) - \frac{\gamma_1 \kappa}{2}(t - T_y)$ on $[T_y, T_{y,x}]$. Differentiating in $\gamma_1$ with $t$ fixed and applying the chain rule to $T_y(\gamma)$,
\begin{align*}
\frac{\partial x^\gamma(t)}{\partial \gamma_1} = \frac{\partial x}{\partial \gamma_1}(\gamma, T_y) + \Big[\frac{\partial x}{\partial t}(\gamma, T_y) + \frac{\gamma_1 \kappa}{2}\Big]\frac{\partial T_y}{\partial \gamma_1} - \frac{\kappa}{2}(t - T_y).
\end{align*}
At the hitting time the interior velocity is $\frac{\partial x}{\partial t}(\gamma, T_y) = -\frac{\gamma_1}{2}(\kappa + \delta y(T_y)) = -\frac{\gamma_1 \kappa}{2}$, since $y(T_y) = 0$, so the bracket vanishes and
\begin{align*}
\frac{\partial x^\gamma(t)}{\partial \gamma_1} = \frac{\partial x}{\partial \gamma_1}(\gamma, T_y) - \frac{\kappa}{2}(t - T_y) < 0,
\end{align*}
by Proposition~\ref{prop:sensitivity} together with $\kappa > 0$ and $t \geq T_y$. With $S - P < 0$ this gives $R > 0$. Suppose instead $H = T_x < T_y$. Analogously to the case for $x^{\gamma}(t)$, we have $y^\gamma(t) = y(\gamma, T_x) - \frac{\gamma_2 \kappa}{2}(t - T_x)$ on $[T_x, T_{x,y}]$, and the boundary velocity $-\frac{\gamma_2 \kappa}{2}$ is independent of $\gamma_1$. The same chain-rule cancellation, now using $\frac{\partial y}{\partial t}(\gamma, T_x) = -\frac{\gamma_2}{2}(\kappa + \delta x(T_x)) = -\frac{\gamma_2 \kappa}{2}$, leaves
\begin{align*}
\frac{\partial y^\gamma(t)}{\partial \gamma_1} = \frac{\partial y}{\partial \gamma_1}(\gamma, T_x) \geq 0,
\end{align*}
by Proposition~\ref{prop:sensitivity}. With $T > P$ this gives $R \geq 0$. In every case $\partial_{\gamma_1} J_1 > 0$, which completes the proof.
\end{proof}

\subsubsection{Proof of Theorem \ref{thm:equilibrium}}
\begin{proof} Proposition~\ref{prop:monotone} establishes that for any $\gamma_2 > 0$, Firm~$1$'s payoff function $J_1(\gamma_1, \gamma_2)$ is strictly increasing in $\gamma_1$ on the interval $(0, \infty)$. 
% \textit{Part (i): Bounded rates.} When $\gamma_i \in (0, \bar{\gamma}]$ for both players, by the strict monotonicity from Proposition~\ref{prop:monotone}, Firm~$1$'s best response to any $\gamma_2 \in (0, \bar{\gamma}]$ is $\mathrm{BR}_1(\gamma_2) = \bar{\gamma}.$ By symmetry of the game structure and initial conditions, Firm~$2$'s best response satisfies $\mathrm{BR}_2(\gamma_1) = \bar{\gamma}.$
% Therefore, $(\bar{\gamma}, \bar{\gamma})$ is the unique fixed point of the best response correspondence. Uniqueness follows from the strict monotonicity: any profile $(\gamma_1, \gamma_2)$ with $\gamma_i < \bar{\gamma}$ for some $i \in \{1, 2\}$ cannot be an equilibrium, as player $i$ has a profitable deviation to $\bar{\gamma}$.
% \textit{Part (ii): Unbounded rates.} When learning rates are unbounded, 
Consider any candidate equilibrium $(\gamma_1, \gamma_2) \in (0, \infty)^2$. Without loss of generality, $\gamma_1 \leq \gamma_2$. By Proposition~\ref{prop:monotone}, for some $\varepsilon > 0$, Firm~$1$'s payoff satisfies $J_1(\gamma_1 + \varepsilon, \gamma_2) > J_1(\gamma_1, \gamma_2)$.
Hence $\gamma_1$ is not a best response to $\gamma_2$. Since this holds for arbitrary $(\gamma_1, \gamma_2)$, no Nash equilibrium exists. The best response correspondence is empty: $\mathrm{BR}_i(\gamma_{-i}) = \varnothing$ for all $\gamma_{-i} \in (0, \infty)$ and $i \in \{1, 2\}$.
\end{proof}

\subsubsection{Proof of Proposition~\ref{prop:sensitivity_delta_negative}}
\begin{proof}
The proof is analogous to that of Proposition~\ref{prop:sensitivity}, with the sign change in sensitivities only due to the sign change of $\delta$.
\end{proof}

\subsection{Undiscounted Game}\label{sec:undiscounted}

We first note the following scaling results:

\begin{lemma}[Trajectory rescaling]\label{lem:traj-rescaling}
Fix a game $G$, learning rates $\gamma=(\gamma_1,\gamma_2)$, an initial condition $(x_0,y_0)\in S$, and $\lambda>0$. Writing $(x(\cdot;\gamma),y(\cdot;\gamma))$ for the trajectory under the scaled projected gradient dynamics with rates $\gamma$ from $(x_0,y_0)$, the trajectory under rates $\lambda\gamma$ from the same initial condition is its time-rescaling:
\begin{align*}
x(\tau;\lambda\gamma)=x(\lambda\tau;\gamma),\qquad y(\tau;\lambda\gamma)=y(\lambda\tau;\gamma),\qquad \tau\ge 0.
\end{align*}
\end{lemma}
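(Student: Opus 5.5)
The plan is to verify directly that the time-rescaled pair solves the projected dynamics with rates $\lambda\gamma$, and then to invoke uniqueness. Define $\tilde x(\tau)\coloneqq x(\lambda\tau;\gamma)$ and $\tilde y(\tau)\coloneqq y(\lambda\tau;\gamma)$. These functions satisfy $\tilde x(0)=x_0$ and $\tilde y(0)=y_0$, and they take values in $S$. The goal is to show that $(\tilde x,\tilde y)$ is a solution of the scaled projected gradient dynamics with rates $\lambda\gamma$ started from $(x_0,y_0)$. Uniqueness of that trajectory (asserted in the model section) then gives the identity.

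The key structural fact is positive homogeneity of the projection onto a closed convex cone: for any $\lambda>0$ and any vector $v$,
\[
\Pi_{T_{[0,1]}(x)}(\lambda v)=\lambda\,\Pi_{T_{[0,1]}(x)}(v).
\]
This holds because $\lambda w$ ranges over the cone exactly as $w$ does, and $\|\lambda v-\lambda w\|=\lambda\|v-w\|$. One can also check it concretely: the tangent cone is $\mathbb R$ in the interior, $[0,\infty)$ at $0$, and $(-\infty,0]$ at $1$, so the projection is either the identity or a clamping at $0$, and both commute with positive scaling.

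At every $\tau$ where the original trajectory is differentiable at $\lambda\tau$, the chain rule and homogeneity give
\begin{align*}
\dot{\tilde x}(\tau)=\lambda\,\dot x(\lambda\tau;\gamma)
&=\lambda\,\Pi_{T_{[0,1]}(\tilde x(\tau))}\Big(\gamma_1\tfrac{\partial u_1}{\partial x}(\tilde x(\tau),\tilde y(\tau))\Big)\\
&=\Pi_{T_{[0,1]}(\tilde x(\tau))}\Big(\lambda\gamma_1\tfrac{\partial u_1}{\partial x}(\tilde x(\tau),\tilde y(\tau))\Big).
\end{align*}
The same computation applies to $\tilde y$ with $\lambda\gamma_2$. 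Since $u_i$ does not depend on time, no other factor appears.

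The only delicate point is regularity. Projected dynamics have a discontinuous right-hand side at the boundary, so solutions are understood in the Carathéodory sense: absolutely continuous, with the equation holding almost everywhere. In the explicit case used in the paper, they are piecewise affine-in-time pieces glued at hitting times. Time rescaling by $\tau\mapsto\lambda\tau$ preserves absolute continuity and maps null sets to null sets. Hence the equation holds for $(\tilde x,\tilde y)$ for almost every $\tau$. The same conclusion can be read off the piecewise description: the hitting times scale to $T/\lambda$, and the boundary rule "$\dot x=0$ once $x=0$" is preserved.

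With this established, uniqueness of solutions of the projected dynamics yields $x(\tau;\lambda\gamma)=\tilde x(\tau)$ and $y(\tau;\lambda\gamma)=\tilde y(\tau)$ for all $\tau\ge0$. This uniqueness is the standard well-posedness of projected dynamical systems with Lipschitz vector fields on convex sets, as in \cite{lahkar2008projection}. I expect no real obstacle beyond stating this solution-concept and uniqueness point cleanly.
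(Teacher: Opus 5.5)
Your proposal is correct and follows the paper's proof essentially verbatim: define the time-rescaled pair, apply the chain rule, use positive homogeneity of the projection onto the tangent cone to move $\lambda$ inside, and conclude by uniqueness of Carath\'eodory solutions. One side remark is slightly off but not load-bearing: when $\delta\neq 0$ the interior pieces of the trajectory solve a coupled linear ODE, so they are exponential/hyperbolic in time rather than affine (only the boundary phase is affine); your argument does not use this.
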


\begin{proof}
Define $\tilde x(\tau):=x(\lambda\tau;\gamma)$ and $\tilde y(\tau):=y(\lambda\tau;\gamma)$. By the chain rule,
\begin{align*}
\frac{d\tilde x}{d\tau}(\tau)=\lambda\,\dot x(\lambda\tau;\gamma)=\lambda\,\Pi_{T_{\Delta(A_1)}(\tilde x(\tau))}\bigl(\gamma_1\,U_1(\tilde x(\tau),\tilde y(\tau))\bigr)=\Pi_{T_{\Delta(A_1)}(\tilde x(\tau))}\bigl(\lambda\gamma_1\,U_1(\tilde x(\tau),\tilde y(\tau))\bigr),
\end{align*}
where the last equality uses the positive homogeneity of the projector onto a closed convex cone \cite[Proposition 28.22]{bauschke2011convex}, valid because $T_{\Delta(A_1)}(z)$ is a non-empty closed convex cone for every $z\in\Delta(A_1)$. The analogous identity holds for $\tilde y$, and $(\tilde x(0),\tilde y(0))=(x_0,y_0)$. By uniqueness of Carathéodory solutions, $(\tilde x,\tilde y)$ is the trajectory under rates $\lambda\gamma$ from $(x_0,y_0)$.
\end{proof}

\begin{proposition}[Discounted payoff scaling]\label{prop:scaling}
For every $\gamma_1,\gamma_2,r,\lambda>0$ and every initial condition $(x_0,y_0)\in S$,
\begin{align*}
V_1(\lambda\gamma_1,\lambda\gamma_2;\lambda r;x_0,y_0)=V_1(\gamma_1,\gamma_2;r;x_0,y_0).
\end{align*}
\end{proposition}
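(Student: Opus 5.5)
The plan is a direct change of variables in the defining integral \eqref{eq: V_def}, combined with the trajectory rescaling of Lemma~\ref{lem:traj-rescaling}. Fix $\gamma_1,\gamma_2,r,\lambda>0$ and $(x_0,y_0)\in S$. Write $(x(\cdot;\gamma),y(\cdot;\gamma))$ for the trajectory under rates $\gamma=(\gamma_1,\gamma_2)$. By Lemma~\ref{lem:traj-rescaling}, the trajectory under $\lambda\gamma$ from the same initial condition is $x(\tau;\lambda\gamma)=x(\lambda\tau;\gamma)$, and likewise for $y$.

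Substituting this into the payoff with discount rate $\lambda r$ gives
\begin{align*}
V_1(\lambda\gamma_1,\lambda\gamma_2;\lambda r;x_0,y_0)
=\lambda r\int_0^\infty e^{-\lambda r\tau}\,u_1\bigl(x(\lambda\tau;\gamma),y(\lambda\tau;\gamma)\bigr)\,d\tau .
\end{align*}
I will then change variables to $t=\lambda\tau$, so that $dt=\lambda\,d\tau$, $e^{-\lambda r\tau}=e^{-rt}$, and the limits $0$ and $\infty$ are preserved. The integral becomes
\begin{align*}
r\int_0^\infty e^{-rt}\,u_1\bigl(x(t;\gamma),y(t;\gamma)\bigr)\,dt=V_1(\gamma_1,\gamma_2;r;x_0,y_0).
\end{align*}
The factor $\lambda$ coming from the normalisation $\lambda r$ cancels exactly against the Jacobian $1/\lambda$. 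This is precisely why the paper normalises by $r$.

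The only point needing care is justifying the substitution on the infinite interval. The integrand is measurable, since the trajectory is continuous (absolutely continuous as a Carathéodory solution) and $u_1$ is bilinear, hence continuous on $S$. It is also bounded by $\max\{|T|,|R|,|P|,|S|\}$, and $e^{-rt}$ is integrable because $r>0$. So both integrals converge absolutely, and the linear change of variables is valid by the standard substitution rule for Lebesgue integrals.

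There is no real obstacle: the substantive content, namely uniqueness of the projected dynamics and positive homogeneity of the projection, is already contained in Lemma~\ref{lem:traj-rescaling}. The same argument applies verbatim to $V_2$ by symmetry.
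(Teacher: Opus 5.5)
Your proposal is correct and follows the paper's own proof: invoke Lemma~\ref{lem:traj-rescaling} to write the $\lambda\gamma$-trajectory as a time-rescaling, substitute it into the definition of $V_1$ with discount rate $\lambda r$, and change variables $t=\lambda\tau$ so that the normalising factor $\lambda$ cancels the Jacobian. Your remark on absolute convergence (bounded integrand, $r>0$) is a harmless extra justification that the paper leaves implicit.
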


\begin{proof}
By Lemma~\ref{lem:traj-rescaling}, $x(\tau;\lambda\gamma)=x(\lambda\tau;\gamma)$ and $y(\tau;\lambda\gamma)=y(\lambda\tau;\gamma)$. Substituting into the definition of $V_1$ and changing variables to $t=\lambda\tau$,
\begin{align*}
V_1(\lambda\gamma_1,\lambda\gamma_2;\lambda r) &=\lambda r\int_0^\infty e^{-\lambda r\tau}\,u_1\bigl(x(\lambda\tau;\gamma),y(\lambda\tau;\gamma)\bigr)\,d\tau \\
&=r\int_0^\infty e^{-rt}\,u_1\bigl(x(t;\gamma),y(t;\gamma)\bigr)\,dt \\
&=V_1(\gamma_1,\gamma_2;r).
\end{align*}
\end{proof}

Fix a symmetric initial condition $x_0 = y_0 \in (0,1]$. Define the undiscounted excess payoff
\begin{align*}
J^0_1(\gamma_1, \gamma_2; x_0) \;:=\; \int_0^{H'} \bigl[u_1(x(t), y(t)) - P\bigr]\, dt,
\end{align*}
where $H'$ is the time at which both coordinates have been absorbed at zero. Since the integrand vanishes for $t \geq H'$, this integral is finite, and by dominated convergence
\begin{align*}
\lim_{r \to 0} \frac{V_1(\gamma_1, \gamma_2; r; x_0, x_0) - P}{r} \;=\; J^0_1(\gamma_1, \gamma_2; x_0).
\end{align*}

The undiscounted payoff satisfies a homogeneity property in the learning rates analogous to the discounted scaling identity.

\begin{lemma}\label{lem:undiscounted-scaling}
For every $\gamma_1,\gamma_2,\lambda>0$ and every initial condition $(x_0,y_0)\in S$,
\begin{align*}
J^0_1(\lambda\gamma_1,\lambda\gamma_2;x_0,y_0)=\lambda^{-1}\,J^0_1(\gamma_1,\gamma_2;x_0,y_0).
\end{align*}
In particular, setting $\lambda=1/\gamma_2$ and $\rho=\gamma_1/\gamma_2$,
\begin{align}\label{eq:scaling-rho-1}
J^0_1(\gamma_1,\gamma_2;x_0,y_0)=\gamma_2^{-1}\,J^0_1(\rho,1;x_0,y_0).
\end{align}
\end{lemma}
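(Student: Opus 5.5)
The plan is to reuse the time-rescaling of Lemma~\ref{lem:traj-rescaling}, exactly as in the proof of Proposition~\ref{prop:scaling}, but with no discount factor to absorb the Jacobian of the change of variables. Fix $\gamma$, $\lambda>0$ and $(x_0,y_0)$. By Lemma~\ref{lem:traj-rescaling}, the trajectory under $\lambda\gamma$ satisfies $x(\tau;\lambda\gamma)=x(\lambda\tau;\gamma)$ and $y(\tau;\lambda\gamma)=y(\lambda\tau;\gamma)$ for all $\tau\ge 0$.

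The first step is to identify the absorption times. Let $H'(\gamma)$ be the time at which both coordinates reach zero under rates $\gamma$. Because $(x,y)(\cdot;\lambda\gamma)$ is a reparametrisation of the $\gamma$-trajectory by $t=\lambda\tau$, the state $(0,0)$ is reached under $\lambda\gamma$ exactly at $\tau=H'(\gamma)/\lambda$. Hence $H'(\lambda\gamma)=H'(\gamma)/\lambda$.

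Next I will write out $J_1^0$ under $\lambda\gamma$ and change variables $t=\lambda\tau$, $d\tau=dt/\lambda$:
\begin{align*}
J_1^0(\lambda\gamma_1,\lambda\gamma_2)
&=\int_0^{H'(\gamma)/\lambda}\bigl[u_1(x(\lambda\tau;\gamma),y(\lambda\tau;\gamma))-P\bigr]\,d\tau\\
&=\lambda^{-1}\int_0^{H'(\gamma)}\bigl[u_1(x(t;\gamma),y(t;\gamma))-P\bigr]\,dt
=\lambda^{-1}J_1^0(\gamma_1,\gamma_2).
\end{align*}
The integral is over a finite interval and the integrand is bounded and measurable, so the substitution is legitimate. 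Alternatively, one can integrate over $[0,\infty)$ throughout, since the integrand vanishes once $(0,0)$ is reached; this avoids tracking $H'$ at all.

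Finally, the particular form \eqref{eq:scaling-rho-1} follows by choosing $\lambda=1/\gamma_2$, which gives $J_1^0(\gamma_1/\gamma_2,1)=\gamma_2 J_1^0(\gamma_1,\gamma_2)$. Rearranging yields the claim with $\rho=\gamma_1/\gamma_2$.

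There is no real obstacle here. The only point needing care is that the upper limit $H'$ depends on $\gamma$. Using the $[0,\infty)$ formulation, where $u_1-P\equiv 0$ at $(0,0)$, sidesteps this point entirely.
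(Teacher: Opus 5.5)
Your argument is correct, but it takes a slightly different route from the paper's. You perform the change of variables $t=\lambda\tau$ directly in the undiscounted integral. This uses Lemma~\ref{lem:traj-rescaling} together with $H'(\lambda\gamma)=H'(\gamma)/\lambda$, or, more cleanly, integration over $[0,\infty)$, which is valid because $u_1(0,0)=P$.

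The paper never manipulates the undiscounted integral directly. It writes $J_1^0(\gamma_1,\gamma_2)=\lim_{r\to0^+}\bigl(V_1(\gamma_1,\gamma_2;r)-P\bigr)/r$ and applies Proposition~\ref{prop:scaling} at rate $r/\lambda$ to obtain $V_1(\lambda\gamma_1,\lambda\gamma_2;r)=V_1(\gamma_1,\gamma_2;r/\lambda)$. It then substitutes $s=r/\lambda$ inside the limit, so the factor $\lambda^{-1}$ comes from rescaling the discount rate rather than from the Jacobian of the time change.

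Both proofs ultimately rest on the same time-rescaling lemma, since Proposition~\ref{prop:scaling} is itself proved by exactly your substitution with the discount factor present. Your version is more elementary and self-contained: it needs only finiteness of the absorption time, so that the integral converges, and involves no limit interchange. The paper's version reuses an already-established identity and keeps $J_1^0$ explicitly tied to the discounted payoff as its $r\to0^+$ limit, which is the viewpoint exploited later when transporting the undiscounted equilibrium back to $r>0$. It does, however, require the dominated-convergence justification of that limit representation, which again relies on finiteness of $H'$ and boundedness of $u_1-P$.
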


\begin{proof}
The relation $J_1^0(\gamma_1,\gamma_2)=\lim_{r\to 0^+}\bigl(V_1(\gamma_1,\gamma_2;r)-P\bigr)/r$ holds at every fixed initial condition, since $u_1-P$ is bounded and vanishes beyond the finite absorption time $H'$. Applying Proposition~\ref{prop:scaling} with discount rate $r/\lambda$ gives $V_1(\lambda\gamma_1,\lambda\gamma_2;r)=V_1(\gamma_1,\gamma_2;r/\lambda)$, whence, substituting $s=r/\lambda$,
\begin{align*}
J_1^0(\lambda\gamma_1,\lambda\gamma_2)=\lim_{r\to 0^+}\frac{V_1(\gamma_1,\gamma_2;r/\lambda)-P}{r}=\lim_{s\to 0^+}\frac{V_1(\gamma_1,\gamma_2;s)-P}{\lambda s}=\lambda^{-1}\,J_1^0(\gamma_1,\gamma_2).
\end{align*}
The identity~\eqref{eq:scaling-rho-1} follows by taking $\lambda=1/\gamma_2$.
\end{proof}

The factor $\gamma_2^{-1}$ in~\eqref{eq:scaling-rho-1} does not affect first-order conditions in $\gamma_1$ at the symmetric profile $\gamma_1 = \gamma_2$, so we focus on the function $\rho \mapsto J^0_1(\rho, 1; x_0)$, whose critical point at $\rho = 1$ characterises a symmetric Nash equilibrium of the undiscounted meta-game.

\begin{theorem}\label{thm:undiscounted-FOC}
Suppose $\delta < 0$ and $x_0 = y_0 \in (0,1]$. Then $\rho = 1$ is a critical point of $\rho \mapsto J^0_1(\rho, 1; x_0)$ if and only if
\begin{align}\label{eq:FOC-undiscounted}
(T-P)\left[\frac{x_0^2}{2\kappa} - \frac{x_0}{\delta} + \frac{\kappa}{\delta^2}\ln\!\left(1 + \frac{\delta x_0}{\kappa}\right)\right] + x_0^2 \;=\; 0.
\end{align}
\end{theorem}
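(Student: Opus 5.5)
The plan is to compute $\partial_\rho J^0_1(\rho,1;x_0)$ at $\rho=1$ directly. I will use the variational equations from the proof of Proposition~\ref{prop:sensitivity_delta_negative} and exploit the symmetry of the reference trajectory.

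\emph{Integrand.} Expanding the bilinear payoff gives
\begin{align*}
u_1(x,y)-P=-\kappa x+(T-P)y-\delta xy .
\end{align*}
At $\rho=1$ with $x_0=y_0$, uniqueness of solutions forces $x(t)=y(t)=z(t)$, where
\begin{align*}
\dot z=-\tfrac12(\kappa+\delta z),\qquad z(0)=x_0 .
\end{align*}
Both coordinates hit $0$ at the common time $H$. Because $\kappa+\delta z\ge\kappa+\delta>0$ on $[0,1]$, $z$ is strictly decreasing, so $H$ is finite and $z$ can serve as the integration variable.

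\emph{Differentiating under the integral.} By Leibniz, $\partial_\rho J^0_1$ equals the integral over $[0,H]$ of the differentiated integrand, plus remainder terms from the post-boundary phase, as in the proof of Proposition~\ref{prop:monotone}. I will argue these remainders vanish at $\rho=1$. After a perturbation $\rho=1+\varepsilon$, one coordinate hits $0$ first. The other coordinate is then $O(\varepsilon)$ and is absorbed within a further time $O(\varepsilon)$, since its boundary velocity is $-\kappa/2\neq0$. Both hitting times are differentiable by the implicit function theorem, by the same transversality. The post-boundary contribution is therefore $O(\varepsilon^2)$ and drops out of the first derivative. Hence
\begin{align*}
\partial_\rho J^0_1\big|_{\rho=1}=\int_0^H\Bigl[-(\kappa+\delta z)\,u+\bigl((T-P)-\delta z\bigr)v\Bigr]dt ,
\end{align*}
where, with $\gamma_1=\gamma_2=1$,
\begin{align*}
\dot u=-\tfrac12(\kappa+\delta z)-\tfrac{\delta}{2}v,\qquad \dot v=-\tfrac{\delta}{2}u,\qquad u(0)=v(0)=0 .
\end{align*}

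\emph{Solving the variational system.} The system diagonalises in $s=u+v$ and $d=u-v$:
\begin{align*}
\dot s=-\tfrac12(\kappa+\delta z)-\tfrac\delta2 s,\qquad \dot d=-\tfrac12(\kappa+\delta z)+\tfrac\delta2 d .
\end{align*}
I will change variables to $z$ via $dt=-2\,dz/(\kappa+\delta z)$, integrating from $z=x_0$ down to $0$. In the $s$-equation the forcing term becomes $ds/dz=1+\ldots$. Since $\kappa+\delta z$ appears linearly, I expect closed forms: $u$ linear in $(x_0-z)$ plus terms in $(\kappa+\delta z)$, and $v$ involving $\ln\bigl((\kappa+\delta x_0)/(\kappa+\delta z)\bigr)$. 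A cleaner alternative is to integrate by parts using $-(\kappa+\delta z)=2\dot z$, moving derivatives onto $u$ and $v$, where the variational equations replace them. I would try this route first, since it should collapse the $u$-term into elementary integrals of $z$ and $v$.

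\emph{Assembling the formula.} I then substitute into the integral and simplify. The terms without $(T-P)$ should combine into $x_0^2$, coming from $\int(\kappa+\delta z)\,u\,dt$ with $u\sim -(x_0-z)$. The $(T-P)$-term is $\int v\,dt$ corrected by $\delta z v$, and it should produce
\begin{align*}
\frac{x_0^2}{2\kappa}-\frac{x_0}{\delta}+\frac{\kappa}{\delta^2}\ln\!\Bigl(1+\frac{\delta x_0}{\kappa}\Bigr),
\end{align*}
which is the second-order remainder of $\ln(1+\delta x_0/\kappa)$. Up to a positive constant factor this yields~\eqref{eq:FOC-undiscounted}, and a critical point at $\rho=1$ is then equivalent to the stated equation.

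\emph{Main obstacle.} I expect the main difficulty to be the bookkeeping in this last step: getting exact cancellations and the correct overall positive constant, so that the condition is an exact equivalence. Justifying that the boundary remainder is second order is the other delicate point, and I would treat it as in Proposition~\ref{prop:monotone}, using the chain-rule cancellation at the hitting time.
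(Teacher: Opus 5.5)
Your route is the paper's: discard the boundary phase as $O(h^2)$, differentiate the fixed-horizon integral along the symmetric path, and diagonalise the variational system via $u\pm v$. The justification step is essentially right. Do make explicit that freezing the horizon at $T^*$ (your $H$) relies on the excess payoff vanishing at the origin. This matters because the first absorption time $\min(T_x,T_y)$ is not differentiable at $\rho=1$: its one-sided derivatives are $\tfrac{2}{\kappa}u(T^*)\neq\tfrac{2}{\kappa}v(T^*)$. The paper handles this with an $O(h^2)$ comparison rather than a Leibniz boundary term.

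\textbf{The evaluation is missing.} It is the entire content of the statement, and you only predict its outcome. The predicted bookkeeping is also not right as stated, in two respects:
\begin{itemize}
\item By itself, $\int_0^{T^*}(\kappa+\delta z)u\,dt=\frac{\kappa^2-K_0^2}{2\delta^2}\ln\frac{K_0}{\kappa}$ with $K_0=\kappa+\delta x_0$. This equals $-x_0^2$ only to leading order in $x_0$.
\item The bracket in \eqref{eq:FOC-undiscounted} is exactly $\int_0^{T^*}v\,dt$, with no $\delta z v$ correction.
\end{itemize}

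The $-\delta zv$ term belongs with the $u$ term, and your own integration by parts shows this. Using $\kappa+\delta z=-2\dot z$, $z(T^*)=u(0)=0$ and the $u$-equation,
\begin{align*}
\int_0^{T^*}-(\kappa+\delta z)\,u\,dt=\int_0^{T^*}z(\kappa+\delta z)\,dt+\delta\int_0^{T^*}z\,v\,dt=x_0^2+\delta\int_0^{T^*}z\,v\,dt.
\end{align*}
Hence $\partial_\rho J_1^0(1,1;x_0)=x_0^2+(T-P)\int_0^{T^*}v\,dt$ exactly, with no extra constant. This is the linearisation of the identity $\int_0^{H'}-x(\kappa+\delta y)\,dt=-x_0^2/\rho$ used in the proof of Theorem~\ref{thm:undiscounted-global-max}.

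\textbf{How to finish.} You must actually solve the system.
\begin{itemize}
\item In your $z$-variable, with $K=\kappa+\delta z$, one finds $u+v=-\tfrac{K}{\delta}\ln(K_0/K)$ and $u-v=-\tfrac{1}{2\delta}\bigl(K_0^2/K-K\bigr)$. So both $u$ and $v$ carry the logarithm, contrary to your guess for $u$.
\item Integrating the $u$-equation over $[0,T^*]$ gives $\int_0^{T^*}v\,dt=-\tfrac{2}{\delta}\bigl(u(T^*)+x_0\bigr)$.
\item Inserting $u(T^*)$, the value at $K=\kappa$, yields the bracket.
\end{itemize}
Completed this way, your argument is shorter than the paper's. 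The paper solves for $u\pm v$ in $t$, evaluates $I_p$, $I_w$ and $I_{Kp}$ separately, and then recombines them.
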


\begin{proof}[Proof of Theorem~\ref{thm:undiscounted-FOC}]
Throughout, fix $x_0 = y_0 \in (0, 1]$ and $\delta < 0$. Set $\gamma_2 = 1$ so $\gamma_1 = \rho$, and write $K_0 := \kappa + \delta x_0 > 0$.

Let $x(t;\rho),y(t;\rho)$ denote the projected trajectory.
Write $H_1(\rho)$ for the first time at which at least one
coordinate reaches zero, and $H_2(\rho)$ for the first time
at which both coordinates have reached zero. Let
$\widehat x(t;\rho),\widehat y(t;\rho)$ solve the
unprojected system
\begin{align*}
\dot{\widehat x}
=-\frac{\rho}{2}(\kappa+\delta\widehat y),
\qquad
\dot{\widehat y}
=-\frac12(\kappa+\delta\widehat x),
\qquad
\widehat x(0;\rho)=\widehat y(0;\rho)=x_0.
\end{align*}
The affine vector field gives joint smoothness in $(t,\rho)$.

At $\rho=1$, symmetry gives
$\widehat x(t;1)=\widehat y(t;1)=:z(t)$. Writing
$K(t):=\kappa+\delta z(t)$, we obtain
\begin{align*}
K(t)=K_0e^{-\delta t/2},
\qquad
T^*=\frac{2}{\delta}\ln\!\left(\frac{K_0}{\kappa}\right)>0,
\qquad
z(T^*)=0.
\end{align*}
Thus $H_1(1)=H_2(1)=T^*$.

Put $h:=\rho-1$. Since $\dot z=-K/2<0$, smooth dependence
keeps both unprojected time derivatives bounded above by
a negative constant on a fixed interval $[0,T^*+\eta]$,
for some $\eta>0$ and all $\rho$ sufficiently close to $1$.
Their positive initial values and negative values at
$T^*+\eta$ persist, while their values at $T^*$ are
$O(|h|)$. The mean value theorem therefore places both
first zeros at $T^*+O(|h|)$. Until the earlier zero,
both coordinates lie in $(0,x_0]$, so the projected and
unprojected paths coincide. Consequently,
$H_1(\rho)=T^*+O(|h|)$, and smoothness near $(T^*,1)$
also gives
\begin{align*}
x(H_1(\rho);\rho)+y(H_1(\rho);\rho)=O(|h|).
\end{align*}

Extend the excess stage payoff to $\mathbb R^2$ by its
polynomial expression
\begin{align*}
\Delta u_1(x,y):=-\kappa x+(T-P)y-\delta xy,
\qquad
F(t,\rho):=
\Delta u_1(\widehat x(t;\rho),\widehat y(t;\rho)).
\end{align*}
Split the actual payoff as
\begin{align}\label{eq:J0-decomp}
J_1^0(\rho,1)
&=
\underbrace{
\int_0^{H_1(\rho)}F(t,\rho)\,dt
}_{=:\,\mathcal I(\rho)}
+
\underbrace{
\int_{H_1(\rho)}^{H_2(\rho)}
\Delta u_1(x(t;\rho),y(t;\rho))\,dt
}_{=:\,\mathcal B(\rho)}.
\end{align}

At the first absorption, the remaining coordinate is
$O(|h|)$ and subsequently decreases at rate $\kappa/2$
or $\rho\kappa/2$, both bounded away from zero near
$\rho=1$. Thus $H_2(\rho)-H_1(\rho)=O(|h|)$.
Throughout this boundary phase the state remains
$O(|h|)$ from the origin, so the excess payoff is
$O(|h|)$ as well. Multiplying this bound by the duration
gives $\mathcal B(\rho)=O(h^2)$.
At $\rho=1$ the boundary phase has zero length, so
$\mathcal B(1)=0$. Therefore, for some constant $C_B>0$,
\begin{align*}
\left|
\frac{\mathcal B(1+h)-\mathcal B(1)}{h}
\right|
\leq C_B|h|\longrightarrow0
\qquad\text{as }h\to0,\ h\neq0.
\end{align*}
This proves $\mathcal B'(1)=0$ from both sides without
differentiating either absorption time.

Moreover, $F$ is smooth and $F(T^*,1)=0$. Hence
$|F(t,\rho)|\leq C_F(|t-T^*|+|h|)$ near $(T^*,1)$
for some constant $C_F>0$, and
\begin{align*}
\left|
\mathcal I(\rho)-\int_0^{T^*}F(t,\rho)\,dt
\right|
&\leq
C_F|H_1(\rho)-T^*|
\bigl(|H_1(\rho)-T^*|+|h|\bigr)\\
&=O(h^2).
\end{align*}
The difference vanishes at $\rho=1$; dividing by $h$
shows that $\mathcal I$ has the same derivative there
as the smooth fixed-horizon integral.

Define the unprojected sensitivities
\begin{align*}
u(t):=
\left.\partial_\rho\widehat x(t;\rho)\right|_{\rho=1},
\qquad
v(t):=
\left.\partial_\rho\widehat y(t;\rho)\right|_{\rho=1}.
\end{align*}
Differentiating the unprojected equations gives
\begin{align*}
\dot u=-\frac12K-\frac{\delta}{2}v,
\qquad
\dot v=-\frac{\delta}{2}u,
\qquad
u(0)=v(0)=0.
\end{align*}
Writing $p:=u+v$ and $w:=u-v$, addition and subtraction
yield
\begin{align*}
\dot p+\frac{\delta}{2}p=-\frac12K,
\qquad
\dot w-\frac{\delta}{2}w=-\frac12K.
\end{align*}
Solving with $p(0)=w(0)=0$ gives
\begin{align*}
p(t)=-\frac{K_0t}{2}e^{-\delta t/2},
\qquad
w(t)=-\frac{K_0}{\delta}
\sinh\!\left(\frac{\delta t}{2}\right).
\end{align*}
Since
$\partial_\rho F(t,1)=-K(t)u(t)+(T-S-K(t))v(t)$,
differentiation under the fixed-horizon integral yields
\begin{align}
\mathcal I'(1)
&=\int_0^{T^*}
\bigl[-K(t)u(t)+(T-S-K(t))v(t)\bigr]\,dt,
\label{eq:interior-derivative}\\
\frac{\partial J_1^0}{\partial\rho}(1,1)
&=\mathcal I'(1)+\mathcal B'(1)=\mathcal I'(1).
\label{eq:derivative-target}
\end{align}

Substituting $u = (p + w)/2$ and $v = (p - w)/2$, the integrand becomes $\frac{T - S}{2}(p - w) - K p$. We need three integrals:
\begin{align*}
I_p \;:=\; \int_0^{T^*}\!p(t)\, dt, \qquad I_w \;:=\; \int_0^{T^*}\!w(t)\, dt, \qquad I_{Kp} \;:=\; \int_0^{T^*}\!K(t)\, p(t)\, dt.
\end{align*}

Firstly, using $p(t) = -(K_0 t/2)\, e^{-\delta t/2}$,
\begin{align*}
I_p \;&=\; \int_0^{T^*}\!\!-\frac{K_0\, t}{2}\, e^{-\delta t/2}\, dt =\; -\frac{K_0}{2}\int_0^{T^*}\!t\, e^{-\delta t/2}\, dt.
\end{align*}
Integration by parts with $u = t$, $dv = e^{-\delta t/2}\, dt$ (so $du = dt$ and $v = -(2/\delta) e^{-\delta t/2}$) gives
\begin{align*}
\int_0^{T^*}\!t\, e^{-\delta t/2}\, dt \;&=\; \left[-\frac{2 t}{\delta}\, e^{-\delta t/2}\right]_0^{T^*} + \frac{2}{\delta}\int_0^{T^*}\!e^{-\delta t/2}\, dt \\[4pt]
&=\; -\frac{2 T^*}{\delta}\, e^{-\delta T^*/2} + \frac{2}{\delta}\!\left[-\frac{2}{\delta}\, e^{-\delta t/2}\right]_0^{T^*} \\[4pt]
&=\; -\frac{2 T^*}{\delta}\, e^{-\delta T^*/2} + \frac{4}{\delta^2}\bigl(1 - e^{-\delta T^*/2}\bigr).
\end{align*}
Substituting $e^{-\delta T^*/2} = \kappa/K_0$ and $T^* = (2/\delta)\ln(K_0/\kappa)$,
\begin{align*}
\int_0^{T^*}\!t\, e^{-\delta t/2}\, dt \;&=\; -\frac{2}{\delta}\cdot\frac{2}{\delta}\ln\!\left(\frac{K_0}{\kappa}\right)\cdot\frac{\kappa}{K_0} + \frac{4}{\delta^2}\!\left(1 - \frac{\kappa}{K_0}\right) \\[4pt]
&=\; -\frac{4\kappa}{\delta^2 K_0}\ln\!\left(\frac{K_0}{\kappa}\right) + \frac{4 x_0}{\delta K_0},
\end{align*}
using $K_0 - \kappa = \delta x_0$ in the last step. Multiplying by $-K_0/2$,
\begin{align*}
I_p \;&=\; -\frac{K_0}{2}\!\left[-\frac{4\kappa}{\delta^2 K_0}\ln\!\left(\frac{K_0}{\kappa}\right) + \frac{4 x_0}{\delta K_0}\right] \\[4pt]
&=\; \frac{2\kappa}{\delta^2}\ln\!\left(\frac{K_0}{\kappa}\right) - \frac{2 x_0}{\delta}.
\end{align*}

Secondly, using $w(t) = -(K_0/\delta) \sinh(\delta t/2)$,
\begin{align*}
I_w \;&=\; \int_0^{T^*}\!\!-\frac{K_0}{\delta}\sinh\!\left(\frac{\delta t}{2}\right) dt \\[4pt]
&=\; -\frac{2 K_0}{\delta^2}\!\left[\cosh\!\left(\frac{\delta T^*}{2}\right) - 1\right].
\end{align*}
Using $\cosh(\delta T^*/2) = (e^{\delta T^*/2} + e^{-\delta T^*/2})/2 = (K_0/\kappa + \kappa/K_0)/2$,
\begin{align*}
\cosh\!\left(\frac{\delta T^*}{2}\right) - 1 \;&=\; \frac{1}{2}\!\left(\frac{K_0}{\kappa} + \frac{\kappa}{K_0}\right) - 1 =\; \frac{\delta^2 x_0^2}{2 K_0 \kappa},
\end{align*}
again using $K_0 - \kappa = \delta x_0$. Hence,
\begin{align*}
I_w \;&=\; -\frac{2 K_0}{\delta^2}\cdot\frac{\delta^2 x_0^2}{2 K_0 \kappa} =\; -\frac{x_0^2}{\kappa}.
\end{align*}

Finally, using $K(t) p(t) = K_0\, e^{-\delta t/2} \cdot \bigl(-K_0 t/2\bigr)\, e^{-\delta t/2} = -(K_0^2 t/2)\, e^{-\delta t}$,
\begin{align*}
I_{Kp} \;&=\; \int_0^{T^*}\!\!-\frac{K_0^2 t}{2}\, e^{-\delta t}\, dt =\; -\frac{K_0^2}{2}\int_0^{T^*}\!t\, e^{-\delta t}\, dt.
\end{align*}
Integration by parts with $u = t$, $dv = e^{-\delta t}\, dt$ (so $du = dt$ and $v = -(1/\delta) e^{-\delta t}$) gives
\begin{align*}
\int_0^{T^*}\!t\, e^{-\delta t}\, dt \;&=\; \left[-\frac{t}{\delta}\, e^{-\delta t}\right]_0^{T^*} + \frac{1}{\delta}\int_0^{T^*}\!e^{-\delta t}\, dt \\[4pt]
&=\; -\frac{T^*}{\delta}\, e^{-\delta T^*} + \frac{1}{\delta}\!\left[-\frac{1}{\delta}\, e^{-\delta t}\right]_0^{T^*} \\
&=\; -\frac{T^*}{\delta}\, e^{-\delta T^*} + \frac{1}{\delta^2}\bigl(1 - e^{-\delta T^*}\bigr).
\end{align*}
Substituting $e^{-\delta T^*} = (e^{-\delta T^*/2})^2 = \kappa^2/K_0^2$ and $T^* = (2/\delta)\ln(K_0/\kappa)$,
\begin{align*}
\int_0^{T^*}\!t\, e^{-\delta t}\, dt \;&=\; -\frac{1}{\delta}\cdot\frac{2}{\delta}\ln\!\left(\frac{K_0}{\kappa}\right)\cdot\frac{\kappa^2}{K_0^2} + \frac{1}{\delta^2}\!\left(1 - \frac{\kappa^2}{K_0^2}\right) \\[4pt]
&=\; -\frac{2\kappa^2}{\delta^2 K_0^2}\ln\!\left(\frac{K_0}{\kappa}\right) + \frac{K_0^2 - \kappa^2}{\delta^2 K_0^2}.
\end{align*}
Multiplying by $-K_0^2/2$,
\begin{align*}
I_{Kp} \;&=\; -\frac{K_0^2}{2}\!\left[-\frac{2\kappa^2}{\delta^2 K_0^2}\ln\!\left(\frac{K_0}{\kappa}\right) + \frac{K_0^2 - \kappa^2}{\delta^2 K_0^2}\right] \\[4pt]
&=\; \frac{\kappa^2}{\delta^2}\ln\!\left(\frac{K_0}{\kappa}\right) - \frac{K_0^2 - \kappa^2}{2\delta^2}.
\end{align*}
Combining,
\begin{align*}
\mathcal{I}'(1) \;=\; \frac{T - S}{2}(I_p - I_w) - I_{Kp},
\end{align*}
with
\begin{align*}
I_p - I_w \;=\; -\frac{2 x_0}{\delta} + \frac{2\kappa}{\delta^2}\ln\!\left(\frac{K_0}{\kappa}\right) + \frac{x_0^2}{\kappa}.
\end{align*}
The logarithmic coefficient is $(T - S)\kappa/\delta^2 - \kappa^2/\delta^2 = (T - P)\kappa/\delta^2$, using $T - S - \kappa = T - P$. Expanding $K_0^2 - \kappa^2 = 2\delta\kappa x_0 + \delta^2 x_0^2$ and $T - S = (T - P) + \kappa$, the non-logarithmic terms simplify to $-(T - P) x_0/\delta + (T - P) x_0^2/(2\kappa) + x_0^2$. Hence
\begin{align*}
\frac{\partial J^0_1}{\partial \rho}(1, 1) \;=\; \mathcal{I}'(1) + \mathcal{B}'(1) \;=\; (T - P)\!\left[\frac{x_0^2}{2 \kappa} - \frac{x_0}{\delta} + \frac{\kappa}{\delta^2}\ln\!\left(1 + \frac{\delta x_0}{\kappa}\right)\right] + x_0^2.
\end{align*}
Setting this to zero gives~\eqref{eq:FOC-undiscounted}.
\end{proof}

\begin{theorem}\label{thm:undiscounted-global-max}
    Suppose $\delta < 0$ and $x_0 = y_0 \in (0,1]$. Then $\rho \mapsto J^0_1(\rho, 1; x_0)$ admits at most one critical point on $(0,\infty)$. If such a critical point exists, then it is the unique strict global maximiser. In particular, if Equation~\eqref{eq:FOC-undiscounted} holds, then $\rho = 1$ is the unique strict global maximiser and  $\partial_{\rho\rho}J_1^0(1,1;x_0)<0$.
\end{theorem}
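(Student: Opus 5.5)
The plan is to make $J^0_1(\rho,1;x_0)$ essentially explicit by integrating the orbit in the phase plane, and then to prove a single-crossing property for $\partial_\rho J^0_1$. Before absorption, $\dot x=-\tfrac{\rho}{2}(\kappa+\delta y)$ and $\dot y=-\tfrac12(\kappa+\delta x)$. Hence $(\kappa+\delta x)\,dx=\rho(\kappa+\delta y)\,dy$ along the orbit. With $\varphi(s):=\kappa s+\tfrac{\delta}{2}s^2$, the orbit is
\begin{align*}
\varphi(x)-\varphi(x_0)=\rho\bigl(\varphi(y)-\varphi(x_0)\bigr).
\end{align*}
Since $\varphi'(s)=\kappa+\delta s\ge \kappa+\delta>0$ on $[0,1]$, $\varphi$ is a strictly increasing bijection, so $x=\varphi^{-1}\bigl(\varphi(x_0)+\rho(\varphi(y)-\varphi(x_0))\bigr)$ is an explicit function of $y$ and $\rho$. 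I then use $y$ as the clock: $dt=-2\,dy/(\kappa+\delta x)$. This turns the interior part of $J^0_1$ into
\begin{align*}
\int_{y_{\rm end}}^{x_0}\frac{2\,\Delta u_1(x(y),y)}{\kappa+\delta x(y)}\,dy .
\end{align*}
For $\rho\le1$, $y$ hits zero first (at which point $x>0$); for $\rho\ge1$, $x$ hits zero first, at $y_{\rm end}=\varphi^{-1}\bigl(\varphi(x_0)(1-1/\rho)\bigr)$.

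The boundary phase is elementary. The surviving coordinate decays linearly at rate $\rho\kappa/2$ (resp.\ $\kappa/2$), so it contributes $-x_1^2/\rho$ when $x_1$ is the residual $x$, and $(T-P)\,y_1^2/\kappa$ when $y_1$ is the residual $y$. This gives a closed-form expression for $J^0_1(\rho,1)$ on each of the two regimes $\rho<1$ and $\rho>1$. They glue in a $C^1$ way at $\rho=1$; this is consistent with $\mathcal B'(1)=0$ from the proof of Theorem~\ref{thm:undiscounted-FOC}.

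Next I differentiate in $\rho$ under the integral. The endpoint terms vanish or simplify because $\Delta u_1$ is evaluated where the orbit meets an axis, and there the boundary-phase formula matches. This gives an expression $D(\rho):=\partial_\rho J^0_1(\rho,1)$. The goal is to show that $D$ has the single-crossing property: there is a positive factor $m(\rho)$ (I would try $\rho^2$ or $\rho^2/\varphi(x_0)$ first) such that $m(\rho)D(\rho)$ is strictly decreasing on $(0,\infty)$. Given this, $D$ has at most one zero, and $D>0$ to its left and $D<0$ to its right. That zero is therefore the unique strict global maximiser.

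I will also check the endpoint behaviour. As $\rho\to0$, Firm~1 hardly moves and the direct gain from lowering $x$ dominates, so $D>0$. As $\rho\to\infty$, Firm~1 jumps to $0$ and then suffers the accelerated decline of $y$. This confirms that the sign pattern is the expected one, although existence of a zero is not needed for the statement. At $\rho=1$, strict monotonicity alone does not give $\partial_{\rho\rho}J^0_1<0$. So I aim for $\frac{d}{d\rho}\bigl(m D\bigr)<0$ strictly; then at a zero of $D$, $\partial_{\rho\rho}J^0_1=D'=(mD)'/m<0$.

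The main obstacle is the monotonicity of $mD$. The $\rho$-dependence enters through $\varphi^{-1}$ inside both the numerator $\Delta u_1$ and the weight $1/(\kappa+\delta x)$, and the regime switch at $\rho=1$ must be handled. My first attempt would be to change variables to $s=\varphi(y)$ and $\sigma=\varphi(x)$. In these variables the orbit is the straight line $\sigma-\varphi(x_0)=\rho\,(s-\varphi(x_0))$. Moreover, $\kappa+\delta x=\sqrt{\kappa^2+2\delta\sigma}$, so every quantity becomes a function of $(s,\rho)$ through this linear relation. I then hope to write $\partial_\rho(mD)$ as an integral of a single-signed integrand, using $\delta<0$ (which makes $\sqrt{\kappa^2+2\delta\sigma}$ concave in $\sigma$) and $T>R>P>S$. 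If that fails, the fallback is to treat $\rho<1$ and $\rho>1$ separately, showing $D>0$ on one side and monotone decrease on the other, using the sensitivity signs from Proposition~\ref{prop:sensitivity_delta_negative}.
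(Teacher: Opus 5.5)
Your architecture is the paper's: the invariant $\varphi(x)-\rho\varphi(y)=(1-\rho)\varphi(x_0)$, $y$ as a clock, the multiplier $m(\rho)=\rho^2$, and reading off $\partial_{\rho\rho}J^0_1<0$ from $G'=2\rho\,\partial_\rho J^0_1+\rho^2\partial_{\rho\rho}J^0_1$ with $G:=\rho^2\partial_\rho J^0_1$. However, strict decrease of $G$, the step that carries the whole theorem, is left as a hope. The route you sketch for it would not give a single-signed integrand. If you keep $y$ (or $s=\varphi(y)$) as the integration variable, $\partial_\rho$ of $\rho^2\,\frac{-\delta y(\varphi(y)-\varphi(x_0))}{(\kappa+\delta x(y,\rho))^3}$ splits into two pieces of opposite sign. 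The explicit $\rho^2$ gives a nonpositive term. The $\rho$-dependence of $x(y,\rho)$ gives the nonnegative term $3\delta^2\rho^2y(\varphi(y)-\varphi(x_0))^2/(\kappa+\delta x)^5$. For $\rho>1$ there is also a nonnegative lower-limit term. The pointwise sign genuinely changes with $\rho$ and $y$.

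Two ideas fix this. First, the own-price part of the payoff is an exact derivative on the whole path, boundary phase included: $-x(\kappa+\delta y)=\frac{2}{\rho}x\dot x=\frac1\rho\frac{d}{dt}x^2$. It therefore integrates to exactly $-x_0^2/\rho$, giving $J^0_1(\rho,1;x_0)=-x_0^2/\rho+(T-P)\int_0^{H'}y\,dt$. This is why $\rho^2$ is the right multiplier: it turns this piece into the constant $x_0^2$.

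Second, differentiate $2(T-P)\bigl[\int_0^{y_H}\frac{y}{\kappa}\,dy+\int_{y_H}^{x_0}\frac{y\,dy}{\kappa+\delta x(y,\rho)}\bigr]$; the endpoint terms at $y_H$ cancel. Then change variables back to Firm~1's own coordinate $x$ at fixed $\rho$, using $\varphi(y)-\varphi(x_0)=(\varphi(x)-\varphi(x_0))/\rho$ and $dy=\frac{\kappa+\delta x}{\rho(\kappa+\delta y)}\,dx$. All powers of $\rho$ cancel, and
\begin{align*}
G(\rho)=x_0^2+2(T-P)\int_{x_{\mathrm{end}}(\rho)}^{x_0}\frac{-\delta\,y(x,\rho)}{\kappa+\delta\,y(x,\rho)}\cdot\frac{\varphi(x)-\varphi(x_0)}{(\kappa+\delta x)^2}\,dx,
\end{align*}
where $x_{\mathrm{end}}(\rho)$ is the value of $x$ when the interior phase ends. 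Now $\rho$ enters only through $y(x,\rho)$ in the first factor and through the lower limit. The lower-limit contribution vanishes: $x_{\mathrm{end}}\equiv 0$ for $\rho\ge 1$, and $y=0$ at $x_{\mathrm{end}}$ for $\rho<1$. This also disposes of the regime switch. Since $\partial_\rho y(x,\rho)=\frac{\varphi(x_0)-\varphi(x)}{\rho^2(\kappa+\delta y)}$,
\begin{align*}
G'(\rho)=\frac{2(T-P)\,\delta\kappa}{\rho^2}\int_{x_{\mathrm{end}}(\rho)}^{x_0}\frac{\bigl(\varphi(x)-\varphi(x_0)\bigr)^2}{(\kappa+\delta x)^2\,(\kappa+\delta y(x,\rho))^3}\,dx<0.
\end{align*}
The square is the single-signed integrand you were looking for.

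Your fallback does not work either. $\partial_\rho J^0_1(1,1;x_0)=E(x_0)$ from \eqref{eq:E-def}, and this is negative for $x_0$ beyond its root whenever that root exists. So there is no side of $\rho=1$ on which $D>0$ for all $x_0$, and you would still need monotonicity on the remaining side. Nor can Proposition~\ref{prop:sensitivity_delta_negative} sign $D$. There, the direct effect of $\partial_{\gamma_1}x<0$ and the strategic effect of $\partial_{\gamma_1}y\le 0$ on Firm~1's payoff have opposite signs, which is exactly why the global computation above is needed.
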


\begin{proof}
    At some point in this proof, we will pass to considering the problem spatially. We first make sure that this is well defined. Note that the trajectory $(x(t),y(t))$ satisfies $\dot x/\dot y = \rho(\kappa+\delta y)/(\kappa+\delta x)$ on the interior. Setting $\psi(z) = \kappa z + \tfrac{\delta}{2}z^2$, so that $\psi'(z) = \kappa + \delta z$, we have
\begin{align*}
\frac{d}{dt}\bigl[\psi(x(t)) - \rho\,\psi(y(t))\bigr] = (\kappa+\delta x)\dot x - \rho(\kappa+\delta y)\dot y = 0,
\end{align*}
so $\psi(x(t)) - \rho\,\psi(y(t))$ is conserved along trajectories. Evaluating at $t=0$ with $x(0)=y(0)=x_0$ fixes the constant, giving $\psi(x(t))-\rho\,\psi(y(t)) = (1- \rho)\psi(x_0)$ in the interior. 

Since $\delta<0$ and $\kappa+\delta>0$,
$\psi'(z)=\kappa+\delta z\geq\kappa+\delta>0$ on $[0,x_0]$.
Let $\psi^{-1}\colon[0,\psi(x_0)]\to[0,x_0]$ denote the
inverse of this restriction. For $\rho>0$ and $y\in[0,x_0]$,
\begin{align*}
\rho\psi(y)+(1-\rho)\psi(x_0)
=
\psi(x_0)-\rho\bigl[\psi(x_0)-\psi(y)\bigr]
\leq\psi(x_0).
\end{align*}
Thus the expression
\begin{align*}
x(y,\rho)
:=
\psi^{-1}\!\left(
\max\bigl\{0,\rho\psi(y)+(1-\rho)\psi(x_0)\bigr\}
\right)
\end{align*}
is well defined.

This is the projected spatial trajectory up to the first
time $y$ reaches zero. The value $x(0,\rho)$ is
taken at the first absorption of $y$; any subsequent
segment along $y=0$ contributes nothing to the integral
of $y(t)$ used below. 

We can now proceed by considering:
\begin{align*}
    J_1^0(\rho) = \int_0^{H'} \big[ u_1(x(t), y(t)) - P \big] \, dt = \int_0^{H'} \big[ -x(t)(\kappa + \delta y(t)) + (T-P)y(t) \big] \, dt,
\end{align*}
where $\kappa = P-S > 0$, and $H'$ is the time when both coordinates are absorbed at zero. 

Recall that, while $x(t) > 0$, the dynamics for $x$ are $\dot{x}(t) = -\frac{\rho}{2}(\kappa + \delta y(t))$ regardless of whether $y(t) =0 $ or not. Therefore, $-x(t)(\kappa + \delta y(t)) = \frac{2}{\rho} x(t)\dot{x}(t) = \frac{1}{\rho}\frac{d}{dt}(x^2(t))$. Furthermore, if $x(t) = 0$, then we still have $-x(t)(\kappa + \delta y(t)) = \frac{1}{\rho}\frac{d}{dt}(x^2(t))$, and so this equality holds for all $t \in [0, H']$. Hence, 
\begin{align*}
    \int_0^{H'} -x(t)(\kappa + \delta y(t)) \, dt &= \frac{1}{\rho}\int_0^{H'} \frac{d}{dt}(x^2(t))  \, dt =\frac{1}{\rho} \big[ x(H')^2 - x(0)^2 \big] = -\frac{x_0^2}{\rho}.
\end{align*}
Therefore, 
\begin{align*}
    J_1^0(\rho)  = -\frac{x_0^2}{\rho} + (T-P)\int_0^{H'} y(t) dt.
\end{align*}

For the term $\int_0^{H'} y(t) \, dt$, notice that once $y(t)$ hits $0$ at $T_y \le H'$, the integrand vanishes. Thus, we only integrate up to $T_y$. During $t \in [0, T_y)$, $y(t) > 0$ and $\dot{y}(t) = -\frac{1}{2}(\kappa + \delta x(t))$. Changing variables to $y$ with $dt = \frac{-2}{\kappa + \delta x} dy$ we integrate from $x_0$ down to $0$ and so change the sign of the differential:
\begin{align*}
    \int_0^{H'} y(t) \, dt = 2 \int_0^{x_0} \frac{y}{\kappa + \delta x(y, \rho)} \, dy.
\end{align*}

Define
\begin{align*}
y_H(\rho):=
\begin{cases}
0, & 0<\rho\leq1,\\[2pt]
\psi^{-1}\!\left((1-\rho^{-1})\psi(x_0)\right),
& \rho>1.
\end{cases}
\end{align*}
For $\rho>1$, this is the unique value of $y$ at which
$x$ is first absorbed. Splitting the integral where the
projection binds gives:
\begin{align*}
    J_1^0(\rho) = -\frac{x_0^2}{\rho} + 2(T-P) \left[ \int_0^{y_H(\rho)} \frac{y}{\kappa} \, dy + \int_{y_H(\rho)}^{x_0} \frac{y}{\kappa + \delta x(y, \rho)} \, dy \right].
\end{align*}
We differentiate with respect to $\rho$. By Leibniz's rule, the boundary evaluations at $y_H(\rho)$ cancel since $\kappa + \delta x(y_H(\rho), \rho) = \kappa + 0 = \kappa$. Therefore,
\begin{align*}
    \frac{d}{d\rho} J_1^0(\rho) = \frac{x_0^2}{\rho^2} + 2(T-P) \int_{y_H(\rho)}^{x_0} \frac{-\delta y}{(\kappa + \delta x)^2} \frac{\partial x}{\partial \rho} \, dy.
\end{align*}
Differentiating $\psi(x(y,\rho)) = \rho\,\psi(y) + (1-\rho)\psi(x_0)$ with respect to $\rho$ gives $\psi'(x) \frac{\partial x}{\partial \rho} = \psi(y) - \psi(x_0)$. Since $\psi'(z) = \kappa + \delta z$, this implies $\frac{\partial x}{\partial \rho} = \frac{\psi(y) - \psi(x_0)}{\kappa + \delta x}$. Substituting this yields:
\begin{align*}
    \frac{d}{d\rho} J_1^0(\rho) = \frac{x_0^2}{\rho^2} + 2(T-P) \int_{y_H(\rho)}^{x_0} \frac{-\delta y (\psi(y) - \psi(x_0))}{(\kappa + \delta x(y,\rho))^3} \, dy.
\end{align*}

To prove global optimality, we define the scaled derivative $G(\rho) = \rho^2 \frac{d}{d\rho} J_1^0(\rho)$ and show it is strictly decreasing. In the integral of $G(\rho)$, we perform a continuous change of variables back to $x = x(y, \rho)$. Holding $\rho$ fixed, the invariant relation gives $\psi(y) = \frac{1}{\rho}\psi(x) + (1-\frac{1}{\rho})\psi(x_0)$, yielding $\psi'(y)dy = \frac{1}{\rho}\psi'(x)dx$, which means $dy = \frac{\kappa + \delta x}{\rho(\kappa + \delta y)}dx$. Furthermore, $\psi(y) - \psi(x_0) = \frac{1}{\rho}(\psi(x) - \psi(x_0))$.

The integration limits $y \in [y_H(\rho), x_0]$ map to $x \in [x_{end}(\rho), x_0]$, where $x_{end}(\rho) = 0$ if $\rho \ge 1$, and $\psi(x_{end}) = (1-\rho)\psi(x_0)$ if $\rho \le 1$. Applying this transformation, the scaling factors of $\rho$ cancel out:
\begin{align*} 
G(\rho) &= x_0^2 + 2(T-P) \int_{x_{end}(\rho)}^{x_0} \frac{-\delta y}{(\kappa + \delta x)^3} \left(\frac{\psi(x) - \psi(x_0)}{\rho}\right) \rho^2 \left(\frac{\kappa + \delta x}{\rho(\kappa + \delta y)}\right) dx \nonumber \\
&= x_0^2 + 2(T-P) \int_{x_{end}(\rho)}^{x_0} \underbrace{\left( \frac{-\delta y(x, \rho)}{\kappa + \delta y(x, \rho)} \right)}_{:= A(y(x, \rho))} \underbrace{\left( \frac{\psi(x) - \psi(x_0)}{(\kappa + \delta x)^2} \right)}_{:= B(x)} \, dx.
\end{align*}
Applying Leibniz's rule to $G$,
\begin{align*}
G'(\rho)=2(T-P)\left[-x_{end}'(\rho)\,A(y(x_{end}(\rho),\rho)) B(x_{end}(\rho))+\int_{x_{end}(\rho)}^{x_0}A'(y(x,\rho))\,\partial_\rho y(x,\rho)\,B(x)\,dx\right].
\end{align*}

For $\rho \geq 1$, Firm~$1$ reaches $x = 0$ no later than Firm~$2$ reaches $y = 0$, so the interior phase covers all of $x \in (0, x_0]$. Hence $x_{end}(\rho) \equiv 0$ on $[1, \infty)$, and in particular $x_{end}'(\rho) = 0$ on $(1, \infty)$. For $\rho \in (0, 1)$, Firm~$2$ reaches $y = 0$ before $x$ does, and the interior phase terminates at $x_{end}(\rho) > 0$, the value of $x$ at the instant $y$ vanishes. By construction $y(x_{end}(\rho), \rho) = 0$, so $A(y(x_{end}(\rho), \rho)) = 0$. 

Therefore, in either case, we have $-x_{end}'(\rho)\,A(y(x_{end}(\rho),\rho)) B(x_{end}(\rho)) =0$, and so:
\begin{align*}
G'(\rho)&=2(T-P)\int_{x_{end}(\rho)}^{x_0}A'(y(x,\rho))\,\partial_\rho y(x,\rho)\,B(x)\,dx \\
&= 2(T-P)\int_{x_{end}(\rho)}^{x_0} \frac{-\delta\kappa}{(\kappa+\delta y(x,\rho))^2}\cdot\frac{\psi(x_0)-\psi(x)}{\rho^2(\kappa+\delta y(x,\rho))}\cdot\frac{\psi(x)-\psi(x_0)}{(\kappa+\delta x)^2}\,dx \\
&= \frac{2(T-P)\delta \kappa}{\rho^2} \int_{x_{end}(\rho)}^{x_0} \frac{\bigl(\psi(x) - \psi(x_0)\bigr)^2}{(\kappa + \delta x)^2(\kappa + \delta y(x, \rho))^3}\,dx,
\end{align*}
Every factor in this expression is positive except $\delta$: the constant $2(T-P)/\rho^2 > 0$ since $T > P$; the numerator $(\psi(x)-\psi(x_0))^2 \ge 0$ as a square, and is strictly positive for $x < x_0$ since $\psi$ is strictly increasing; the denominator $(\kappa+\delta x)^2(\kappa+\delta y(x,\rho))^3 > 0$ because $\kappa+\delta z \ge \kappa+\delta = T-R > 0$ for all $z \in [0,1]$; and $\kappa > 0$. Hence the integral has the sign of $\delta < 0$, and is strictly negative whenever $x_{end}(\rho) < x_0$, giving $G'(\rho) < 0$. Therefore, as $x_0 = y_0 > 0$, $G(\rho)$ is strictly monotonically decreasing in $\rho \in (0, \infty)$. 

Since $G$ is strictly decreasing on $(0, \infty)$, it has at most one zero, and hence $\rho \mapsto J_1^0(\rho, 1; x_0)$ has at most one critical point on $(0, \infty)$. When such a critical point $\rho^*$ exists, $G(\rho) > 0$ on $(0, \rho^*)$ and $G(\rho) < 0$ on $(\rho^*, \infty)$, so $J_1^0$ is strictly increasing on $(0, \rho^*)$ and strictly decreasing on $(\rho^*, \infty)$; thus $\rho^*$ is the unique strict global maximiser. In particular, if Equation~\eqref{eq:FOC-undiscounted} holds, then $\rho=1$ is a critical point of $\rho \mapsto J^0_1(\rho, 1; x_0)$, and so it is the unique strict global maximiser of $J_1^0$ on $(0,\infty)$.

Finally, we record the second-order information at a critical point. Differentiating $G(\rho)=\rho^2\,\partial_\rho J_1^0(\rho,1;x_0)$ gives $G'(\rho)=2\rho\,\partial_\rho J_1^0(\rho,1;x_0)+\rho^2\,\partial_{\rho\rho}J_1^0(\rho,1;x_0)$. At a critical point $\rho^*$ we have $\partial_\rho J_1^0(\rho^*,1;x_0)=0$, so $\partial_{\rho\rho}J_1^0(\rho^*,1;x_0)=G'(\rho^*)/(\rho^*)^2$. Since $x_{\mathrm{end}}(\rho)<x_0$ for every $\rho\in(0,\infty)$, the strict inequality $G'(\rho)<0$ holds throughout $(0,\infty)$, and therefore $\partial_{\rho\rho}J_1^0(\rho^*,1;x_0)<0$. When Equation~\eqref{eq:FOC-undiscounted} holds, $\rho^*=1$, giving $\partial_{\rho\rho}J_1^0(1,1;x_0)<0$.
\end{proof}

We now record the facts about the first-order map needed in the discounted analysis. Define
\begin{align}\label{eq:E-def}
E(x_0) := (T-P)\left[\frac{x_0^2}{2\kappa} - \frac{x_0}{\delta} + \frac{\kappa}{\delta^2}\ln\!\left(1 + \frac{\delta x_0}{\kappa}\right)\right] + x_0^2,
\end{align}
so that, by Theorem~\ref{thm:undiscounted-FOC}, $E(x_0) = \partial_\rho J_1^0(1,1;x_0)$, and $\rho=1$ is a critical point of the undiscounted game precisely when $E(x_0)=0$. It is convenient to factor out a positive constant. Writing $\kappa = P-S$ and recalling $\delta<0$, $T>P$, a direct computation gives
\begin{align}\label{eq:E-f-relation}
E(x_0) = \frac{\kappa(T-P)}{\delta^2}\, f(x_0), \qquad \frac{\kappa(T-P)}{\delta^2} > 0,
\end{align}
where
\begin{align}\label{eq:f-def}
f(x) := \frac{\delta x}{2(P-T)(P-S)}\left(2(T-P) - \delta x\,\frac{T+P-2S}{P-S}\right) - \ln\!\left(\frac{P-S}{\delta x + P-S}\right).
\end{align}
Thus $E$ and $f$ share the same sign and the same zeros on $(0,1]$, and the following lemma may be proved for $f$. We emphasise that the \emph{uniqueness} of the critical point of $J_1^0$ in $\rho$ for a fixed initial condition is already supplied by Theorem~\ref{thm:undiscounted-global-max} (strict monotonicity of $G(\rho)$); the lemma below instead controls the dependence of the first-order quantity $E$ on the initial condition $x_0$.

\begin{lemma}[Root and transversality of $E$ in $x_0$]\label{lem:E-root}
Fix $\delta<0$. The map $E$ in \eqref{eq:E-def} has at most one zero in $(0,1]$, and is strictly negative to the right of any such zero. In particular, if $\bar x\in(0,1)$ is a zero of $E$, then $E(x_0)<0$ for all $x_0\in(\bar x,1]$.
\end{lemma}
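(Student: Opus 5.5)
The plan is to study $E$ directly through its derivative in $x_0$. The key facts are that $E(0)=0$ and that $E'$ factors as $x_0$ times a strictly decreasing function, so $E$ is unimodal on $[0,1]$. It is easier to work with $E$ than with $f$ in \eqref{eq:f-def}, since by \eqref{eq:E-f-relation} the two share sign and zeros anyway. Note first that $E$ is smooth on $[0,1]$: since $\kappa+\delta x_0\ge\kappa+\delta=T-R>0$, the argument $1+\delta x_0/\kappa$ of the logarithm stays positive. Also $E(0)=0$, because the logarithm and all polynomial terms vanish at $x_0=0$.

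Next, differentiate \eqref{eq:E-def}. Write $a=\delta x_0/\kappa$. The derivative of the bracket is
\begin{align*}
\frac{x_0}{\kappa}-\frac{1}{\delta}+\frac{1}{\delta(1+a)}=\frac{x_0}{\kappa}-\frac{1}{\delta}\cdot\frac{a}{1+a}=\frac{x_0}{\kappa}-\frac{x_0}{\kappa+\delta x_0}.
\end{align*}
Hence
\begin{align*}
E'(x_0)=x_0\,g(x_0),\qquad g(x_0):=2+(T-P)\left(\frac{1}{\kappa}-\frac{1}{\kappa+\delta x_0}\right).
\end{align*}
Because $\delta<0$ and $T>P$, the quantity $\kappa+\delta x_0$ is positive and strictly decreasing in $x_0$. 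So $1/(\kappa+\delta x_0)$ is strictly increasing, and $g$ is strictly decreasing on $[0,1]$, with $g(0)=2>0$.

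Then I would run a case split on $g$.
\begin{itemize}
\item If $g>0$ on $(0,1]$, then $E'>0$ there. Since $E(0)=0$, this gives $E>0$ on $(0,1]$, so $E$ has no zero and the claim holds vacuously.
\item Otherwise $g$ has a unique zero $x_c\in(0,1]$. Then $E$ is strictly increasing on $[0,x_c]$, so $E>0$ on $(0,x_c]$, and $E$ is strictly decreasing on $[x_c,1]$.
\end{itemize}
In the second case, any zero $\bar x$ of $E$ in $(0,1]$ must therefore lie in $(x_c,1]$, and strict monotonic decrease there gives at most one such zero. The same monotonicity gives $E(x_0)<E(\bar x)=0$ for all $x_0\in(\bar x,1]$, which is exactly the statement.

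The only step that needs care is the algebra for $E'$, in particular the cancellation of the $-1/\delta$ and logarithmic-derivative terms into $-x_0/(\kappa+\delta x_0)$. Everything else is sign bookkeeping using $\delta<0$, $T>P$ and $\kappa+\delta>0$, which are the standing assumptions.
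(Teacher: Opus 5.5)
Your proof is correct, and it takes a lighter route than the paper's. The paper works with the rescaled function $f$ of \eqref{eq:f-def} and computes up to $f'''$. It then combines $f(0)=f'(0)=0$, $f''(0)>0$ and $f'''<0$ with a double application of Rolle's theorem. A zero $\bar x$ forces a critical point $x_1\in(0,\bar x)$, and then a zero $x_2\in(0,x_1)$ of the strictly decreasing $f''$. After $x_2$, $f'$ is strictly decreasing, so $f'<0$ on $(x_1,1]$ and $f<0$ on $(\bar x,1]$.

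You instead observe that the first derivative already factors as $E'(x_0)=x_0\,g(x_0)$, with $g$ strictly decreasing and $g(0)=2$. This gives the increasing-then-decreasing shape of $E$ on $[0,1]$ in one step. It uses only the monotonicity of $1/(\kappa+\delta x_0)$ rather than third-order information, and it needs neither the passage to $f$ nor any Rolle argument. The paper's cascade is the same template it later reuses for Lemma~\ref{lem:D_unique_root}, but here it is more machinery than needed.

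Your route also gives two things essentially for free:
\begin{itemize}
\item The turning point is explicit, $x_c=2\kappa^2/\bigl(|\delta|(T+P-2S)\bigr)$, so $x_c<1$ is a transparent necessary condition for a root in $(0,1]$.
\item Any root satisfies $E'(\bar x)=\bar x\,g(\bar x)<0$. This is the transversality in $x_0$ that the paper invokes in its implicit-function step. The paper's proof also yields it, via $f'<0$ on $(x_1,1]$. Given the lemma's title, you may want to record it explicitly.
\end{itemize}
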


\begin{proof}
By \eqref{eq:E-f-relation} it suffices to prove the statement for $f$ defined in \eqref{eq:f-def}. Note that $f(0) = 0$, and one verifies that
\begin{align*}
f'(x) &= -\frac{\delta}{P-S} - \frac{\delta^2 x(T+P-2S)}{(P-T)(P-S)^2} + \frac{\delta}{\delta x + P-S},\\[4pt]
f''(x) &= \frac{\delta^2(T+P-2S)}{(T-P)(P-S)^2} - \frac{\delta^2}{(\delta x + P - S)^2}, &
f'''(x) &= \frac{2\delta^3}{(\delta x + P - S)^3} < 0,
\end{align*}
where the sign of $f'''$ follows from $\delta < 0$ and $\delta x + P-S > 0$ for all $x \in [0,1]$. Therefore $f'(0) = 0$, $f''(0) = \frac{2\delta^2}{(T-P)(P-S)} > 0$, and $f''$ is strictly decreasing on $[0,1]$. In particular, $f(x), f'(x), f''(x) > 0$ for all sufficiently small $x > 0$.

Now suppose that there exists $\bar x \in (0,1]$ such that $f(\bar x) = 0$. Since $f(0) = 0$ and $f(\bar x) = 0$, Rolle's theorem yields $x_1 \in (0,\bar x)$ with $f'(x_1) = 0$. Applying Rolle's theorem again to $f'$, using $f'(0) = 0$ and $f'(x_1) = 0$, yields $x_2 \in (0, x_1)$ with $f''(x_2) = 0$. Since $f''$ is strictly decreasing, this zero is unique, and $f''(x) < 0$ for all $x > x_2$. Consequently $f'$ is strictly decreasing on $(x_2,1]$. Since $f'(x_1) = 0$ and $x_1 > x_2$, it follows that $f'(x) < 0$ for all $x \in (x_1,1]$, so $f$ is strictly decreasing on $(x_1,1]$. Since $f(\bar x) = 0$ and $\bar x \in (x_1,1]$, we conclude that $f(x) < 0$ for all $x \in (\bar x,1]$. This argument also shows the zero is unique: a second zero in $(0,1]$ would, by the same reasoning, force $f$ strictly negative immediately to the right of the smaller zero yet equal to zero at the larger, a contradiction.
\end{proof}

\begin{remark}\label{rem:E-root-exists}
Since $f(0)=f'(0)=0$ and $f''(0) = \tfrac{2\delta^2}{(T-P)(P-S)}>0$, we have $E(x_0)>0$ for all sufficiently small $x_0>0$. Hence, by the intermediate value theorem, $E$ has a (necessarily unique, by Lemma~\ref{lem:E-root}) root $\bar x\in(0,1)$ whenever $E(1)<0$. 
\end{remark}

Before proceeding to the next section, we will note the following result:

\begin{lemma}\label{lem:J0-x0-uniform}
Fix $\delta<0$. There is a constant $C_0>0$, depending only on the game parameters, such that for all $x_0\in(0,1]$ and all $\rho\ge 1$,
\begin{align*}
\Bigl|J_1^0(\rho,1;x_0)-\tfrac{T-P}{\kappa}x_0^2\Bigr|\le \frac{C_0}{\rho}.
\end{align*}
Consequently, for every $\bar x\in(0,1)$,
\begin{align*}
\sup_{\rho\ge 1}\bigl|J_1^0(\rho,1;x_0)-J_1^0(\rho,1;\bar x)\bigr|\xrightarrow{x_0\to\bar x}0.
\end{align*}
\end{lemma}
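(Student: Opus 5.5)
We work from the decomposition obtained in the proof of Theorem~\ref{thm:undiscounted-global-max}. For every $\rho>0$ and $x_0\in(0,1]$,
\begin{align*}
J_1^0(\rho,1;x_0)=-\frac{x_0^2}{\rho}+2(T-P)\left[\frac{y_H(\rho)^2}{2\kappa}+\int_{y_H(\rho)}^{x_0}\frac{y}{\kappa+\delta x(y,\rho)}\,dy\right].
\end{align*}
The first term is already $O(1/\rho)$, uniformly in $x_0\in(0,1]$. Subtracting $\frac{T-P}{\kappa}x_0^2=2(T-P)\int_0^{x_0}\frac{y}{\kappa}\,dy$, the bracket minus $x_0^2/(2\kappa)$ becomes
\begin{align*}
\int_{y_H(\rho)}^{x_0} y\left[\frac{1}{\kappa+\delta x(y,\rho)}-\frac1\kappa\right]dy=\int_{y_H(\rho)}^{x_0}\frac{-\delta\,y\,x(y,\rho)}{\kappa(\kappa+\delta x(y,\rho))}\,dy .
\end{align*}
Since $\kappa+\delta z\ge T-R>0$, the integrand is bounded in absolute value by $\frac{|\delta|}{\kappa(T-R)}\,y\,x(y,\rho)$. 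It therefore suffices to show that $\int_{y_H(\rho)}^{x_0} y\,x(y,\rho)\,dy\le C/\rho$ for $\rho\ge1$, with $C$ uniform in $x_0$.

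For this we use the invariant. On the interior segment,
\[
\psi(x(y,\rho))=\psi(x_0)-\rho\bigl[\psi(x_0)-\psi(y)\bigr]\ge 0 .
\]
Because $\psi'\in[T-R,\kappa]$ on $[0,1]$, we have
\[
x(y,\rho)\le \frac{\psi(x(y,\rho))}{T-R}\le\frac{\psi(x_0)-\rho(T-R)(x_0-y)}{T-R}\le \frac{\kappa}{T-R}x_0 ,
\]
and $x(y,\rho)$ vanishes once $x_0-y\ge \psi(x_0)/(\rho(T-R))$. Hence the integrand is supported on an interval of length at most $\kappa x_0/(\rho(T-R))$, and there it is bounded by $y\,x\le \kappa x_0^2/(T-R)$. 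This gives
\[
\int_{y_H(\rho)}^{x_0} y\,x(y,\rho)\,dy\le \frac{\kappa^2 x_0^3}{(T-R)^2\rho}\le \frac{\kappa^2}{(T-R)^2\rho}.
\]
Collecting the constants (together with the $x_0^2/\rho\le 1/\rho$ term) yields $C_0$.

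The one point requiring care is the use of the spatial representation for $\rho\ge1$. In that regime $x$ is absorbed first, at $y=y_H(\rho)$, and on $[0,y_H(\rho)]$ the speed of $y$ is $\kappa/2$, which produces the $y_H^2/(2\kappa)$ term. Both facts are already recorded in the proof of Theorem~\ref{thm:undiscounted-global-max}, and the $\max\{0,\cdot\}$ in the definition of $x(y,\rho)$ makes the integrand vanish below $y_H$. This justifies integrating over all of $[0,x_0]$ in the bound above. I regard this as the main (mild) obstacle: checking that the representation is valid uniformly in $x_0$, which it is, since none of the constants depend on $x_0$.

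For the consequence, write
\[
|J_1^0(\rho,1;x_0)-J_1^0(\rho,1;\bar x)|\le \frac{2C_0}{\rho}+\frac{T-P}{\kappa}\,|x_0^2-\bar x^2|.
\]
Given $\varepsilon>0$, choose $\rho_\varepsilon$ with $2C_0/\rho_\varepsilon<\varepsilon/2$. On $\rho\ge\rho_\varepsilon$ the difference is below $\varepsilon$ once $|x_0-\bar x|$ is small. On the compact range $\rho\in[1,\rho_\varepsilon]$ we use joint continuity of $(\rho,x_0)\mapsto J_1^0(\rho,1;x_0)$. This follows from the explicit formula above: $\psi^{-1}$ is continuous, $y_H$ is continuous, and the integrand is bounded, so dominated convergence applies. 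Joint continuity on a compact set gives uniform continuity, hence uniform convergence on $[1,\rho_\varepsilon]$. Combining the two ranges gives the claimed convergence of the supremum.
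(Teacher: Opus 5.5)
Your proof is correct and follows essentially the same route as the paper. Both arguments use the spatial representation from Theorem~\ref{thm:undiscounted-global-max} and the invariant together with $\psi'\ge T-R$, which shows that the interior phase occupies a $y$-interval of length at most $\psi(x_0)/(\rho(T-R))$. Both then handle the consequence by the same compact/tail split and joint continuity. The only difference is bookkeeping: you fold the $y_H^2$ term into the single difference integral $\int_{y_H}^{x_0} y\bigl[\frac{1}{\kappa+\delta x}-\frac{1}{\kappa}\bigr]dy$, whereas the paper bounds $|y_H^2-x_0^2|$ and the interior integral separately.
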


\begin{proof}
From the spatial representation in the proof of Theorem~\ref{thm:undiscounted-global-max}, for $\rho\ge 1$ Firm~$1$'s coordinate reaches zero first, at the value $y=y_H(\rho)$ determined by $\psi(y_H)=(1-1/\rho)\psi(x_0)$, and
\begin{align*}
J_1^0(\rho,1;x_0)=-\frac{x_0^2}{\rho}+(T-P)\left[\frac{y_H^2}{\kappa}+2\int_{y_H}^{x_0}\frac{y}{\kappa+\delta x(y,\rho)}\,dy\right].
\end{align*}
As $y$ is strictly decreasing on the interior and $x_0 > 0$, we have that $y_H \leq x_0$. We then apply the mean value theorem to $\psi$ on $[y_H,x_0]$ and using the bound $\psi'(z)=\kappa+\delta z\ge\kappa+\delta=T-R>0$ on $[0,1]$:
\begin{align*}
\psi(x_0)-\psi(y_H)=\psi'(\xi)\,(x_0-y_H)\ge (T-R)\,(x_0-y_H)\qquad\text{for some }\xi\in(y_H,x_0).
\end{align*}
so that, after substituting the defining relation $\psi(x_0)-\psi(y_H)=\tfrac{1}{\rho}\psi(x_0)$ and using $\psi(x_0)\le \psi(1)$ by monotonicity,
\begin{align*}
0\le x_0-y_H\le\frac{\psi(x_0)-\psi(y_H)}{T-R}=\frac{\psi(x_0)}{\rho\,(T-R)}\le\frac{\psi(1)}{(T-R)\,\rho}.
\end{align*}
Hence $|y_H^2-x_0^2|\le 2(x_0-y_H)\le \frac{2\psi(1)}{(T-R)\rho}$, and, using $\kappa+\delta x\ge T-R$ on the range of integration, $0\le 2\int_{y_H}^{x_0}\frac{y}{\kappa+\delta x}\,dy\le \frac{2}{T-R}\int_{y_H}^{x_0}y\,dy\le\frac{2(x_0-y_H)}{T-R}\le\frac{2\psi(1)}{(T-R)^2\rho}$. Letting $\rho\to\infty$, $y_H\to x_0$ and the integral vanishes, so $\lim_{\rho\to\infty}J_1^0(\rho,1;x_0)=\tfrac{T-P}{\kappa}x_0^2$. Combining the two $O(1/\rho)$ estimates with $x_0^2/\rho\le 1/\rho$ gives the stated bound with $C_0=1+2(T-P)\psi(1)\bigl[\tfrac{1}{\kappa(T-R)}+\tfrac{1}{(T-R)^2}\bigr]$.

For the second claim, fix $\varepsilon>0$ and a closed interval $[\underline x,1]\ni\bar x$ with $0<\underline x<\bar x$. Split $[1,\infty)=[1,R_0]\cup[R_0,\infty)$. On $[R_0,\infty)$ the first part gives $|J_1^0(\rho,1;x_0)-J_1^0(\rho,1;\bar x)|\le \tfrac{2C_0}{R_0}+\tfrac{T-P}{\kappa}|x_0^2-\bar x^2|$; choosing $R_0$ with $2C_0/R_0<\varepsilon/3$ and then $x_0$ close enough to $\bar x$ bounds this by $2\varepsilon/3$. On the compact $[1,R_0]$, the trajectory, the hitting times and the integrand depend continuously on $(\rho,x_0)$, so $(\rho,x_0)\mapsto J_1^0(\rho,1;x_0)$ is jointly continuous, hence uniformly continuous on $[1,R_0]\times[\underline x,1]$; thus $\sup_{\rho\in[1,R_0]}|J_1^0(\rho,1;x_0)-J_1^0(\rho,1;\bar x)|<\varepsilon/3$ for $x_0$ close to $\bar x$. The claim follows.
\end{proof}

\subsection{Moving to the Discounted Game}\label{sec:discounted}

In this section we establish that the symmetric profile $\rho = 1$, which is the unique strict global maximiser of the undiscounted excess payoff, continues to determine a strict global symmetric Nash equilibrium for all discount rates $r > 0$, provided the initial condition is chosen appropriately.

The discounted payoff $V_1(\rho, 1; r) = r \int_0^\infty e^{-rt} u_1(x(t; \rho), y(t; \rho))\, dt$ can be written in terms of the discounted excess payoff
\begin{align*}
V_1(\rho, 1; r) = P + r\, W(\rho, r), \qquad W(\rho, r) = \int_0^{H'(\rho)} e^{-rt} \bigl( u_1(x(t; \rho), y(t; \rho)) - P \bigr)\, dt,
\end{align*}
where $H'(\rho) = \max(T_x(\rho), T_y(\rho))$ is the time at which both coordinates are absorbed at zero. For any $r > 0$, maximising $V_1$ is equivalent to maximising $W$, and as $r \to 0^+$, $W(\rho, r)$ converges to the undiscounted excess payoff $J_1^0(\rho, 1; x_0)$.

The scaling identity reduces the two-parameter problem to a single effective discount rate.

\begin{lemma}[Effective discount rate]
\label{lem:effective_discount}
For any learning rates $\gamma_1, \gamma_2 > 0$ and discount rate $r > 0$, with $\rho = \gamma_1 / \gamma_2$ and effective discount rate $\tilde{r} = r / \gamma_2$,
\begin{align*}
V_1(\gamma_1, \gamma_2; r; x_0) = V_1(\rho, 1; \tilde{r}; x_0) = P + \tilde{r}\, W(\rho, \tilde{r}; x_0).
\end{align*}
\end{lemma}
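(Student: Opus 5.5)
The identity follows from Proposition~\ref{prop:scaling} applied with $\lambda = 1/\gamma_2$, combined with the decomposition of $V_1$ into the competitive baseline $P$ plus a discounted excess term.

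First, apply Proposition~\ref{prop:scaling} with $\lambda = 1/\gamma_2>0$ to the profile $(\gamma_1,\gamma_2)$ and discount rate $r$. This gives
\begin{align*}
V_1(\gamma_1,\gamma_2;r;x_0) = V_1(\lambda\gamma_1,\lambda\gamma_2;\lambda r;x_0) = V_1(\rho,1;\tilde r;x_0),
\end{align*}
with $\rho=\gamma_1/\gamma_2$ and $\tilde r = r/\gamma_2$. No further work is needed here, since Proposition~\ref{prop:scaling} holds for all positive $\gamma_1,\gamma_2,r,\lambda$ and every initial condition. Behind it is Lemma~\ref{lem:traj-rescaling}, which uses positive homogeneity of the projection onto the tangent cone, so the projected (boundary) phases are covered as well.

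Second, establish $V_1(\rho,1;\tilde r;x_0) = P + \tilde r\,W(\rho,\tilde r;x_0)$. Write $u_1 = P + (u_1-P)$ inside the defining integral \eqref{eq: V_def}. The constant part contributes $\tilde r\int_0^\infty e^{-\tilde r t}P\,dt = P$. For the remainder, note that for $t\ge H'(\rho)$ both coordinates sit at $0$, because the projected dynamics keep $0$ absorbing: $\dot x=0$ at $x=0$, and similarly for $y$. Hence $u_1(x(t),y(t)) = u_1(0,0)=P$ there, and the excess integrand vanishes beyond $H'(\rho)$. The remaining integral over $[0,\infty)$ therefore equals the integral over $[0,H'(\rho)]$, which is exactly $\tilde r\,W(\rho,\tilde r;x_0)$.

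The only point needing a moment's care is that $H'(\rho)<\infty$, so that the truncation is legitimate and $W$ is a finite integral. This follows because in the interior each velocity is bounded above by $-\tfrac{\min(\rho,1)}{2}(T-R)<0$ (since $\kappa+\delta z\ge T-R>0$ on $[0,1]$), and on a boundary face the remaining coordinate moves at rate $\rho\kappa/2$ or $\kappa/2$. Even without finiteness, absolute integrability of the bounded integrand against $e^{-\tilde r t}$ makes every step valid. I expect no genuine obstacle; the proof is a two-line chaining of Proposition~\ref{prop:scaling} with the baseline decomposition.
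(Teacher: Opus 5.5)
Your proposal is correct and follows the paper's own proof: apply Proposition~\ref{prop:scaling} with $\lambda=1/\gamma_2$ for the first equality, then split $u_1=P+(u_1-P)$ and use that the origin is absorbing, with $u_1(0,0)=P$, for the second; your check that $H'(\rho)<\infty$ fills in a detail the paper leaves implicit. One nit: the bound $\kappa+\delta z\ge T-R$ requires $\delta\le 0$, which is the standing assumption of this section, whereas in general one would use $\kappa+\delta z\ge\min(\kappa,T-R)>0$.
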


\begin{proof}
We have that $V_1(\lambda \gamma_1, \lambda \gamma_2; \lambda r; x_0) = V_1(\gamma_1, \gamma_2; r; x_0)$ for any $\lambda > 0$. Choosing $\lambda = 1/\gamma_2$ gives the first equality, and the second is the decomposition of $V_1(\rho, 1; \tilde{r})$ into its constant and excess parts.
\end{proof}

We now establish uniform convergence of $W$ to $J_1^0$, both on compact sets and on the upper half-line.

\begin{lemma}[Uniform convergence of $W$ on compact sets and the half-line]\label{lem:uniform-convergence-W}
Fix $x_0 = y_0 \in (0,1]$. For every compact interval $K = [\underline{\rho}, \overline{\rho}] \subset (0, \infty)$,
\begin{align*}
\sup_{\rho \in K} \bigl| W(\rho, r) - J_1^0(\rho, 1; x_0) \bigr| \;\xrightarrow{r \to 0^+}\; 0.
\end{align*}
Moreover, uniform convergence extends to the half-line $[1, \infty)$:
\begin{align*}
\sup_{\rho \ge 1} \bigl| W(\rho, r) - J_1^0(\rho, 1; x_0) \bigr| \;\xrightarrow{r \to 0^+}\; 0.
\end{align*}
\end{lemma}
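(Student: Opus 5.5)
The plan is to bound the difference directly. Write
\begin{align*}
W(\rho,r)-J_1^0(\rho,1;x_0)=\int_0^{H'(\rho)}\bigl(e^{-rt}-1\bigr)\bigl(u_1(x(t;\rho),y(t;\rho))-P\bigr)\,dt .
\end{align*}
The integrand is bounded by $M\,rt$, where $M:=\max\{|S-P|,|T-P|,|R-P|\}$, because $|u_1-P|\le M$ on $S$ and $0\le 1-e^{-rt}\le rt$. This gives
\begin{align*}
\sup_{\rho\in A}|W(\rho,r)-J_1^0(\rho,1;x_0)|\le \frac{M r}{2}\sup_{\rho\in A}H'(\rho)^2 .
\end{align*}
Both claims therefore reduce to bounding the total absorption time $H'(\rho)$ uniformly over the relevant set $A$ of $\rho$. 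No differentiability or continuity of the trajectory in $\rho$ is needed.

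To bound $H'$, we use the rate estimates in the proof of Theorem~\ref{thm:undiscounted-FOC} with $\gamma_2=1$, $\gamma_1=\rho$. Whenever $x(t)>0$ we have $\dot x=-\tfrac{\rho}{2}(\kappa+\delta y)\le-\tfrac{\rho}{2}(T-R)$, since $\kappa+\delta z\ge\kappa+\delta=T-R>0$ on $[0,1]$ (this also covers the case $y=0$, where the rate is $\rho\kappa/2\ge \rho(T-R)/2$). Likewise $\dot y\le-\tfrac12(T-R)$ whenever $y(t)>0$. Hence
\begin{align*}
T_x(\rho)\le \frac{2x_0}{\rho(T-R)},\qquad T_y(\rho)\le \frac{2x_0}{T-R},\qquad H'(\rho)\le \frac{2x_0}{T-R}\max\{1,\rho^{-1}\}.
\end{align*}
On a compact interval $K=[\underline\rho,\overline\rho]$, this gives $H'\le 2/((T-R)\min\{1,\underline\rho\})$, so the supremum difference is $O(r)$ and tends to $0$.

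On the half-line $[1,\infty)$ the same bound gives $H'(\rho)\le 2/(T-R)$ uniformly in $\rho\ge 1$. Firm~$2$'s absorption time does not depend on how large $\rho$ is, because $\gamma_2=1$ is fixed and its descent rate is bounded below independently of $x$. The estimate $\sup_{\rho\ge1}|W-J_1^0|\le 2Mr/(T-R)^2\to0$ then follows.

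The only step needing care, and the one I would expect to be the main obstacle, is justifying the speed lower bound on the boundary phases. After one coordinate hits $0$ and stays there under the projection, the other coordinate's velocity must stay bounded away from zero. From the projected dynamics, $\dot x=-\tfrac{\rho}{2}(\kappa+\delta y)$ as long as $x>0$, whatever the value of $y$, so the bound holds. Trajectories are absolutely continuous and monotone, so the hitting-time bounds follow by integrating. Note that uniformity fails on $(0,1]$ as $\rho\to0$, since $T_x$ blows up like $1/\rho$. This is why the statement restricts the half-line claim to $[1,\infty)$, and the bound above respects that restriction.
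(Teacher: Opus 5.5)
Your proof is correct and follows the same route as the paper. You bound $|W-J_1^0|$ by $M$ times an integral of $1-e^{-rt}$ over $[0,H'(\rho)]$, then reduce both claims to a uniform bound on the total absorption time via the velocity lower bound $T-R$.

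The differences are cosmetic. Using $1-e^{-rt}\le rt$ gives you an explicit $O(r)$ rate. Bounding $T_x$ and $T_y$ directly with a speed estimate valid in both the interior and boundary phases lets you skip the paper's split into an interior phase plus a boundary slide, and the observation that $x$ is absorbed first when $\rho\ge 1$.
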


\begin{proof}
Let $M := \sup_{(x,y) \in [0,1]^2} |u_1(x,y) - P| < \infty$. For any $\rho > 0$ and $r > 0$,
\begin{align*}
\bigl| W(\rho, r) - J_1^0(\rho, 1; x_0) \bigr|
&= \left| \int_0^{H'(\rho)} \bigl( e^{-rt} - 1 \bigr) (u_1(x(t;\rho),y(t;\rho)) - P)\, dt \right| \\
&\le M \int_0^{H'(\rho)} \bigl( 1 - e^{-rt} \bigr)\, dt \\
&\le M \cdot H'(\rho) \cdot \bigl( 1 - e^{-r H'(\rho)} \bigr),
\end{align*}
where the last bound uses $1 - e^{-rt} \le 1 - e^{-r H'(\rho)}$ for $t \in [0, H'(\rho)]$. It therefore suffices to show that $H'(\rho)$ is uniformly bounded on the relevant domain.

\textit{Bound on $T_y(\rho)$.} Since $\delta < 0$ and $x(t;\rho) \le x_0 \le 1$ on $[0, T_y(\rho)]$, we have 
\begin{align*}
\dot{y}(t;\rho) = -\frac{1}{2}\bigl( \kappa + \delta x(t;\rho) \bigr) \le -\frac{1}{2}(T - R),
\end{align*}
and integrating gives $T_y(\rho) \le \frac{2 y_0}{T - R} =: T_y^{\max},$ uniformly in $\rho > 0$.

\textit{Bound on $T_x(\rho)$.} On $[0, T_x(\rho)]$, $y(t;\rho) \le y_0 \le 1$ implies
\begin{align*}
\dot{x}(t;\rho) = -\frac{\rho}{2}\bigl( \kappa + \delta y(t;\rho) \bigr) \le -\frac{\rho}{2}(T - R) \le -\frac{\underline{\rho}}{2}(T - R),
\end{align*}
and integrating gives $T_x(\rho) \le \frac{2 x_0}{\underline{\rho}(T - R)} =: T_x^{\max},$ uniformly in $\rho \in K$.

After $\min(T_x(\rho), T_y(\rho))$, the trajectory enters the boundary phase. If $T_y(\rho) < T_x(\rho)$, then $y$ is absorbed and $\dot{x} = -\rho \kappa / 2$ thereafter, so the remaining time to absorb $x$ is at most $2 x_0 / (\underline{\rho} \kappa)$. Symmetrically, if $T_x(\rho) < T_y(\rho)$, the remaining time to absorb $y$ is at most $2 y_0 / \kappa$. Therefore,
\begin{align*}
   H'(\rho) \leq  \begin{cases}
        \frac{2y_0}{T-R} + \frac{2x_0}{\underline{\rho}(P - S)} & \text{ if } T_y(\rho) \leq T_x(\rho) \\
        \frac{2x_0}{\underline{\rho}(T - R)} + \frac{2y_0}{P-S} & \text{ if } T_y(\rho) > T_x(\rho)
    \end{cases}
\end{align*}
In either case, $H'(\rho) \leq H^{\max}$ where $H^{\max} \in \mathbb{R}$ depends only on $K$, $x_0$, $y_0$, and the game parameters. Combining,
\begin{align*}
\sup_{\rho \in K} \bigl| W(\rho, r) - J_1^0(\rho, 1; x_0) \bigr| \le M \cdot H^{\max} \cdot \bigl( 1 - e^{-r H^{\max}} \bigr) \xrightarrow{r \to 0^+} 0.
\end{align*}
For the half-line $[1, \infty)$, since $\rho \ge 1$ Firm~$1$'s coordinate is absorbed first, so the interior phase ends at $T_x(\rho) \le \tfrac{2x_0}{T-R}$ (taking $\underline{\rho} = 1$) and the subsequent slide, governed by $\dot{y} = -\kappa/2$, absorbs $y$ in time at most $\tfrac{2y_0}{\kappa} = \tfrac{2y_0}{P-S}$. Hence
\begin{align*}
H'(\rho) \leq \frac{2x_0}{T-R} + \frac{2 y_0}{P-S} =: H^{\max},
\end{align*}
and therefore
\begin{align*}
\sup_{\rho \ge 1} \bigl| W(\rho, r) - J_1^0(\rho, 1; x_0) \bigr| \le M \cdot H^{\max} \cdot \bigl( 1 - e^{-r H^{\max}} \bigr) \xrightarrow{r \to 0^+} 0,
\end{align*}
which completes the proof.
\end{proof}

\begin{remark}\label{rem:uniform-in-x0}
All constants in the proof of Lemma~\ref{lem:uniform-convergence-W} (namely $M$, the bounds $T_x^{\max}$, $T_y^{\max}$, and $H^{\max}$) depend on the game parameters and on $x_0=y_0$ only through continuous expressions. Hence for any $\bar x\in(0,1]$ they may be chosen uniformly for $x_0$ in a neighbourhood of $\bar x$, and the convergences stated in Lemma~\ref{lem:uniform-convergence-W} are locally uniform in $x_0=y_0$.
\end{remark}

To confine the maximiser away from the lower tail $\rho \to 0$, we use the following elementary bound.

\begin{lemma}\label{lem:g-lower-bound}
For all $z > 0$, the function $g(z) := \dfrac{z - 1 + e^{-z}}{z^2}$ satisfies
\begin{align*}
g(z) \ge \min\!\left(\frac{1}{4},\, \frac{1}{2z}\right).
\end{align*}
\end{lemma}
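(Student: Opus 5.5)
Set $h(z) := z - 1 + e^{-z}$, so that $g(z) = h(z)/z^2$, and split at $z = 2$. On $(0,2]$ we compare $g$ with $1/4$; on $[2,\infty)$ we compare it with $1/(2z)$. At $z=2$ the two candidate bounds coincide, since $\min(1/4, 1/(2z))$ equals $1/4$ for $z \le 2$ and $1/(2z)$ for $z \ge 2$. It is therefore enough to prove $g(z) \ge 1/4$ on $(0,2]$ and $g(z) \ge 1/(2z)$ on $[2,\infty)$.

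\emph{Large $z$.} The inequality $g(z) \ge 1/(2z)$ is equivalent to $z - 1 + e^{-z} \ge z/2$, that is, $z/2 - 1 + e^{-z} \ge 0$. For $z \ge 2$ this holds at once, because $z/2 - 1 \ge 0$ and $e^{-z} > 0$.

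\emph{Small $z$.} Use the integral representation
\begin{align*}
h(z) = \int_0^z \bigl(1 - e^{-s}\bigr)\,ds = \int_0^z\!\!\int_0^s e^{-u}\,du\,ds,
\end{align*}
which gives
\begin{align*}
g(z) = \int_0^1\!\!\int_0^{a} e^{-zb}\,db\,da = \int_0^1 (1-b)\,e^{-zb}\,db
\end{align*}
after rescaling $s = za$, $u = zb$. This shows $g$ is strictly decreasing in $z$, because the integrand decreases in $z$ for each $b > 0$. Hence on $(0,2]$ we have $g(z) \ge g(2) = (1 + e^{-2})/4 > 1/4$.

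The only point needing care is the monotonicity of $g$, which the integral representation settles. If one prefers a completely elementary route, the bound $e^{-z} \ge 1 - z + z^2/2 - z^3/6$ gives $g(z) \ge 1/2 - z/6 \ge 1/6$ on $(0,2]$, which is too weak near $z=2$. So the representation is the step I would rely on, with the explicit value $g(2) = (1+e^{-2})/4$ as the anchor. Combining the two ranges yields $g(z) \ge \min(1/4, 1/(2z))$ for all $z > 0$.
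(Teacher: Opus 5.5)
Your proof is correct, and its skeleton matches the paper's. You use the same split at $z=2$, the same treatment of $z\ge 2$ by discarding $e^{-z}$, and the same anchor $g(2)=(1+e^{-2})/4>1/4$ on $(0,2]$.

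The genuine difference is how you show that $g$ is monotone. The paper differentiates: it writes $g'(z)=\nu(z)/z^3$ with $\nu(z)=2-z-(2+z)e^{-z}$, and signs $\nu$ using $\nu(0)=\nu'(0)=0$ and $\nu''(z)=-ze^{-z}<0$. You instead rewrite $g(z)=\int_0^1(1-b)e^{-zb}\,db$; the rescaling and the swap of integration order are correct. Strict decrease is then immediate, because the integrand is positive and strictly decreasing in $z$ for every $b\in(0,1)$.

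Your route avoids the quotient-rule computation and the two-stage sign argument on $\nu$. It also gives more for free: $g$ is a Laplace transform of a nonnegative weight, so it is completely monotone on all of $(0,\infty)$, and $g(0^+)=1/2$ can be read off directly. The paper's route is more pedestrian but uses nothing beyond first- and second-derivative tests. Your aside about the cubic Taylor bound is accurate but not needed for the proof.
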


\begin{proof}
We divide the domain $(0, \infty)$ at the crossover point $z = 2$, where $\frac{1}{4} = \frac{1}{2z}$.

\textbf{Case 1: $z \ge 2$.} Dropping the exponential,
\begin{align*}
g(z) = \frac{z - 1 + e^{-z}}{z^2} > \frac{z - 1}{z^2}.
\end{align*}
For $z \ge 2$ we have $z - 1 \ge z/2$, so $g(z) > \frac{z/2}{z^2} = \frac{1}{2z}$.

\textbf{Case 2: $0 < z \le 2$.} We show $g$ is strictly decreasing on $(0,2]$. By the quotient rule,
\begin{align*}
g'(z) = \frac{(1 - e^{-z})z^2 - 2z(z - 1 + e^{-z})}{z^4} = \frac{2 - z - (2+z)e^{-z}}{z^3}.
\end{align*}
Let $\nu(z) = 2 - z - (2+z)e^{-z}$. Then
\begin{align*}
\nu(0) = 0, \qquad \nu'(z) = -1 + (1+z)e^{-z}, \qquad \nu'(0) = 0, \qquad \nu''(z) = -z e^{-z}.
\end{align*}
Since $\nu''(z) < 0$ for $z > 0$, $\nu'$ is strictly decreasing; with $\nu'(0) = 0$ this gives $\nu'(z) < 0$ and hence $\nu(z) < 0$ for $z > 0$. As $z^3 > 0$, we conclude $g'(z) < 0$, so $g$ is strictly decreasing. Therefore on $(0,2]$,
\begin{align*}
g(z) \ge g(2) = \frac{2 - 1 + e^{-2}}{4} = \frac{1}{4} + \frac{e^{-2}}{4} > \frac{1}{4}.
\end{align*}
Combining both cases gives $g(z) \ge \min\left(\frac{1}{4}, \frac{1}{2z}\right)$ for all $z > 0$.
\end{proof}

\begin{lemma}[Lower-tail penalty]\label{lem:lower-tail-penalty}
Fix $x_0 = y_0 \in (0, 1]$. There exist $\underline{\rho} \in (0, 1)$ and $\bar{r}_{\mathrm{low}} > 0$ such that
\begin{align*}
\sup_{\rho \in (0,\, \underline{\rho}]} W(\rho, r) \;<\; J_1^0(1, 1; x_0) - 1 \qquad \text{for all } r \in (0, \bar{r}_{\mathrm{low}}].
\end{align*}
\end{lemma}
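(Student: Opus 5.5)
The plan is to show that for small $\rho$ Firm~$1$ is stuck near $x_0$ for a time of order $1/\rho$ against a rival already absorbed at the low price. During that time it earns the sucker excess payoff $-\kappa x$, and Lemma~\ref{lem:g-lower-bound} gives a lower bound on the discounted size of this loss. Throughout I split $W(\rho,r)$ at the absorption time $T_y(\rho)$ of Firm~$2$ into an early phase $[0,T_y(\rho)]$ and a late phase $[T_y(\rho),H'(\rho)]$.

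\emph{Early phase.} From the proof of Lemma~\ref{lem:uniform-convergence-W}, $T_y(\rho)\le T_y^{\max}=2y_0/(T-R)$ uniformly in $\rho>0$. With $M=\sup|u_1-P|$, the early contribution to $W$ is therefore at most $C_1:=M\,T_y^{\max}$, independent of $\rho$ and $r$. I also need a lower bound on $x$ at the start of the late phase. Since $|\dot x|\le \rho\kappa/2$ on the interior (because $\kappa+\delta y\le\kappa$ when $\delta<0$), we get $x(T_y)\ge x_0-\rho\kappa T_y^{\max}/2\ge x_0/2$ for all $\rho\le\rho_1$, with $\rho_1$ explicit. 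For such $\rho$ we also have $T_y<T_x$, so the late phase is a genuine slide along $y=0$.

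\emph{Late phase.} On $[T_y,H']$ we have $y\equiv0$, $x(t)=x_1-\tfrac{\rho\kappa}{2}(t-T_y)$ with $x_1:=x(T_y)$, and the excess payoff is $-\kappa x(t)$. Set $\tau:=2x_1/(\rho\kappa)$, substitute $s=t-T_y$, and use $e^{-rT_y}\ge e^{-rT_y^{\max}}$. The contribution is then
\begin{align*}
-\kappa e^{-rT_y}\int_0^{\tau}e^{-rs}\tfrac{\rho\kappa}{2}(\tau-s)\,ds
=-e^{-rT_y}\tfrac{\rho\kappa^2}{2}\tau^2 g(r\tau)
=-e^{-rT_y}\,\kappa x_1\,\tau\, g(r\tau).
\end{align*}
Lemma~\ref{lem:g-lower-bound} bounds this above by $-e^{-rT_y^{\max}}\kappa x_1\min\!\bigl(\tau/4,\,1/(2r)\bigr)$. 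Using $x_1\ge x_0/2$ and $\tau\ge x_0/(\rho\kappa)$, we obtain
\begin{align*}
W(\rho,r)\le C_1-e^{-rT_y^{\max}}\tfrac{\kappa x_0}{2}\min\!\Bigl(\tfrac{x_0}{4\rho\kappa},\tfrac{1}{2r}\Bigr).
\end{align*}

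\emph{Choice of constants.} Let $L:=C_1-J_1^0(1,1;x_0)+1$, which is finite. First choose $\bar r_{\mathrm{low}}\le 1/T_y^{\max}$, so that the factor $e^{-rT_y^{\max}}\ge e^{-1}$, and small enough that $e^{-1}\kappa x_0/(4\bar r_{\mathrm{low}})>L$. Then choose $\underline\rho\le\rho_1$ with $e^{-1}x_0^2/(8\underline\rho)>L$. For $\rho\le\underline\rho$ and $r\le\bar r_{\mathrm{low}}$, both branches of the minimum give a penalty exceeding $L$. Hence $W(\rho,r)<J_1^0(1,1;x_0)-1$, uniformly in $\rho$ and $r$ as required.

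The main obstacle is the bookkeeping around the regime change at $T_y$. I need to confirm that $T_y<T_x$ for all small $\rho$, so that the late phase is the slide $\dot x=-\rho\kappa/2$ rather than something else. I also need to handle the interplay between the two branches of the minimum: for fixed small $r$ the penalty saturates at order $1/r$ and does not blow up as $\rho\to0$. This saturation is exactly why the statement restricts to $r\le\bar r_{\mathrm{low}}$, with $\bar r_{\mathrm{low}}$ chosen before $\underline\rho$.
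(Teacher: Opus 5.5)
Your proposal is correct and matches the paper's proof essentially step for step. The split at $T_y(\rho)$, the early-phase bound $M T_{\max}$, the exact evaluation of the sliding-phase loss via $g$ (using Lemma~\ref{lem:g-lower-bound} and $x(T_y)\ge x_0/2$), and the final choice of $\underline\rho$ and $\bar r_{\mathrm{low}}$ are all the same; the only cosmetic differences are your factor $e^{-1}$ in place of $1/2$, deriving $T_y<T_x$ from $x(T_y)\ge x_0/2$ rather than from $\rho\le 1$, and the fact that the two constants can be chosen independently rather than in a fixed order.
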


\begin{proof}
Let $T_{\max} := 2 y_0 / (T - R)$ be the uniform bound on $T_y(\rho)$ from Lemma~\ref{lem:uniform-convergence-W}, and let $M := \sup_{(x,y) \in [0,1]^2} |u_1(x,y) - P| < \infty$.

For $\rho \le 1$, $T_y(\rho)$ is reached before $T_x(\rho)$. On $[T_y(\rho), T_x(\rho)]$, $y(t) = 0$ and $u_1(x(t), 0) - P = -\kappa x(t)$, so
\begin{align*}
W(\rho, r) = \int_0^{T_y(\rho)} e^{-rt}\bigl(u_1(x(t), y(t)) - P\bigr)\, dt \;-\; \kappa \int_{T_y(\rho)}^{T_x(\rho)} e^{-rt} x(t)\, dt,
\end{align*}
where the first integral is bounded above by $M T_{\max}$.

On $[T_y(\rho), T_x(\rho)]$, $\dot{x} = -\rho \kappa / 2$, so $x(t) = x(T_y(\rho)) - \tfrac{\rho \kappa}{2}(t - T_y(\rho))$ and $T_x(\rho) - T_y(\rho) = 2 x(T_y(\rho)) / (\rho \kappa)$. Evaluating the integral exactly,
\begin{align*}
\kappa \int_{T_y(\rho)}^{T_x(\rho)} e^{-rt} x(t)\, dt = e^{-r T_y(\rho)}\, \frac{2 x(T_y(\rho))^2}{\rho}\, g\!\left(\frac{2 x(T_y(\rho))\, r}{\rho \kappa}\right),
\end{align*}
with $g$ as in Lemma~\ref{lem:g-lower-bound}. Since $\dot{x}(t) \ge -\rho \kappa / 2$ on $[0, T_y(\rho)]$ and $T_y(\rho) \le T_{\max}$, we have $x(T_y(\rho)) \ge x_0 - \rho \kappa T_{\max} / 2$. Choosing $\underline{\rho}_0 := \min\bigl(1,\, x_0 / (\kappa T_{\max})\bigr)$ gives $x(T_y(\rho)) \ge x_0 / 2$ for all $\rho \in (0, \underline{\rho}_0]$, and Lemma~\ref{lem:g-lower-bound} yields
\begin{align*}
\kappa \int_{T_y(\rho)}^{T_x(\rho)} e^{-rt} x(t)\, dt
\ge e^{-r T_{\max}}\, \frac{2 x(T_y(\rho))^2}{\rho} \cdot \min\!\left(\frac{1}{4},\, \frac{\rho \kappa}{4 x(T_y(\rho))\, r}\right)
\ge e^{-r T_{\max}}\, \min\!\left(\frac{x_0^2}{8 \rho},\, \frac{x_0\, \kappa}{4 r}\right).
\end{align*}
Hence, for $\rho \in (0, \underline{\rho}_0]$,
\begin{align*}
W(\rho, r) \;\le\; M T_{\max} \;-\; e^{-r T_{\max}}\, \min\!\left(\frac{x_0^2}{8 \rho},\, \frac{x_0\, \kappa}{4 r}\right).
\end{align*}
Pick $C := M T_{\max} + |J_1^0(1,1;x_0)| + 2$. Choose $\underline{\rho} \in (0, \underline{\rho}_0]$ with $x_0^2 / (8 \underline{\rho}) \ge 2 C$, and $\bar{r}_{\mathrm{low}} \in (0, 1]$ with $x_0\, \kappa / (4 \bar{r}_{\mathrm{low}}) \ge 2 C$ and $e^{-\bar{r}_{\mathrm{low}} T_{\max}} \ge 1/2$. Then for all $\rho \in (0, \underline{\rho}]$ and $r \in (0, \bar{r}_{\mathrm{low}}]$,
\begin{align*}
e^{-r T_{\max}}\, \min\!\left(\frac{x_0^2}{8 \rho},\, \frac{x_0\, \kappa}{4 r}\right) \;\ge\; \frac{1}{2} \cdot 2 C \;=\; C,
\end{align*}
and so $W(\rho, r) \le M T_{\max} - C \le -|J_1^0(1,1;x_0)| - 2 < J_1^0(1,1;x_0) - 1$.
\end{proof}

One final technical result is required: 

\begin{lemma}\label{lem:final}
Fix $\delta<0$ and $\bar x\in(0,1]$. There exist a neighbourhood $U'$ of $\rho=1$ in $(0,\infty)$, a neighbourhood $X'$ of $\bar x$ in $(0,1]$, and a constant $\bar r>0$ such that the following hold.
\begin{enumerate}
    \item[(i)] For every $(\tilde r,x_0)\in[0,\bar r)\times X'$ the map $\rho\mapsto W(\rho,\tilde r;x_0)$ is twice continuously differentiable on $U'$, and the maps $(\rho,\tilde r,x_0)\mapsto\partial_\rho W(\rho,\tilde r;x_0)$ and $(\rho,\tilde r,x_0)\mapsto\partial_{\rho\rho}W(\rho,\tilde r;x_0)$ are jointly continuous on $U'\times[0,\bar r)\times X'$.
    \item[(ii)] The map $(\tilde r,x_0)\mapsto\partial_\rho W(1,\tilde r;x_0)$ is continuously differentiable on $[0,\bar r)\times X'$, with
    \begin{align*}
        \partial_{\tilde r}\bigl[\partial_\rho W(1,\tilde r;x_0)\bigr]=-\int_0^{T^*(x_0)}t\,e^{-\tilde r t}\,\partial_\rho u_1(x(t;1,x_0),y(t;1,x_0))\,dt,
    \end{align*}
    where $T^*(x_0)=\tfrac{2}{\delta}\ln\!\bigl(1+\tfrac{\delta x_0}{\kappa}\bigr)$.
    \item[(iii)] For every $\rho\in U'$ and $x_0\in X'$, $\partial_{\rho\rho}W(\rho,0;x_0)=\partial_{\rho\rho}J_1^0(\rho,1;x_0)$.
\end{enumerate}
\end{lemma}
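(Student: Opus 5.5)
The plan is to localise everything near the symmetric profile, where both coordinates are absorbed at the same time $T^*(x_0)$. There I will reuse the decomposition from the proof of Theorem~\ref{thm:undiscounted-FOC}, now with discounting.

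First, for $\rho$ near $1$ and $x_0$ near $\bar x$, I will introduce the unprojected solution $(\widehat x,\widehat y)(t;\rho,x_0)$ of the affine system. Its dependence on $(t,\rho,x_0)$ is real-analytic because the vector field is affine. The first absorption time $H_1(\rho,x_0)$ is the minimum of the two zeros of $\widehat x$ and $\widehat y$. At $\rho=1$ both zeros are simple, with velocity $-K(T^*)/2=-\kappa/2\neq 0$. The implicit function theorem therefore gives smooth zero-crossing times $\tau_x(\rho,x_0)$ and $\tau_y(\rho,x_0)$. Then, exactly as in \eqref{eq:J0-decomp}, I will write
\[
W(\rho,\tilde r;x_0)=\int_0^{H_1}e^{-\tilde r t}F(t,\rho,x_0)\,dt+\mathcal B(\rho,\tilde r,x_0).
\]

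The obstacle is that $H_1=\min(\tau_x,\tau_y)$ is only Lipschitz, not $C^2$, across $\rho=1$. For (i) I will handle this by working separately on $\rho\ge1$ and $\rho\le1$. On each side the active hitting time is a single smooth $\tau$. The boundary phase is explicit: one coordinate is already zero and the other is linear in $t$ with slope $-\kappa/2$ or $-\rho\kappa/2$. Hence $\mathcal B$ is an explicit integral of an exponential times a linear function of $t$, over an interval whose endpoints are smooth in the parameters. Each one-sided piece is then $C^\infty$ in $(\rho,\tilde r,x_0)$, including at $\tilde r=0$.

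It remains to show that the one-sided first and second $\rho$-derivatives match at $\rho=1$. For the first derivative, the Leibniz boundary terms cancel. This happens because the integrand of the bulk term and that of $\mathcal B$ agree at $t=H_1$, and because the remaining coordinate's distance from zero is $O(|h|)$. For the second derivative, I will compute both one-sided expansions to order $h^2$ and check that they agree. I expect them to agree because $\psi(x)-\rho\psi(y)$ is conserved, which makes the two branches mirror images. This matching is the step I expect to be hardest.

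If the direct $h^2$ matching becomes messy, I will fall back on the spatial change of variables from Theorem~\ref{thm:undiscounted-global-max}. That representation gave $G(\rho)$ as a single formula valid on both sides; the same approach handles the $\tilde r=0$ case of (iii). For $\tilde r>0$ I will carry the factor $e^{-\tilde r t(y)}$ along, where $t(y)$ is a smooth function of $(y,\rho,x_0)$ obtained by integrating $dt=-2\,dy/(\kappa+\delta x(y,\rho))$. In this form the projection only enters through endpoints such as $y_H(\rho)$, at which the integrands vanish to first order. Joint continuity on $U'\times[0,\bar r)\times X'$ then follows from dominated convergence, since all integrands are bounded on compact parameter sets.

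Parts (ii) and (iii) follow from this representation. For (ii), evaluate at $\rho=1$, where the boundary phase has zero length and $H_1=T^*(x_0)$. Then $\partial_\rho W(1,\tilde r;x_0)=\int_0^{T^*(x_0)}e^{-\tilde r t}\,\partial_\rho u_1\,dt$, by the argument already used for $\mathcal I'(1)$ and $\mathcal B'(1)=0$. This expression is a smooth fixed-horizon integral with $T^*(x_0)$ smooth. Differentiating in $\tilde r$ under the integral gives the displayed formula, and continuity in $x_0$ follows from smooth dependence on initial data. For (iii), setting $\tilde r=0$ in the one-sided smooth representations gives exactly $J_1^0(\rho,1;x_0)$ identically in $\rho$. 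Their second derivatives therefore coincide on $U'$.
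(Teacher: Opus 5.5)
Your outline follows the paper's own route. You use the unprojected affine flow and implicit-function hitting times, and on $\rho\ge1$ and $\rho\le1$ you write separate smooth representations (a bulk integral plus an explicit slide phase). You then cancel the Leibniz boundary terms in $\partial_\rho W$ and read off (ii)--(iii) at $\rho=1$ and $\tilde r=0$.

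The gap is the one step that carries the content of (i): equality of the one-sided limits of $\partial_{\rho\rho}W$ at $\rho=1$, for every $(\tilde r,x_0)$. You only announce it, and the reason you give would not prove it.

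The conserved quantity $\psi(x)-\rho\psi(y)$ gives the symmetry $(\rho,x,y)\mapsto(1/\rho,y,x)$, combined with a time rescaling by $\rho$. That symmetry sends firm~1's payoff to firm~2's and changes the discount: one finds $W(1/\rho,\tilde r)=\rho\,W_2(\rho,\rho\tilde r)$, with $W_2$ firm~2's discounted excess payoff. So it relates the left branch of $W$ to a different functional, not to the right branch of $W$.

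The branches are in fact not mirror images. Write $h=\rho-1$, $u^*=u(T^*)$, $v^*=v(T^*)$, so that $u^*<v^*$. The slide phase contributes $e^{-\tilde rT^*}\tfrac{T-P}{\kappa}(v^*-u^*)^2h^2$ for $h>0$, where $y$ slides along $x=0$. For $h<0$ it contributes $-e^{-\tilde rT^*}(v^*-u^*)^2h^2$. The bulk corrections $\int_{T^*}^{H_1}e^{-\tilde rt}F\,dt$ also differ between the two sides; only the sums agree.

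The missing idea is first-order decoupling at the corner. Near $(T^*,\rho=1)$ each coordinate moves at speed $\kappa/2+O(|h|)$ whether or not the other has been absorbed, and $-\delta xy=O(h^2)$ there. Hence, up to $O(|h|^3)$, the deviation of $W$ from the smooth fixed-horizon integral $\int_0^{T^*}e^{-\tilde rt}F(t,\rho)\,dt$ is simple. It equals $e^{-\tilde rT^*}$ times $-\kappa$ times a triangle of area $(hu^*)^2/\kappa$, plus $(T-P)$ times a triangle of area $(hv^*)^2/\kappa$, whatever the sign of $h$. Both one-sided expansions therefore share the quadratic term $e^{-\tilde rT^*}\bigl(-u^{*2}+\tfrac{T-P}{\kappa}v^{*2}\bigr)h^2$.

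In the paper this appears as the identities $\partial_\rho T_2^+(1,x_0)=\partial_\rho T_y(1,x_0)$ and $\partial_\rho T_2^-(1,x_0)=\partial_\rho T_x(1,x_0)$. To first order, the slide ends exactly when the unprojected coordinate would have hit zero. These identities make both one-sided limits equal to $\int_0^{T^*}e^{-\tilde rt}\partial_{\rho\rho}F_1\,dt+e^{-\tilde rT^*}\bigl[(S-P)u^*\partial_\rho T_x+(T-P)v^*\partial_\rho T_y\bigr]$. This is a continuous function of $(\tilde r,x_0)$, which also yields the joint continuity by gluing. You need to establish this explicitly.

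Your spatial fallback does not bypass the issue for $\tilde r>0$, for three reasons:
\begin{itemize}
\item With discounting, the $x$-part no longer telescopes to $-x_0^2/\rho$.
\item The $y$-parametrisation does not cover the slide along $y=0$ when $\rho<1$.
\item Even at $\tilde r=0$, the boundary terms at $y_H(\rho)$ cancel because the two integrands agree there, not because they vanish, and $y_H$ itself has a kink at $\rho=1$.
\end{itemize}
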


\begin{proof}[Proof of Lemma~\ref{lem:final}]
We first define the interior trajectories without boundary projection. Let $x(t; \rho, x_0)$ and $y(t; \rho, x_0)$ be the unique solutions to the unprojected ODE system:
\begin{align*}
    \dot{x} &= -\frac{\rho}{2}(\kappa + \delta y), &
    \dot{y} = -\frac{1}{2}(\kappa + \delta x),
\end{align*}
with initial conditions $x(0) = y(0) = x_0$. The vector field is smooth and the initial datum enters smoothly, so $(t,\rho,x_0)\mapsto x(t;\rho,x_0)$ and $(t,\rho,x_0)\mapsto y(t;\rho,x_0)$ are jointly smooth. At $\rho=1$ the symmetry of the initial condition gives $x(t;1,x_0)=y(t;1,x_0)$ for all $t\ge 0$, and this common coordinate reaches zero at $T^*(x_0)=\tfrac{2}{\delta}\ln(1+\delta x_0/\kappa)>0$. Write $u(t;x_0):=\partial_\rho x(t;1,x_0)$ and $v(t;x_0):=\partial_\rho y(t;1,x_0)$ for the first-order sensitivities at the symmetric profile.

Let $T_x(\rho,x_0)$ and $T_y(\rho,x_0)$ solve $x(T_x;\rho,x_0)=0$ and $y(T_y;\rho,x_0)=0$. Since $\dot x(T^*(\bar x);1,\bar x)=\dot y(T^*(\bar x);1,\bar x)=-\kappa/2\neq 0$, the implicit function theorem provides neighbourhoods $U'$ of $\rho=1$ in $(0,\infty)$ and $X'$ of $\bar x$ in $(0,1]$ on which $T_x$ and $T_y$ are well-defined and jointly smooth in $(\rho,x_0)$. As $T_x(1,x_0)=T_y(1,x_0)=T^*(x_0)$ for every $x_0\in X'$, differentiating the defining identities in $\rho$ at $\rho=1$ gives:
\begin{align*}
\partial_\rho T_x(1,x_0)=\frac{2}{\kappa}u(T^*(x_0);x_0), \qquad \partial_\rho T_y(1,x_0)=\frac{2}{\kappa}v(T^*(x_0);x_0).
\end{align*}
The computation in the proof of Theorem~\ref{thm:undiscounted-FOC} gives $u(T^*;x_0)-v(T^*;x_0)=-(K_0/\delta)\sinh(\delta T^*/2)$ with $K_0=\kappa+\delta x_0$, which is strictly negative for every $x_0\in(0,1]$ since $\delta<0$. Hence $\partial_\rho T_x(1,x_0)<\partial_\rho T_y(1,x_0)$ throughout $X'$, and after shrinking $U'$ we have $T_x(\rho,x_0)<T_y(\rho,x_0)$ for $\rho>1$ and $T_y(\rho,x_0)<T_x(\rho,x_0)$ for $\rho<1$, uniformly in $x_0\in X'$.

We can now express $W(\rho, \tilde{r};x_0)$ piecewise on $U'$. For $\rho\ge 1$ Firm~$1$'s coordinate is absorbed first; thereafter $x\equiv 0$ and $y$ decays at rate $\kappa/2$ until it vanishes at $T_2^+(\rho,x_0)=T_x(\rho,x_0)+\tfrac{2}{\kappa}y(T_x(\rho,x_0);\rho,x_0)$. Thus,
\begin{align*}
W_+(\rho,\tilde r;x_0)=\int_0^{T_x(\rho,x_0)} e^{-\tilde r t}F_1(t,\rho;x_0)\,dt+\int_{T_x(\rho,x_0)}^{T_2^+(\rho,x_0)} e^{-\tilde r t}F_2(t,\rho;x_0)\,dt,
\end{align*}
where $F_1(t,\rho;x_0)=(S-P)x(t;\rho,x_0)+(T-P)y(t;\rho,x_0)-\delta x(t;\rho,x_0)y(t;\rho,x_0)$ and $F_2(t,\rho;x_0)=(T-P)\bigl[y(T_x(\rho,x_0);\rho,x_0)-\tfrac{\kappa}{2}(t-T_x(\rho,x_0))\bigr]$. 

For $\rho \le 1$, Firm~$2$'s coordinate is absorbed first, after which $x$ decays at rate $\rho\kappa/2$ until $T_2^-(\rho,x_0)=T_y(\rho,x_0)+\tfrac{2}{\rho\kappa}x(T_y(\rho,x_0);\rho,x_0)$, giving
\begin{align*}
W_-(\rho,\tilde r;x_0)=\int_0^{T_y(\rho,x_0)} e^{-\tilde r t}F_1(t,\rho;x_0)\,dt+\int_{T_y(\rho,x_0)}^{T_2^-(\rho,x_0)} e^{-\tilde r t}F_3(t,\rho;x_0)\,dt,
\end{align*}
with $F_3(t,\rho;x_0)=(S-P)\bigl[x(T_y(\rho,x_0);\rho,x_0)-\tfrac{\rho\kappa}{2}(t-T_y(\rho,x_0))\bigr]$. All endpoints and integrands are jointly smooth in $(\rho,x_0)$ on $U'\times X'$.

Since all bounding functions and integrands are smooth on $U'$, we can repeatedly apply Leibniz's rule. For $\rho > 1$:
\begin{align*}
    \partial_\rho W_+(\rho,\tilde r;x_0) &= \int_0^{T_x(\rho, x_0)} e^{-\tilde{r}t} \partial_\rho F_1(t,\rho;x_0) \, dt + e^{-\tilde{r}T_x(\rho,x_0)} F_1(T_x(\rho), \rho; x_0)\, \partial_\rho T_x(\rho,x_0) \\
    &\quad + \int_{T_x(\rho,x_0)}^{T_2^+(\rho,x_0)} e^{-\tilde{r}t} \partial_\rho F_2(t,\rho;x_0) \, dt + e^{-\tilde{r}T_2^+(\rho,x_0)} F_2(T_2^+(\rho,x_0), \rho;x_0)\, \partial_\rho T_2^+(\rho,x_0) \\
    &\quad - e^{-\tilde{r}T_x(\rho,x_0)} F_2(T_x(\rho,x_0), \rho;x_0)\, \partial_\rho T_x(\rho,x_0).
\end{align*}
Because $x(T_x;\rho,x_0)=0$ we have $F_1(T_x,\rho;x_0)=(T-P)y(T_x;\rho,x_0)=F_2(T_x,\rho;x_0)$, so the two boundary terms at $T_x$ cancel, and $F_2(T_2^+,\rho;x_0)=0$ by definition of $T_2^+$. Hence,
\begin{align}\label{eq:dW-plus}
\partial_\rho W_+(\rho,\tilde r;x_0)=\int_0^{T_x(\rho,x_0)} e^{-\tilde r t}\partial_\rho F_1(t,\rho;x_0)\,dt+\int_{T_x(\rho,x_0)}^{T_2^+(\rho,x_0)} e^{-\tilde r t}\partial_\rho F_2(t,\rho;x_0)\,dt.
\end{align}
An analogous computation for $\rho<1$, using $F_1(T_y;\rho,x_0)=(S-P)x(T_y;\rho,x_0)=F_3(T_y;\rho,x_0)$ and $F_3(T_2^-;\rho,x_0)=0$, gives
\begin{align}\label{eq:dW-minus}
\partial_\rho W_-(\rho,\tilde r;x_0)=\int_0^{T_y(\rho,x_0)} e^{-\tilde r t}\partial_\rho F_1(t,\rho;x_0)\,dt+\int_{T_y(\rho,x_0)}^{T_2^-(\rho,x_0)} e^{-\tilde r t}\partial_\rho F_3(t,\rho;x_0)\,dt.
\end{align}

We first verify that $\rho \mapsto \partial_\rho W(\rho,\tilde r;x_0)$ is continuous at $\rho=1$. As $\rho \to 1^{\pm}$ we have $T_x(\rho,x_0), T_y(\rho,x_0) \to T^*$ and $T_2^{\pm}(\rho,x_0) \to T^*$, so the second integrals in \eqref{eq:dW-plus} and \eqref{eq:dW-minus} are taken over intervals whose length shrinks to zero; since their integrands are bounded on $U'$, these integrals vanish in the limit. The first integrals both converge to $\int_0^{T^*(x_0)} e^{-\tilde r t}\partial_\rho F_1(t,1;x_0)\,dt$. Therefore
\begin{align}\label{eq:dW-at-one}
\lim_{\rho\to 1^+}\partial_\rho W_+(\rho,\tilde r;x_0)=\int_0^{T^*(x_0)} e^{-\tilde r t}\partial_\rho F_1(t,1;x_0)\,dt=\lim_{\rho\to 1^-}\partial_\rho W_-(\rho,\tilde r;x_0),
\end{align}
so $\rho\mapsto\partial_\rho W(\rho,\tilde r;x_0)$ extends continuously across $\rho=1$, with value at $\rho=1$ given by \eqref{eq:dW-at-one}.

Differentiating \eqref{eq:dW-plus} a second time yields
\begin{align*}
\partial_{\rho\rho} W_+(\rho,\tilde r;x_0) &= \int_0^{T_x(\rho,x_0)} e^{-\tilde r t}\,\partial_{\rho\rho} F_1(t,\rho;x_0)\,dt + e^{-\tilde r T_x(\rho,x_0)}\,\partial_\rho F_1(T_x(\rho,x_0),\rho;x_0)\,\partial_\rho T_x(\rho,x_0) \\
&\quad + \int_{T_x(\rho,x_0)}^{T_2^+(\rho,x_0)} e^{-\tilde r t}\,\partial_{\rho\rho} F_2(t,\rho;x_0)\,dt + e^{-\tilde r T_2^+(\rho,x_0)}\,\partial_\rho F_2(T_2^+(\rho,x_0),\rho;x_0)\,\partial_\rho T_2^+(\rho,x_0) \\
&\quad - e^{-\tilde r T_x(\rho,x_0)}\,\partial_\rho F_2(T_x(\rho,x_0),\rho;x_0)\,\partial_\rho T_x(\rho,x_0).
\end{align*}
Taking the limit as $\rho \to 1^+$ (so $T_x(\rho,x_0) \to T^*$ and $T_2^+(\rho,x_0) \to T^*$), the integral over $[T_x(\rho), T_2^+(\rho)]$ vanishes. A direct computation gives
\begin{align*}
    \partial_\rho F_2(t, \rho) = (T-P)\Bigl[ \partial_\rho y(T_x(\rho,x_0);\rho,x_0)+\partial_t y(T_x(\rho,x_0);\rho,x_0)\,\partial_\rho T_x(\rho,x_0)+\frac{\kappa}{2}\,\partial_\rho T_x(\rho,x_0) \Bigr],
\end{align*}
which is independent of $t$. At $\rho=1$, $\partial_t y(T^*(x_0);1,x_0)=-\kappa/2$ cancels the last two bracketed terms, so $\partial_\rho F_2(t,1;x_0)=(T-P)\,\partial_\rho y(T^*(x_0);1,x_0)$, and
\begin{align*}
\partial_\rho T_2^+(1,x_0)=\partial_\rho T_x(1,x_0)+\frac{2}{\kappa}\left[\partial_t y(T^*(x_0);1,x_0)\,\partial_\rho T_x(1,x_0)+\partial_\rho y(T^*(x_0);1,x_0)\right]=\partial_\rho T_y(1,x_0).
\end{align*}
The right limit of the second derivative is therefore
\begin{align*}
\lim_{\rho\to 1^+}\partial_{\rho\rho} W_+(\rho,\tilde r;x_0) &= \int_0^{T^*(x_0)} e^{-\tilde r t}\,\partial_{\rho\rho} F_1(t,1;x_0)\,dt \\
& \quad + e^{-\tilde r T^*(x_0)}\bigl [(S-P)\,\partial_\rho x(T^*(x_0);1,x_0)\,\partial_\rho T_x(1,x_0) \\
& \qquad +(T-P)\,\partial_\rho y(T^*(x_0);1,x_0)\,\partial_\rho T_y(1,x_0)\bigr ].
\end{align*}

A completely analogous calculation for $\rho < 1$, differentiating \eqref{eq:dW-minus}, gives
\begin{align*}
\partial_{\rho\rho} W_-(\rho,\tilde r;x_0) &= \int_0^{T_y(\rho,x_0)} e^{-\tilde r t}\,\partial_{\rho\rho} F_1(t,\rho;x_0)\,dt + e^{-\tilde r T_y(\rho,x_0)}\,\partial_\rho F_1(T_y(\rho,x_0),\rho;x_0)\,\partial_\rho T_y(\rho,x_0) \\
&\quad + \int_{T_y(\rho,x_0)}^{T_2^-(\rho,x_0)} e^{-\tilde r t}\,\partial_{\rho\rho} F_3(t,\rho;x_0)\,dt + e^{-\tilde r T_2^-(\rho,x_0)}\,\partial_\rho F_3(T_2^-(\rho,x_0),\rho;x_0)\,\partial_\rho T_2^-(\rho,x_0) \\
&\quad - e^{-\tilde r T_y(\rho,x_0)}\,\partial_\rho F_3(T_y(\rho,x_0),\rho;x_0)\,\partial_\rho T_y(\rho,x_0).
\end{align*}
The quantity
\begin{align*}
\partial_\rho F_3(t,\rho;x_0) & =(S-P)\bigl[\partial_\rho x(T_y(\rho,x_0);\rho,x_0)+\partial_t x(T_y(\rho,x_0);\rho,x_0)\,\partial_\rho T_y(\rho,x_0) \\
& \quad -\frac{\kappa}{2}\bigl(t-T_y(\rho,x_0)\bigr)+\frac{\rho\kappa}{2}\,\partial_\rho T_y(\rho,x_0)\bigr]
\end{align*}
at $\rho=1$, $t=T^*(x_0)$, using $\partial_t x(T^*(x_0);1,x_0)=-\kappa/2$, reduces to $\partial_\rho F_3(T^*(x_0),1;x_0)=(S-P)\,\partial_\rho x(T^*(x_0);1,x_0)$.  The factor $x(T_y(\rho,x_0);\rho,x_0)$ arising when differentiating $T_2^-(\rho,x_0)$ vanishes at $\rho=1$, since $x(T_y(1,x_0);1,x_0)=0$, so
\begin{align*}
\partial_\rho T_2^-(1,x_0)=\partial_\rho T_y(1,x_0)+\frac{2}{\kappa}\left[\partial_t x(T^*(x_0);1,x_0)\,\partial_\rho T_y(1,x_0)+\partial_\rho x(T^*(x_0);1,x_0)\right]=\partial_\rho T_x(1,x_0).
\end{align*}

Letting $\rho\to 1^-$, the integral over $[T_y(\rho,x_0),T_2^-(\rho,x_0)]$ vanishes, and using $T_y(1,x_0)=T_2^-(1,x_0)=T^*(x_0)$,
\begin{align*}
\lim_{\rho\to 1^-}\partial_{\rho\rho} W_-(\rho,\tilde r;x_0) &= \int_0^{T^*(x_0)} e^{-\tilde r t}\,\partial_{\rho\rho} F_1(t,1;x_0)\,dt \\
&\quad + e^{-\tilde r T^*(x_0)}\Bigl[\partial_\rho F_1(T^*(x_0),1;x_0)\,\partial_\rho T_y(1,x_0) \\
&\qquad\qquad + \partial_\rho F_3(T^*(x_0),1;x_0)\,\partial_\rho T_2^-(1,x_0) - \partial_\rho F_3(T^*(x_0),1;x_0)\,\partial_\rho T_y(1,x_0)\Bigr].
\end{align*}
Substituting $\partial_\rho F_3(T^*(x_0),1;x_0)=(S-P)\,\partial_\rho x(T^*(x_0);1,x_0)$, $\partial_\rho T_2^-(1,x_0)=\partial_\rho T_x(1,x_0)$, and $\partial_\rho F_1(T^*(x_0),1;x_0)=(S-P)\,\partial_\rho x(T^*(x_0);1,x_0)+(T-P)\,\partial_\rho y(T^*(x_0);1,x_0)$, and simplifying,
\begin{align*}
\lim_{\rho\to 1^-}\partial_{\rho\rho} W_-(\rho,\tilde r;x_0) &= \int_0^{T^*(x_0)} e^{-\tilde r t}\,\partial_{\rho\rho} F_1(t,1;x_0)\,dt \\
&\quad + e^{-\tilde r T^*(x_0)}\bigl [(S-P)\,\partial_\rho x(T^*(x_0);1,x_0)\,\partial_\rho T_x(1,x_0) \\
& \qquad +(T-P)\,\partial_\rho y(T^*(x_0);1,x_0)\,\partial_\rho T_y(1,x_0)\bigr ],
\end{align*}
which coincides with the right limit. Thus the one-sided second derivatives agree at $\rho=1$. We have already established that $\rho \mapsto \partial_\rho W(\rho,\tilde r;x_0)$ is continuous on $U'$; it is moreover continuously differentiable on $U' \setminus \{1\}$, and by the above its derivative $\partial_{\rho\rho} W(\rho,\tilde r;x_0)$ admits a common limit as $\rho \to 1$. Hence, $\rho \mapsto \partial_\rho W(\rho,\tilde r;x_0)$ is therefore differentiable at $\rho=1$, with $\partial_{\rho\rho} W(1, \tilde{r};x_0)$ equal to this common limit. Thus $\rho\mapsto W(\rho,\tilde r;x_0)$ is twice continuously differentiable on $U'$ for every $(\tilde r,x_0)\in[0,\bar r)\times X'$,

For joint continuity, regard the second-derivative expression obtained from \eqref{eq:dW-plus} as a function on the closed half $\{\rho\ge 1\}\times[0,\bar r)\times X'$. Each term is either an integral of a jointly continuous integrand against jointly smooth limits, or a product of the jointly continuous quantities $e^{-\tilde r T_x(\rho,x_0)}$, $\partial_\rho F_1(T_x(\rho,x_0),\rho;x_0)$, $\partial_\rho T_x(\rho,x_0)$ and their $T_2^+$-analogues, so $\partial_{\rho\rho} W_+$ is jointly continuous there, the integral over $[T_x(\rho,x_0),T_2^+(\rho,x_0)]$ vanishing at $\rho=1$. Likewise $\partial_{\rho\rho} W_-$ is jointly continuous on $\{\rho\le 1\}\times[0,\bar r)\times X'$. As the two agree on $\rho=1$, the glued function $\partial_{\rho\rho} W$ is jointly continuous on $U'\times[0,\bar r)\times X'$; the same argument applied to \eqref{eq:dW-plus} and \eqref{eq:dW-minus} themselves gives joint continuity of $\partial_\rho W$. This establishes~(i).

By \eqref{eq:dW-at-one}, for every $(\tilde r,x_0)\in[0,\bar r)\times X'$ we have $\partial_\rho W(1,\tilde r;x_0)=\int_0^{T^*(x_0)} e^{-\tilde r t}\,\partial_\rho F_1(t,1;x_0)\,dt$, where
\begin{align*}
\partial_\rho F_1(t,1;x_0)&=(S-P)\,\partial_\rho x(t;1,x_0)+(T-P)\,\partial_\rho y(t;1,x_0)-\delta\bigl(\partial_\rho x(t;1,x_0)+\partial_\rho y(t;1,x_0)\bigr)\,x(t;1,x_0)\\
&=\partial_\rho u_1\bigl(x(t;\rho,x_0),y(t;\rho,x_0)\bigr)\big|_{\rho=1}.
\end{align*}
Since the upper limit $T^*(x_0)$ does not depend on $\tilde r$, differentiation under the integral sign in $\tilde r$ is justified and yields
\begin{align*}
\partial_{\tilde r}\bigl[\partial_\rho W(1,\tilde r;x_0)\bigr]=-\int_0^{T^*(x_0)} t\,e^{-\tilde r t}\,\partial_\rho u_1\bigl(x(t;\rho,x_0),y(t;\rho,x_0)\bigr)\big|_{\rho=1}\,dt,
\end{align*}
which is jointly continuous in $(\tilde r,x_0)$. Differentiating instead in $x_0$, Leibniz's rule gives
\begin{align*}
\partial_{x_0}\bigl[\partial_\rho W(1,\tilde r;x_0)\bigr]=\int_0^{T^*(x_0)} e^{-\tilde r t}\,\partial_{x_0}\partial_\rho F_1(t,1;x_0)\,dt+e^{-\tilde r T^*(x_0)}\,\partial_\rho F_1\bigl(T^*(x_0),1;x_0\bigr)\,\partial_{x_0} T^*(x_0),
\end{align*}
likewise jointly continuous in $(\tilde r,x_0)$. As both first-order partials exist and are jointly continuous, $(\tilde r,x_0)\mapsto\partial_\rho W(1,\tilde r;x_0)$ is continuously differentiable on $[0,\bar r)\times X'$, proving~(ii).

Finally, at $\tilde r=0$, we have $W(\rho,0;x_0)=\int_0^{H'(\rho,x_0)}\bigl(u_1(x(t;\rho,x_0),y(t;\rho,x_0))-P\bigr)\,dt\equiv J_1^0(\rho,1;x_0)$, where $H'(\rho,x_0)=\max(T_x(\rho,x_0),T_y(\rho,x_0))$. As these are identical functions of $\rho$ on $U'$ for each $x_0\in X'$, their second $\rho$-derivatives coincide, giving $\partial_{\rho\rho} W(\rho,0;x_0)=\partial_{\rho\rho} J_1^0(\rho,1;x_0)$ on $U'$. This establishes~(iii) and completes the proof.
\end{proof}

Next, we establish how a transversality condition required in an application of an implicit function theorem is generically satisfied. We prove something slightly stronger, which will become useful in the senstivity analysis of Proposition~\ref{prop:delta_sensitivity}

\begin{lemma}[Closed form of the transversality constant]\label{lem:transversality-constant}
Fix $\delta < 0$ and $x_0 = y_0 \in (0,1]$, and write $K_0 := \kappa + \delta x_0 > 0$ and $L := \ln(K_0/\kappa)$. Then we determine the following closed-form solution:
\begin{align}\label{eq:transversality-closed}
\partial^2_{\rho\tilde r}W(1,0;x_0)
= \frac{1}{\delta^3\kappa}\Bigl[
&(T-S)\bigl(K_0^2(1-L) + 4K_0\kappa - (2L^2 + 5L + 5)\kappa^2\bigr) \nonumber \\
&\quad - K_0^2\kappa + (2L^2 + 2L + 1)\kappa^3
\Bigr].
\end{align}
\end{lemma}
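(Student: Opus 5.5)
We start from the representation supplied by Lemma~\ref{lem:final}(ii) evaluated at $\tilde r=0$:
\begin{align*}
\partial^2_{\rho\tilde r}W(1,0;x_0)=-\int_0^{T^*}t\,\partial_\rho u_1\big|_{\rho=1}\,dt ,
\end{align*}
with $T^*=\tfrac{2}{\delta}L$. The integrand is exactly the one computed in the proof of Theorem~\ref{thm:undiscounted-FOC}. There, in the symmetric coordinates $p=u+v$ and $w=u-v$, we had
\begin{align*}
\partial_\rho u_1\big|_{\rho=1}=\tfrac{T-S}{2}(p-w)-Kp ,
\end{align*}
together with the closed forms
\begin{align*}
K(t)=K_0e^{-\delta t/2},\qquad p(t)=-\tfrac{K_0t}{2}e^{-\delta t/2},\qquad w(t)=-\tfrac{K_0}{\delta}\sinh(\delta t/2).
\end{align*}
The whole proof therefore reduces to the three time-weighted moments
\begin{align*}
M_p=\int_0^{T^*}t\,p\,dt,\qquad M_w=\int_0^{T^*}t\,w\,dt,\qquad M_{Kp}=\int_0^{T^*}t\,K p\,dt .
\end{align*}
In terms of these,
\begin{align*}
\partial^2_{\rho\tilde r}W(1,0;x_0)=-\tfrac{T-S}{2}(M_p-M_w)+M_{Kp}.
\end{align*}

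The moments are computed as follows.
\begin{itemize}
\item $M_p=-\tfrac{K_0}{2}\int_0^{T^*} t^2e^{-\delta t/2}\,dt$ needs two integrations by parts.
\item $M_{Kp}=-\tfrac{K_0^2}{2}\int_0^{T^*} t^2e^{-\delta t}\,dt$ is handled the same way.
\item $M_w=-\tfrac{K_0}{\delta}\int_0^{T^*} t\sinh(\delta t/2)\,dt$ needs one integration by parts. It produces $T^*\cosh(\delta T^*/2)$ and $\sinh(\delta T^*/2)$ terms.
\end{itemize}

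After the integrations, we eliminate every exponential using
\begin{align*}
e^{\mp\delta T^*/2}=(\kappa/K_0)^{\pm1},\qquad e^{-\delta T^*}=\kappa^2/K_0^2,\qquad \delta T^*/2=L .
\end{align*}
It follows that $\cosh$ and $\sinh$ of $\delta T^*/2$ are $(K_0/\kappa\pm\kappa/K_0)/2$. Each moment then becomes $\delta^{-3}$ times a polynomial in $L$ (of degree at most $2$) with coefficients that are rational in $K_0,\kappa$.

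The final step is bookkeeping. We collect the terms multiplying $T-S$ and the remainder, multiply through by the common factor $1/(\delta^3\kappa)$, and check the result against \eqref{eq:transversality-closed}. This collection is the one place where $K_0-\kappa=\delta x_0$ could be used, but the target formula keeps $K_0$ and $\kappa$ separate, so we should \emph{not} substitute it. Instead we keep everything as polynomials in $K_0,\kappa,L$ and cancel the $1/K_0$ terms directly.

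The main obstacle is purely algebraic: tracking the $L^2$ and $L$ coefficients through the $t^2$ integrals. These coefficients are $2L^2+5L+5$ for the $(T-S)$-part and $2L^2+2L+1$ for the remainder, and it is easy to lose factors of $2/\delta$ along the way. To guard against this, we will verify the coefficient of $(T-S)$ separately by a small-$x_0$ expansion, using $L\approx\delta x_0/\kappa$: both sides should vanish to the appropriate order in $x_0$.

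There is one more point to settle first. Lemma~\ref{lem:final}(ii) gives the $\tilde r$-derivative of $\partial_\rho W(1,\tilde r)$, and we need this to equal $\partial^2_{\rho\tilde r}W$. We justify it by the joint continuity established in Lemma~\ref{lem:final}, which makes the order of differentiation immaterial.
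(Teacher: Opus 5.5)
Your proposal is correct and follows the paper's route. You start from Lemma~\ref{lem:final}(ii) at $\tilde r=0$, rewrite the integrand as $\tfrac{T-S}{2}(p-w)-Kp$, and reduce to the same three integrals $\int_0^{T^*}t^2e^{-\delta t/2}\,dt$, $\int_0^{T^*}t\sinh(\delta t/2)\,dt$, $\int_0^{T^*}t^2e^{-\delta t}\,dt$ (your $M_p,M_w,M_{Kp}$ with prefactors absorbed); after the stated exponential substitutions these collect exactly into the closed form.

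Your last paragraph is unnecessary. In the paper $\partial^2_{\rho\tilde r}W(1,0;x_0)$ simply denotes $\partial_{\tilde r}\bigl[\partial_\rho W(1,\tilde r;x_0)\bigr]$ at $\tilde r=0$, so no interchange of derivatives is needed. That is fortunate, because the joint continuity in Lemma~\ref{lem:final} (of $\partial_\rho W$ and $\partial_{\rho\rho}W$ only) would not by itself justify one.
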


\begin{proof}
By Lemma~\ref{lem:final}(ii) evaluated at $\tilde r = 0$, together with Lemma~\ref{lem:final}(iii) and the identity $\partial_\rho F_1(t,1;x_0) = -K(t)\,u(t) + (T-S-K(t))\,v(t)$ established in the proof of Theorem~\ref{thm:undiscounted-FOC}, the transversality constant equals the first-order integrand of~\eqref{eq:FOC-undiscounted} weighted by $t$:
\begin{align}\label{eq:trans-defining}
\partial^2_{\rho\tilde r}W(1,0;x_0)
= -\int_0^{T^*} t\,\bigl[-K(t)\,u(t) + (T-S-K(t))\,v(t)\bigr]\,dt.
\end{align}
Recall from the proof of Theorem~\ref{thm:undiscounted-FOC} the substitution $p := u + v$ and $w := u - v$, equivalently $u = (p+w)/2$ and $v = (p-w)/2$. Substituting these into the integrand of~\eqref{eq:trans-defining},
\begin{align*}
-K u + (T-S-K) v
&= -K\cdot\tfrac{p+w}{2} + (T-S-K)\cdot\tfrac{p-w}{2} \\
&= \tfrac{T-S}{2}(p - w) - K\,\tfrac{(p+w)+(p-w)}{2} \\
&= \tfrac{T-S}{2}(p - w) - K p,
\end{align*}
so that
\begin{align}\label{eq:trans-pw}
\partial^2_{\rho\tilde r}W(1,0;x_0)
= -\int_0^{T^*} t\,\Bigl[\tfrac{T-S}{2}\bigl(p(t)-w(t)\bigr) - K(t)\,p(t)\Bigr]\,dt.
\end{align}

The functions $K$, $p$, and $w$ are given explicitly in the proof of Theorem~\ref{thm:undiscounted-FOC} by
\begin{align*}
K(t) = K_0\, e^{-\delta t/2}, \qquad
p(t) = -\tfrac{K_0 t}{2}\, e^{-\delta t/2}, \qquad
w(t) = -\tfrac{K_0}{\delta}\,\sinh\!\Bigl(\tfrac{\delta t}{2}\Bigr),
\end{align*}
and the hitting time satisfies $K(T^*) = \kappa$, whence
\begin{align*}
    e^{-\delta T^*/2} = \frac{\kappa}{K_0}, \qquad
T^* = \frac{2L}{\delta} = \frac{2}{\delta}\ln\!\Bigl(\frac{K_0}{\kappa}\Bigr).
\end{align*}

Substituting these expressions, together with $Kp = K_0 e^{-\delta t/2}\cdot\bigl(-\tfrac{K_0 t}{2}e^{-\delta t/2}\bigr) = -\tfrac{K_0^2 t}{2}e^{-\delta t}$, into~\eqref{eq:trans-pw} and expanding the three contributions yields
\begin{align}\label{eq:trans-three-integrals}
\partial^2_{\rho\tilde r}W(1,0;x_0)
= \frac{(T-S)K_0}{4}\,\mathcal{J}_1
\;-\; \frac{(T-S)K_0}{2\delta}\,\mathcal{J}_2
\;-\; \frac{K_0^2}{2}\,\mathcal{J}_3,
\end{align}
where
\begin{align*}
\mathcal{J}_1 := \int_0^{T^*}\!\! t^2 e^{-\delta t/2}\,dt, \qquad
\mathcal{J}_2 := \int_0^{T^*}\!\! t\,\sinh\!\Bigl(\tfrac{\delta t}{2}\Bigr)\,dt, \qquad
\mathcal{J}_3 := \int_0^{T^*}\!\! t^2 e^{-\delta t}\,dt.
\end{align*}

Each integral is computed by integration by parts. For $\mathcal{J}_1$, two integrations by parts give $\int t^2 e^{at}\,dt = e^{at}\bigl(\tfrac{t^2}{a} - \tfrac{2t}{a^2} + \tfrac{2}{a^3}\bigr)$ with $a = -\delta/2$; evaluating between $0$ and $T^*$ and substituting $e^{-\delta T^*/2} = \kappa/K_0$ and $T^* = 2L/\delta$,
\begin{align*}
\mathcal{J}_1 = \frac{8\bigl(2K_0 - \kappa(L^2 + 2L + 2)\bigr)}{\delta^3 K_0}.
\end{align*}
For $\mathcal{J}_2$, integration by parts with $\int t\,\sinh(bt)\,dt = \tfrac{t}{b}\cosh(bt) - \tfrac{1}{b^2}\sinh(bt)$ for $b = \delta/2$, evaluated at $T^*$ using $\cosh L = \tfrac12(K_0/\kappa + \kappa/K_0)$ and $\sinh L = \tfrac12(K_0/\kappa - \kappa/K_0)$,
\begin{align*}
\mathcal{J}_2 = \frac{2\bigl(-K_0^2 + L(K_0^2 + \kappa^2) + \kappa^2\bigr)}{\delta^2 K_0\,\kappa}.
\end{align*}
For $\mathcal{J}_3$, the same antiderivative as $\mathcal{J}_1$ with $a = -\delta$, using $e^{-\delta T^*} = (e^{-\delta T^*/2})^2 = \kappa^2/K_0^2$,
\begin{align*}
\mathcal{J}_3 = \frac{2\bigl(K_0^2 - \kappa^2(2L^2 + 2L + 1)\bigr)}{\delta^3 K_0^2}.
\end{align*}

Substituting $\mathcal{J}_1$, $\mathcal{J}_2$, $\mathcal{J}_3$ into~\eqref{eq:trans-three-integrals}, the three contributions become
\begin{align*}
\frac{(T-S)K_0}{4}\,\mathcal{J}_1 &= \frac{2(T-S)\bigl(2K_0 - \kappa(L^2+2L+2)\bigr)}{\delta^3}, \\
-\frac{(T-S)K_0}{2\delta}\,\mathcal{J}_2 &= -\frac{(T-S)\bigl(-K_0^2 + L(K_0^2+\kappa^2) + \kappa^2\bigr)}{\delta^3\kappa}, \\
-\frac{K_0^2}{2}\,\mathcal{J}_3 &= -\frac{K_0^2 - \kappa^2(2L^2+2L+1)}{\delta^3}.
\end{align*}
Placing all three over the common denominator $\delta^3\kappa$ and collecting the coefficient of $(T-S)$ separately from the remaining terms gives
\begin{align*}
\partial^2_{\rho\tilde r}W(1,0;x_0)
= \frac{1}{\delta^3\kappa}\Bigl[
&(T-S)\bigl(K_0^2(1-L) + 4K_0\kappa - (2L^2 + 5L + 5)\kappa^2\bigr) \\
&\quad - K_0^2\kappa + (2L^2 + 2L + 1)\kappa^3
\Bigr],
\end{align*}
which is~\eqref{eq:transversality-closed}.
\end{proof}

Next, we  an auxiliary Lemma, which will be used in the final Lemma that applies the closed form expression above and shows how the sign of the transversality constant depends on a payoff ratio.

\begin{lemma}\label{lem:D_unique_root}
The  function defined on $\bar{q} \in (0, 1)$ by:
\begin{equation*}
    D(\bar{q}) = \bar{q}^4\ln\bar{q} - 2\bar{q}^4 + 2\bar{q}^2\ln^2\bar{q} + 6\bar{q}^2\ln\bar{q} + 4\bar{q}^2 - 8\bar{q}\ln^2\bar{q} - 2\ln^2\bar{q} - 7\ln\bar{q} - 2
\end{equation*}
possesses exactly one root $\bar{q}_c \in (0, 1)$.
\end{lemma}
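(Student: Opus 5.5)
The plan is to substitute $\bar q=e^{L}$ with $L\in(-\infty,0)$, so that $D$ becomes the exponential polynomial
\begin{align*}
h(L):=D(e^{L})=e^{4L}(L-2)+e^{2L}(2L^2+6L+4)-8e^{L}L^2-2L^2-7L-2 .
\end{align*}
Zeros of $D$ in $(0,1)$ correspond bijectively to zeros of $h$ in $(-\infty,0)$. I will prove existence by sign analysis at the two ends, and uniqueness by a Rolle-type argument that strips exponential terms one at a time.

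\emph{Existence.} As $L\to-\infty$, every exponential term tends to $0$, so $h(L)\sim -2L^2\to-\infty$; hence $D<0$ near $\bar q=0$. At the other end, $h(0)=-2+4-2=0$, which is why $\bar q=1$ must be excluded. Differentiating repeatedly, I expect $h^{(k)}(0)=0$ for $k=1,\dots,5$; a quick check gives the values $-7+14-7$, $-24+44-16-4$, $-80+128-48$, and so on. The first non-vanishing derivative should be $h^{(6)}(0)=-2048+2368-240=80>0$. Then $h(L)=\tfrac{80}{720}L^6+O(L^7)>0$ for small $L<0$, and the intermediate value theorem yields a root $\bar q_c\in(0,1)$.

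\emph{Uniqueness.} I would eliminate the exponentials successively, in the spirit of the proof of Lemma~\ref{lem:E-root}.
\begin{itemize}
\item Differentiate $h$ three times. This kills the quadratic polynomial term and leaves $h'''=e^{L}\Phi_1(L)$, where $\Phi_1$ is a combination of $e^{3L}$, $e^{L}$ and $1$ with polynomial coefficients.
\item Differentiate $\Phi_1$ three times to remove the $L^2$ coefficient of the constant-exponent part. Continue by multiplying by $e^{-L}$ or $e^{-2L}$ and differentiating, until only a single exponential times a polynomial remains. The sign of that polynomial on $(-\infty,0)$ can be read off directly.
\item Run back up the chain. At each level, use the known value (and order of vanishing) of the intermediate function at $L=0$ together with its limit as $L\to-\infty$. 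Rolle's theorem then bounds the number of sign changes at each level, exactly as $f''\Rightarrow f'\Rightarrow f$ was handled in Lemma~\ref{lem:E-root}.
\end{itemize}
The target is to show that $h$ is first increasing and then, below some $L_1<0$, behaves monotonically enough that it crosses zero only once. Equivalently, it suffices to show $h'(L)\ne 0$ at any zero of $h$ and to use the fixed signs at both ends.

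\emph{Main obstacle.} The difficulty is the bookkeeping of multiplicities in the uniqueness step. The generic zero count for this exponential polynomial (the sum of the coefficient degrees plus one, minus one) is $10$, counted with multiplicity. The order-$6$ zero at $L=0$ consumes only six of these, leaving room for four spurious interior zeros. The boundary information at $0$ therefore has to be used at every level of the differentiation chain, and the limits at $-\infty$ have to be tracked carefully so that no sign changes are lost.

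\emph{Fallback.} If the chain does not close cleanly, I would split $(-\infty,0)$ into three regions.
\begin{itemize}
\item A neighbourhood of $0$, controlled by the Taylor bound $h(L)\ge \tfrac{80}{720}L^6-C|L|^7>0$.
\item A tail $L\le -L_0$, on which $-2L^2-7L-2$ dominates and $h'>0$.
\item A compact middle interval, where a certified (interval-arithmetic) evaluation of $h$ and $h'$ shows that $h'>0$ wherever $h$ is near zero.
\end{itemize}
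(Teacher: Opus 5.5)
Your existence argument is correct. The substitution $\bar q=e^L$ is right, and one checks $h^{(k)}(0)=0$ for $k\le 5$ and $h^{(6)}(0)=-2048+2368-240=80$. Hence $D(\bar q)\approx(\bar q-1)^6/9>0$ near $1$, while $D\to-\infty$ at $0^+$.

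The gap is uniqueness. It is never proved, and the chain you describe cannot close using only values at $L=0$ and limits as $L\to-\infty$. Carrying out your first six steps gives $e^{-L}\Phi_1'''=16\psi$ with $\psi(L)=e^{2L}(108L-27)+(L+4)(L+8)$. This function genuinely changes sign four times on $(-\infty,0)$: $\psi\to+\infty$ as $L\to-\infty$, $\psi(-6)<0$, $\psi(-2)=12-243e^{-4}>0$, $\psi(-1/2)=26.25-81/e<0$, and $\psi(0)=5>0$.

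At this point the order-six zero at $L=0$ is used up, since $\Phi_1'''(0)=80\neq0$. So no continuation of the chain can certify fewer than four sign changes at this level. Propagating back through $\Phi_1'',\Phi_1',\Phi_1,h'',h',h$ with only endpoint data then gives at best a bound of three zeros for $h$. This is the obstacle you yourself flag, and it is left unresolved.

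Closing it needs quantitative input at interior points. For instance, it would suffice to know that $\Phi_1''$ is still negative at the first sign change of $\psi$, near $L=-8$. The proposal supplies nothing of this kind, and the interval-arithmetic fallback is only announced. Also, the criterion you call equivalent, $h'\neq 0$ at every zero of $h$, only yields an odd number of zeros; you would need $h'>0$ at every zero in $(-\infty,0)$.

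The missing idea is the choice of differentiation operator. The paper differentiates five times in $\bar q$ itself. In your variable this is $\bar q^5\frac{d^5}{d\bar q^5}=\prod_{j=0}^{4}\bigl(\frac{d}{dL}-j\bigr)$. Each factor $e^{jL}\frac{d}{dL}e^{-jL}$ preserves Rolle counting, and all five are covered by the six-fold zero. Together they lower the degree of every polynomial coefficient by one at once, instead of spending three derivatives to kill a single coefficient.

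This yields $D^{(5)}(\bar q)=8\bar q^{-5}f(\bar q)$ with $f=3\bar q^4+\bar q^2(2\ln\bar q+3)+2\bar q(6\ln\bar q-5)-12\ln\bar q+4$. The single sign change of $f$ follows from qualitative data only:
\begin{itemize}
\item $g=\bar q f'$ satisfies $g''>0$, using $\ln\bar q>1-1/\bar q$;
\item $g'(0^+)=-\infty$ and $g'(1)=82$;
\item $g(0^+)=-12$ and $g(1)=10$;
\item $f(0^+)=+\infty$ and $f(1)=0$.
\end{itemize}
The cascade back to $D$ then uses $D^{(k)}(1)=0$ for $k\le 5$ and the alternating signs of $D^{(k)}(0^+)$, with no numerical evaluation at all.
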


\begin{proof}
We deploy a cascade application of Rolle's Theorem paired with the Intermediate Value Theorem. 

\vspace{0.2cm}
\noindent \textbf{Step 1: Boundary Conditions and Asymptotics.} \\
We evaluate the limits at the endpoints of $(0, 1)$. At $\bar{q} \to 1^-$, straightforward evaluation confirms that the function and its first five derivatives vanish identically:
\begin{equation*}
    D(1) = D'(1) = D''(1) = D'''(1) = D^{(4)}(1) = D^{(5)}(1) = 0.
\end{equation*}
As $\bar{q} \to 0^+$, the asymptotic behavior is strictly dominated by the poles of the logarithmic terms $\mathcal{O}(\bar{q}^{-k} \ln \bar{q})$. Evaluating the dominant terms yields an alternating sign sequence:
\begin{align*}
    D(\bar{q}) &\sim -2\ln^2 \bar{q} \to -\infty \quad (-);&
    D'(\bar{q}) &\sim -4\frac{\ln \bar{q}}{\bar{q}} \to +\infty \quad (+) \\
    D''(\bar{q}) &\sim 4\frac{\ln \bar{q}}{\bar{q}^2} \to -\infty \quad (-); &   D'''(\bar{q}) &\sim -8\frac{\ln \bar{q}}{\bar{q}^3} \to +\infty \quad (+) \\
    D^{(4)}(\bar{q}) &\sim 24\frac{\ln \bar{q}}{\bar{q}^4} \to -\infty \quad (-);&
    D^{(5)}(\bar{q}) &\sim -96\frac{\ln \bar{q}}{\bar{q}^5} \to +\infty \quad (+)
\end{align*}

\vspace{0.2cm}
\noindent \textbf{Step 2: Isolating the Fifth Derivative.} \\
We compute the exact algebraic form of the fifth derivative:
\begin{equation*}
    D^{(5)}(\bar{q}) = \frac{8}{\bar{q}^5} f(\bar{q}),
\end{equation*}
where $f(\bar{q}) = 3\bar{q}^4 + \bar{q}^2(2\ln \bar{q} + 3) + 2\bar{q}(6\ln \bar{q} - 5) - 12\ln \bar{q} + 4$. 
Because $\bar{q} > 0$, the roots and signs of $D^{(5)}(\bar{q})$ are identical to those of $f(\bar{q})$. To study the roots of $f(\bar{q})$, we define the auxiliary function $g(\bar{q}) = \bar{q}f'(\bar{q})$:
\begin{equation}
    g(\bar{q}) = 12\bar{q}^4 + 4\bar{q}^2\ln \bar{q} + 8\bar{q}^2 + 12\bar{q}\ln \bar{q} + 2\bar{q} - 12.
\end{equation}
We evaluate the curvature of $g(\bar{q})$ by taking its second derivative:
\begin{equation*}
    g''(\bar{q}) = 144\bar{q}^2 + 8\ln \bar{q} + 28 + \frac{12}{\bar{q}}.
\end{equation*}

\vspace{0.2cm}
\noindent \textbf{Step 3: Bounding the Curvature of $g(\bar{q})$.} \\
By the standard strict convexity bound $e^x > 1+x$, setting $x = -\ln \bar{q}$ yields $\frac{1}{\bar{q}} > 1 - \ln \bar{q}$, which rearranges to the inequality $\ln \bar{q} > 1 - \frac{1}{\bar{q}}$. Applying this to the logarithmic term gives $8\ln \bar{q} > 8 - \frac{8}{\bar{q}}$. Substituting this bound into $g''(\bar{q})$ bounds the curvature as follows:
\begin{equation*}
    g''(\bar{q}) > 144\bar{q}^2 + 8 - \frac{8}{\bar{q}} + 28 + \frac{12}{\bar{q}} = 144\bar{q}^2 + \frac{4}{\bar{q}} + 36 > 0 \quad \text{for all } \bar{q} \in (0, 1).
\end{equation*}

Because $g''(\bar{q}) > 0$, the first derivative $g'(\bar{q})$ is strictly monotonically increasing. Next, we evaluate $g'(1) = 48 + 20 + 14 = 82 > 0$. As $\bar{q} \to 0^+$, the term $12\ln \bar{q}$ forces $g'(\bar{q}) \to -\infty$.

By the Intermediate Value Theorem, $g'(\bar{q})$ crosses zero exactly once at $r_1 \in (0, 1)$. Thus, $g(\bar{q})$ decreases strictly on $(0, r_1)$ and increases strictly on $(r_1, 1)$.

Next, we evaluate $g(1) = 12 + 8 + 2 - 12 = 10 > 0$. As $\bar{q} \to 0^+$, $g(\bar{q}) \to -12 < 0$. Hence, $g(\bar{q})$ crosses zero exactly once at $r_g \in (0, 1)$. 

Because $f'(\bar{q}) = g(\bar{q})/\bar{q}$, $f'(\bar{q})$ shares this exact unique root and sign profile. Consequently, $f(\bar{q})$ decreases strictly on $(0, r_g)$ and increases strictly on $(r_g, 1)$.

Now, note that $f(1) = 3 + 3 - 10 + 4 = 0$. Because $f(\bar{q})$ strictly increases to $0$ on $(r_g, 1)$, we have $f(\bar{q}) < 0$ on $(r_g, 1)$.
 As $\bar{q} \to 0^+$, the term $-12\ln \bar{q}$ forces $f(\bar{q}) \to +\infty$.

Hence, $f(\bar{q})$ crosses zero exactly once at $r_f \in (0, 1)$. Because $D^{(5)}(\bar{q}) = \frac{8}{\bar{q}^5}f(\bar{q})$, the fifth derivative $D^{(5)}(\bar{q})$ is strictly positive on $(0, r_f)$ and strictly negative on $(r_f, 1)$.

\vspace{0.2cm}
\noindent \textbf{Step 4: The Differential Cascade.} \\
We now cascade this uniqueness backward through the antiderivatives using Rolle's Theorem, taking advantage of the boundary condition $D^{(k)}(1) = 0$:
\begin{enumerate}
    \item \textbf{For $D^{(4)}(\bar{q})$:} Its derivative is $D^{(5)}(\bar{q})$. Thus, $D^{(4)}$ increases on $(0, r_f)$ and decreases on $(r_f, 1)$. Since $D^{(4)}(1) = 0$, it must be that $D^{(4)}(\bar{q}) > 0$ on $(r_f, 1)$. Because $D^{(4)}(0^+) \to -\infty$, it increases from $-\infty$ to a positive peak at $r_f$, securing exactly one root $r_4 \in (0, r_f)$.
    \item \textbf{For $D'''(\bar{q})$:} Its derivative is $D^{(4)}(\bar{q})$, which is negative on $(0, r_4)$ and positive on $(r_4, 1)$. Thus, $D'''(\bar{q})$ decreases on $(0, r_4)$ and increases on $(r_4, 1)$. Since $D'''(1) = 0$, it is strictly negative on $(r_4, 1)$. Because $D'''(0^+) \to +\infty$, it decreases from $+\infty$ to a negative trough at $r_4$, securing exactly one root $r_3 \in (0, r_4)$.
    \item \textbf{For $D''(\bar{q})$:} Its derivative is $D'''(\bar{q})$, which is first positive, then negative. Thus, $D''(\bar{q})$ increases then decreases to $D''(1) = 0$, making it strictly positive on $(r_3, 1)$. Because $D''(0^+) \to -\infty$, it increases from $-\infty$ to a positive peak, securing exactly one root $r_2 \in (0, r_3)$.
    \item \textbf{For $D'(\bar{q})$:} Its derivative is $D''(\bar{q})$, which is first negative, then positive. Thus, $D'(\bar{q})$ decreases then increases to $D'(1) = 0$, making it strictly negative on $(r_2, 1)$. Because $D'(0^+) \to +\infty$, it decreases from $+\infty$ to a negative trough, securing exactly one root $r_1' \in (0, r_2)$.
    \item \textbf{For $D(\bar{q})$:} Its derivative is $D'(\bar{q})$, which is first positive, then negative. Thus, $D(\bar{q})$ increases on $(0, r_1')$ and decreases on $(r_1', 1)$ to $D(1) = 0$. Consequently, $D(\bar{q})$ is strictly positive on $(r_1', 1)$. Because $D(0^+) \to -\infty$, it increases from $-\infty$ to a positive peak at $r_1'$, securing exactly one root $\bar{q}_c \in (0, r_1')$.
\end{enumerate}
This establishes the uniqueness of the root on $(0, 1)$.
\end{proof}

\begin{lemma}\label{lem:universal_transversality}
There exists a constant $\bar t \approx 2.13$ such that for any stage game with $E(1)<0$ the transversality constant $\mathcal{T}=\partial^2_{\rho\tilde{r}} W(1,0;\bar{x})$ at the undiscounted root $\bar{x}$ satisfies $\mathcal{T} \neq 0$ if and only if the acceleration loss $\frac{T-P}{P-S} \neq \bar t$. Furthermore:
\begin{itemize}
    \item If $\frac{T-P}{P-S} > \bar t$, then $\mathcal{T} > 0$.
    \item If $\frac{T-P}{P-S} < \bar t$, then $\mathcal{T} < 0$.
\end{itemize}
\end{lemma}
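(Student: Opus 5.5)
The plan is to combine the closed form of Lemma~\ref{lem:transversality-constant} with the root equation $E(\bar x)=0$ from Theorem~\ref{thm:undiscounted-FOC}. This should reduce the sign of $\mathcal T$ to the sign of the single-variable function $D$ of Lemma~\ref{lem:D_unique_root}. The natural variable is $\bar q := K_0/\kappa = 1+\delta\bar x/\kappa \in (0,1)$, since $\delta<0$ and $\kappa+\delta\bar x\ge T-R>0$; then $L=\ln\bar q<0$. We also set $t:=\frac{T-P}{P-S}$, so that $T-S=\kappa(1+t)$.

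First, rewrite $E(\bar x)=0$ in terms of $\bar q$. Substituting $\bar x=\kappa(\bar q-1)/\delta$ gives $x_0^2/(2\kappa)=\kappa(\bar q-1)^2/(2\delta^2)$, $-x_0/\delta=-\kappa(\bar q-1)/\delta^2$ and $(\kappa/\delta^2)\ln\bar q$. The condition becomes
\begin{align*}
t\Bigl[\tfrac{(\bar q-1)^2}{2}-(\bar q-1)+\ln\bar q\Bigr]+(\bar q-1)^2=0 .
\end{align*}
This is linear in $t$, so it can be solved explicitly:
\begin{align*}
t=\tau(\bar q):=\frac{-2(\bar q-1)^2}{(\bar q-1)^2-2(\bar q-1)+2\ln\bar q}.
\end{align*}
Thus the acceleration loss and the root location determine each other. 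I would check that $\tau$ is strictly monotone on $(0,1)$, by differentiating and using elementary log inequalities as in Lemma~\ref{lem:D_unique_root}. Then $\bar q\mapsto t$ is a bijection onto its range. The hypothesis $E(1)<0$, via Remark~\ref{rem:E-root-exists} and Lemma~\ref{lem:E-root}, guarantees that $\bar x$, and hence $\bar q$, exists uniquely.

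Second, substitute $K_0=\kappa\bar q$ and $T-S=\kappa(1+t)$ into \eqref{eq:transversality-closed}. After dividing by the factor $\kappa^3/(\delta^3\kappa)$ this gives
\begin{align*}
\mathcal T=\frac{\kappa^2}{\delta^3}\Bigl[(1+t)\bigl(\bar q^2(1-L)+4\bar q-(2L^2+5L+5)\bigr)-\bar q^2+2L^2+2L+1\Bigr].
\end{align*}
Since $\delta^3<0$, $\operatorname{sign}\mathcal T=-\operatorname{sign}[\cdots]$. Next, replace $t$ by $\tau(\bar q)$ and multiply through by the denominator $(\bar q-1)^2-2(\bar q-1)+2\ln\bar q$. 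On $(0,1)$ this denominator has a fixed sign, which must be checked; it equals $\bar q^2-4\bar q+3+2\ln\bar q$ and is negative near $0$ because of the log. The bracket then becomes a polynomial in $\bar q$ and $L$, and the aim is to identify it with a constant multiple of $D(\bar q)$. This algebraic identification is the main obstacle. I would first expand symbolically and match it against the monomials of $D$: $\bar q^4 L$, $\bar q^4$, $\bar q^2L^2$, $\bar q^2L$, $\bar q^2$, $\bar q L^2$, $L^2$, $L$ and the constant. If the match fails, I would look for an additional factor such as $(1-\bar q)$ or $\bar q$.

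Third, Lemma~\ref{lem:D_unique_root} gives a unique root $\bar q_c$. Its proof shows $D<0$ on $(0,\bar q_c)$ and $D>0$ on $(\bar q_c,1)$. Set $\bar t:=\tau(\bar q_c)$; a numerical evaluation should give $\bar t\approx 2.13$. Then $\mathcal T=0$ exactly when $\bar q=\bar q_c$, which by the bijection holds exactly when $t=\bar t$. The sign of $\mathcal T$ on either side follows by combining three things: the sign of $D$, the sign of the denominator multiplying through, and the direction of monotonicity of $\tau$. I would fix the orientation by testing a single value, for example the limit $\bar q\to 0^+$, where $t\to 0$ and $\mathcal T$ should be negative. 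This gives $\mathcal T>0$ for $t>\bar t$ and $\mathcal T<0$ for $t<\bar t$.
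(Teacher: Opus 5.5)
Your proposal is correct and is essentially the paper's proof. The substitution $\bar q=K_0/\kappa$, solving $E(\bar x)=0$ for $\frac{T-P}{P-S}=\tau(\bar q)=-2(1-\bar q)^2/C_{num}(\bar q)$ with $C_{num}(\bar q)=\bar q^2-4\bar q+3+2\ln\bar q$, and reducing the sign of $\mathcal T$ to that of $D(\bar q)$ via Lemma~\ref{lem:D_unique_root} and the monotonicity of $\tau$ are exactly Steps 1--4 there. The identification you flag as the main obstacle closes with constant one: writing your bracket as $(1+t)A(\bar q)+B(\bar q)$, one has $C_{num}\bigl[(1+\tau)A+B\bigr]=C_{num}(A+B)-2(1-\bar q)^2A=D$. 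Since $C_{num}<0$ on $(0,1)$ (it vanishes at $1$ and has derivative $2(1-\bar q)^2/\bar q>0$), the prefactor $\kappa^2/(\delta^3C_{num})$ is positive, so $\operatorname{sign}\mathcal T=\operatorname{sign}D$ directly and no test-point orientation is needed.
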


\begin{proof}
Let $\kappa = P-S > 0$. At the undiscounted root $\bar{x} \in (0,1)$, define $K_0 = \kappa + \delta \bar{x} > 0$ and the dimensionless coordinate $\bar{q} = \frac{K_0}{\kappa} = 1 + \frac{\delta \bar{x}}{\kappa}$. Because $\bar{x} \in (0, 1)$ and $\delta < 0$, we have $\bar{q} \in (0, 1)$. We substitute $K_0 = \kappa \bar{q}$ and $L = \ln(K_0/\kappa) = \ln \bar{q}$, which also implies $\delta \bar{x} = \kappa(\bar{q}-1)$.

\vspace{0.2cm}
\noindent \textbf{Step 1: Factoring the Transversality Condition.} \\
We take the closed form of $\mathcal{T}$ given in Lemma~\ref{lem:transversality-constant} and factor the following way. 
Substituting $K_0 = \kappa \bar{q}$ and $L = \ln \bar{q}$ permits factoring out $\kappa^2$ from the $(T-S)$ term and $\kappa^3$ from the remaining term. We can rewrite this compactly as:
\begin{equation} \label{eq:T_compact}
    \mathcal{T} = \frac{\kappa^2}{\delta^3} \Big[ \left(\frac{T-S}{\kappa}\right) A(\bar{q}) + B(\bar{q}) \Big],
\end{equation}
where we define the geometric polynomials:
\begin{align*}
    A(\bar{q}) &= \bar{q}^2(1-\ln\bar{q}) + 4\bar{q} - 2\ln^2\bar{q} - 5\ln\bar{q} - 5, \\
    B(\bar{q}) &= -\bar{q}^2 + 2\ln^2\bar{q} + 2\ln\bar{q} + 1.
\end{align*}

\vspace{0.2cm}
\noindent \textbf{Step 2: Incorporating the Root Condition.} \\
The root condition $E(\bar{x}) = 0$ established in Theorem A.13 is defined as:
\begin{equation*}
    (T-P) \left[ \frac{\bar{x}^2}{2\kappa} - \frac{\bar{x}}{\delta} + \frac{\kappa}{\delta^2}\ln\left(1+\frac{\delta\bar{x}}{\kappa}\right) \right] + \bar{x}^2 = 0.
\end{equation*}
Substituting $\bar{x} = \frac{\kappa}{\delta}(\bar{q}-1)$ and dividing the entire equation by $\frac{\kappa^2}{\delta^2}$ yields:
\begin{equation*}
    \frac{T-P}{\kappa} \left[ \frac{(\bar{q}-1)^2 - 2(\bar{q}-1) + 2\ln\bar{q}}{2} \right] + (\bar{q}-1)^2 = 0.
\end{equation*}
Let $C_{num}(\bar{q}) = (\bar{q}-1)^2 - 2(\bar{q}-1) + 2\ln\bar{q} = \bar{q}^2 - 4\bar{q} + 3 + 2\ln\bar{q}$. Solving for the normalized temptation spread yields the exact bijection:
\begin{equation} \label{eq:T_minus_P_f}
    \frac{T-P}{\kappa} = \frac{-2(1-\bar{q})^2}{C_{num}(\bar{q})} =: f(\bar{q}).
\end{equation}
Since $T-S = (T-P) + (P-S) = (T-P) + \kappa$, we obtain the exact substitution for $T-S$:
\begin{equation} \label{eq:T_minus_S_sub}
    \frac{T-S}{\kappa} = f(\bar{q}) + 1 = \frac{-2(1-\bar{q})^2 + C_{num}(\bar{q})}{C_{num}(\bar{q})}.
\end{equation}

\vspace{0.2cm}
\noindent \textbf{Step 3: Algebraic Expansion of the Threshold Function.} \\
Substitute \eqref{eq:T_minus_S_sub} into the compact transversality expression \eqref{eq:T_compact}:
\begin{align*}
    \mathcal{T} &= \frac{\kappa^2}{\delta^3 C_{num}(\bar{q})} \Big[ \big( C_{num}(\bar{q}) - 2(1-\bar{q})^2 \big) A(\bar{q}) + C_{num}(\bar{q}) B(\bar{q}) \Big] \nonumber \\
    &= \frac{\kappa^2}{\delta^3 C_{num}(\bar{q})} \Big[ C_{num}(\bar{q}) \big( A(\bar{q}) + B(\bar{q}) \big) - 2(1-\bar{q})^2 A(\bar{q}) \Big] =: \frac{\kappa^2}{\delta^3 C_{num}(\bar{q})} D(\bar{q}).
\end{align*}
We now expand $D(\bar{q})$ explicitly. First, we compute $A(\bar{q}) + B(\bar{q}) = -\bar{q}^2\ln\bar{q} + 4\bar{q} - 3\ln\bar{q} - 4$. Multiplying this by $C_{num}(\bar{q}) $ gives:
\begin{align*}
    C_{num}(\bar{q}) \big( A(\bar{q}) + B(\bar{q}) \big) = {}& -\bar{q}^4\ln\bar{q} + 4\bar{q}^3\ln\bar{q} + 4\bar{q}^3 - 2\bar{q}^2\ln^2\bar{q} - 6\bar{q}^2\ln\bar{q} - 20\bar{q}^2 \nonumber \\
    & + 20\bar{q}\ln\bar{q} + 28\bar{q} - 6\ln^2\bar{q} - 17\ln\bar{q} - 12.
\end{align*}
Next, we expand $2(1-\bar{q})^2 = 2\bar{q}^2 - 4\bar{q} + 2$ and multiply by $A(\bar{q})$:
\begin{align*}
    -2(1-\bar{q})^2 A(\bar{q}) = {}& -2\bar{q}^4 + 2\bar{q}^4\ln\bar{q} - 4\bar{q}^3\ln\bar{q} - 4\bar{q}^3 + 4\bar{q}^2\ln^2\bar{q} + 12\bar{q}^2\ln\bar{q} + 24\bar{q}^2 \nonumber \\
    & - 8\bar{q}\ln^2\bar{q} - 20\bar{q}\ln\bar{q} - 28\bar{q} + 4\ln^2\bar{q} + 10\ln\bar{q} + 10.
\end{align*}
Adding the two terms  cancels all cubic ($\bar{q}^3, \bar{q}^3\ln\bar{q}$) and linear ($\bar{q}, \bar{q}\ln\bar{q}$) terms:
\begin{equation*}
    D(\bar{q}) = \bar{q}^4\ln\bar{q} - 2\bar{q}^4 + 2\bar{q}^2\ln^2\bar{q} + 6\bar{q}^2\ln\bar{q} + 4\bar{q}^2 - 8\bar{q}\ln^2\bar{q} - 2\ln^2\bar{q} - 7\ln\bar{q} - 2.
\end{equation*}

\vspace{0.2cm}
\noindent \textbf{Step 4: Evaluating the Signs and the Universal Threshold.} \\
To establish the sign of $\mathcal{T}$, we evaluate its prefactor $\frac{\kappa^2}{\delta^3 C_{num}(\bar{q})}$. We have $\kappa^2 > 0$, and strategic complementarity demands $\delta < 0 \implies \delta^3 < 0$. Furthermore, taking the derivative of $C_{num}(\bar{q})$ yields $C_{num}'(\bar{q}) = 2\bar{q} - 4 + 2/\bar{q} = \frac{2(1-\bar{q})^2}{\bar{q}} > 0$ for all $\bar{q} \in (0, 1)$. Because $C_{num}(1) = 0$ and $C_{num}$ is strictly increasing, $C_{num}(\bar{q}) < 0$ on $(0, 1)$. Consequently, the prefactor is  strictly positive.

Thus, $\text{sign}(\mathcal{T}) = \text{sign}(D(\bar{q}))$. The  function $D(\bar{q})$ evaluates to $0$ at $\bar{q}=1$ and transitions exactly once from negative to positive at a unique root $\bar{q}_c \approx 0.33486$, as established in  Lemma~\ref{lem:D_unique_root}.

From \eqref{eq:T_minus_P_f}, the root condition demands that $\frac{T-P}{P-S} = f(\bar{q})$. One can verify that $f'(\bar{q}) > 0$ on $(0, 1)$, establishing $f$ as a bijection. Evaluating $f$ at the critical coordinate $\bar{q}_c$ yields the universal constant $\bar t = f(\bar{q}_c) \approx 2.130075$. Because $f$ is strictly monotonic, $\frac{T-P}{P-S} > \bar t \iff \bar{q} > \bar{q}_c \iff D(\bar{q}) > 0 \iff \mathcal{T} > 0$. The  symmetric equivalence completes the proof.
\end{proof}

Now we are ready to state the main theorem of this section.

\begin{theorem}[Global optimality via bifurcation]
\label{thm:bifurcation_global_optimality}
For $\delta<0$, define the payoff ratio $q := \frac{T-R}{P-S} \in (0, 1)$. When
\begin{equation}\label{eq:E1condi}
    \frac{T-P}{P-S} \in \left(\frac{2(1-q)^2}{4q - q^2 - 3 - 2\ln q}, \infty\right)\setminus \{\bar t\},
\end{equation}
there exists a unique root $\bar{x} \in (0,1)$ satisfying $E(\bar{x}) = 0$ (by Remark~\ref{rem:E-root-exists}), and the transversality condition $\mathcal{T}\neq 0$ holds.  When $\mathcal{T}>0$ there exists $\varepsilon > 0$ such that for every initial condition $x_0 \in ( \bar{x},\bar{x} + \varepsilon)$:
\begin{enumerate}
    \item For every discount rate $r > 0$, the meta-game admits a finite symmetric interior critical point $\gamma^*(r) = \eta^*(x_0)\, r \in (0, \infty)$.
    \item The profile $(\gamma^*(r), \gamma^*(r))$ is a strict global symmetric Nash equilibrium.
\end{enumerate}
When $\mathcal{T}<0$ the same conclusion holds for all $x_0 \in ( \bar{x} - \varepsilon,\bar{x})$.
\end{theorem}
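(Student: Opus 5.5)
The plan is to obtain the equilibrium by perturbing the undiscounted critical point. An implicit function argument in the pair $(\tilde r, x_0)$ produces the critical point, and the global, strict character of the maximum is transported from $J_1^0$ to $W$ by the uniform convergence lemmas. First I check that \eqref{eq:E1condi} is exactly $E(1)<0$. Evaluating \eqref{eq:E-def} at $x_0=1$ and writing $1+\delta/\kappa=q$ gives $E(1)=\frac{\kappa(T-P)}{2\delta^2}\bigl[q^2-4q+3+2\ln q\bigr]+1$. Since $q^2-4q+3+2\ln q<0$ on $(0,1)$, as shown for $C_{num}$ in the proof of Lemma~\ref{lem:universal_transversality}, the inequality $E(1)<0$ rearranges to the stated lower bound on $\frac{T-P}{P-S}$. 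Remark~\ref{rem:E-root-exists} and Lemma~\ref{lem:E-root} then give a unique root $\bar x\in(0,1)$, and Lemma~\ref{lem:universal_transversality} gives $\mathcal T\neq 0$ when $\frac{T-P}{P-S}\neq\bar t$.

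Next I define $\Phi(\tilde r,x_0):=\partial_\rho W(1,\tilde r;x_0)$ on $[0,\bar r)\times X'$. By Lemma~\ref{lem:final}(ii) this map is $C^1$. It satisfies $\Phi(0,x_0)=E(x_0)$ by Theorem~\ref{thm:undiscounted-FOC}, and $\partial_{\tilde r}\Phi(0,\bar x)=\mathcal T$. I also need $E'(\bar x)<0$. The proof of Lemma~\ref{lem:E-root} shows that $f$, hence $E$, is strictly decreasing on $(x_1,1]\ni\bar x$, with $f'(\bar x)<0$ because $f'<0$ on $(x_1,1]$. Since $\mathcal T\neq0$, the implicit function theorem, applied in the form $\tilde r=\tilde r(x_0)$ with one-sided $\tilde r\ge0$, gives a $C^1$ function $\tilde r(x_0)$ near $\bar x$. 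It has $\tilde r(\bar x)=0$ and $\tilde r'(\bar x)=-E'(\bar x)/\mathcal T$, and it satisfies $\Phi(\tilde r(x_0),x_0)=0$. When $\mathcal T>0$ the slope is positive, so $\tilde r(x_0)>0$ exactly for $x_0\in(\bar x,\bar x+\varepsilon)$; when $\mathcal T<0$ the same holds on $(\bar x-\varepsilon,\bar x)$. I then set $\eta^*(x_0):=1/\tilde r(x_0)$ and $\gamma^*(r)=\eta^*(x_0)r$. By Lemma~\ref{lem:effective_discount}, $V_1(\gamma_1,\gamma^*;r)=P+\tilde r\,W(\gamma_1/\gamma^*,\tilde r)$ with $\tilde r=r/\gamma^*=\tilde r(x_0)$, independent of $r$. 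So Firm~$1$'s problem against $\gamma^*$ is to maximize $\rho\mapsto W(\rho,\tilde r(x_0);x_0)$, and $\rho=1$ is a critical point of it. Symmetry of the initial condition and of the payoffs gives the same conclusion for Firm~$2$, which yields part~1.

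For part~2 I must show that $\rho=1$ is the unique global maximizer of $W(\cdot,\tilde r;x_0)$ for all $(\tilde r,x_0)$ near $(0,\bar x)$. I split $(0,\infty)$ into three regions. On the lower tail $(0,\underline\rho]$, I use Lemma~\ref{lem:lower-tail-penalty}. On a neighbourhood $U''\subset U'$ of $1$, Lemma~\ref{lem:final}(i),(iii) together with Theorem~\ref{thm:undiscounted-global-max}, which gives $\partial_{\rho\rho}J_1^0(1,1;\bar x)<0$, make $\partial_{\rho\rho}W<0$ on $U''\times[0,\bar r')\times X''$ by joint continuity; strict concavity then makes the critical point $\rho=1$ the unique maximizer in $U''$. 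On the remaining set $[\underline\rho,\infty)\setminus U''$, Theorem~\ref{thm:undiscounted-global-max} gives a gap $\sup_{\rho} J_1^0(\rho,1;\bar x)\le J_1^0(1,1;\bar x)-2\theta$ for some $\theta>0$, using also the limit at $\rho\to\infty$ computed in Lemma~\ref{lem:J0-x0-uniform}, which lies strictly below the value at the maximizer by strict monotonicity. This gap persists to $W(\cdot,\tilde r;x_0)$ through three uniform approximations. Lemma~\ref{lem:uniform-convergence-W} handles $\tilde r$ on the compact piece $[\underline\rho,1]$ and on the half-line $[1,\infty)$. Remark~\ref{rem:uniform-in-x0} makes that convergence locally uniform in $x_0$, and Lemma~\ref{lem:J0-x0-uniform}, together with joint continuity on compacts, handles the variation of $J_1^0$ in $x_0$. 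Shrinking $\varepsilon$ so that $\tilde r(x_0)$ lies in all the required ranges then gives strict global optimality. Since $\gamma_1\mapsto\rho$ is a bijection of $(0,\infty)$, this is a strict global Nash equilibrium.

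The main obstacle is the uniformity in $x_0$, for two reasons. First, Lemma~\ref{lem:lower-tail-penalty} is stated for fixed $x_0$. Its constants $T_{\max}$, $M$ and $\underline\rho_0$ depend continuously on $x_0$, so I would redo it with $x_0$ ranging over a compact neighbourhood of $\bar x$ bounded away from $0$. Second, the gap estimate must compare $W(\rho,\tilde r;x_0)$ with $W(1,\tilde r;x_0)$, not with $J_1^0(1,1;\bar x)$. I would therefore bound both sides by their $J_1^0(\cdot;\bar x)$ counterparts within $\theta/2$, which works because the triangle inequality closes using continuity of $W(1,\cdot;\cdot)$ at $(0,\bar x)$.
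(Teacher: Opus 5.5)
Your proposal is correct and follows essentially the same route as the paper: you reduce \eqref{eq:E1condi} to $E(1)<0$, apply the implicit function theorem to $\Phi(\tilde r,x_0)=\partial_\rho W(1,\tilde r;x_0)$ with slope $-E'(\bar x)/\mathcal T$ to pick the admissible side of $\bar x$, and obtain global optimality from local strict concavity near $\rho=1$, the lower-tail penalty, and uniform convergence (locally uniform in $x_0$) of $W(\cdot,\tilde r;x_0)$ to $J_1^0(\cdot,1;\bar x)$. The differences are cosmetic: you merge the paper's upper tail and compact middle region into a single set controlled by the half-line convergence, and you justify $E'(\bar x)<0$ and the one-sided ($\tilde r\ge 0$) use of the implicit function theorem a little more explicitly than the paper does.
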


\begin{proof}
First, we show that when the condition~\ref{eq:E1condi} is satisfied, a unique root $\bar x \in (0,1)$ for $E(x_0)$ exists. 

\begin{lemma} \label{lem:clean_E1}
 When $\delta<0$, the root condition $E(1) < 0$ holds if and only if the following inequality is satisfied:
\begin{equation*}
    \frac{T-P}{P-S} > \frac{2(1-q)^2}{4q - q^2 - 3 - 2\ln q},
\end{equation*}
where $q := \frac{T-R}{P-S} \in (0, 1)$.
\end{lemma}

\begin{proof}
Evaluating $E(1)$ yields
\begin{equation*}
    E(1) = \frac{T-P}{\kappa} \left[ \frac{1}{2} - \frac{\kappa}{\delta} + \frac{\kappa^2}{\delta^2} \ln\left(1+\frac{\delta}{\kappa}\right) \right] + 1.
\end{equation*}
 Note that $1 + \frac{\delta}{\kappa} = \frac{\kappa+\delta}{\kappa} = \frac{T-R}{P-S} = q$. Consequently, we have $\frac{\delta}{\kappa} = q - 1$. Substituting these relations into the bracketed term (which we denote by $A(q)$) gives:
\begin{align*}
    A(q) &= \frac{1}{2} - \frac{1}{q-1} + \frac{\ln q}{(q-1)^2} \nonumber \\
    &= \frac{(q-1)^2 - 2(q-1) + 2\ln q}{2(q-1)^2} \nonumber \\
    &= \frac{q^2 - 4q + 3 + 2\ln q}{2(1-q)^2}.
\end{align*}
Let the numerator be $N(q) = q^2 - 4q + 3 + 2\ln q$. We have $N(1) = 0$, and differentiating yields $N'(q) = 2q - 4 + \frac{2}{q} = \frac{2(1-q)^2}{q}$. Since $N'(q) > 0$ for all $q \in (0,1)$, it follows that $N(q) < 0$ on this interval. Thus, $A(q)$ is strictly negative for all valid parameters.

Enforcing $E(1) = \frac{T-P}{P-S} A(q) + 1 < 0$ reduces to:
\begin{equation}
    \frac{T-P}{P-S} > -\frac{1}{A(q)} = \frac{2(1-q)^2}{-q^2 + 4q - 3 - 2\ln q} = \frac{2(1-q)^2}{4q - q^2 - 3 - 2\ln q},
\end{equation}
completing the proof.
\end{proof}

Define $\Phi(\tilde{r}, x_0) := \partial_\rho W(1, \tilde{r}; x_0)$.

\textit{Reduction of the first-order condition.} Fix $\gamma_2, r > 0$ and write $\rho = \gamma_1/\gamma_2$, $\tilde{r} = r/\gamma_2$. By Lemma~\ref{lem:effective_discount}, $V_1(\gamma_1, \gamma_2; r) = P + \tilde{r}\, W(\rho, \tilde{r};x_0)$; since $\tilde{r}$ does not depend on $\gamma_1$, differentiating in $\gamma_1$ and evaluating at the symmetric profile $\gamma_1 = \gamma_2$ (so $\rho = 1$) gives
\begin{align*}
\frac{\partial V_1}{\partial \gamma_1}(\gamma_2, \gamma_2; r) = \frac{\tilde{r}}{\gamma_2}\, \partial_\rho W(1, \tilde{r};x_0) = \frac{\tilde{r}}{\gamma_2}\, \Phi(\tilde{r}, x_0).
\end{align*}
Thus a symmetric profile $(\gamma, \gamma)$ is a critical point of the discounted meta-game at rate $r$ if and only if $\Phi(r/\gamma, x_0) = 0$.

\textit{Construction of the critical point.} At $\tilde{r} = 0$, $W(\rho, 0; x_0) = J_1^0(\rho, 1; x_0)$, so $\Phi(0, x_0) = \partial_\rho J_1^0(1, 1; x_0) = E(x_0)$. By the analysis of $E$ in Lemma~\ref{lem:E-root}, $E$ has the unique zero $\bar x$ in $(0,1]$, with $E'(\bar x)<0$ and $E(x_0)<0$ for $x_0\in(\bar x,1]$; in particular $\Phi(0,\bar x)=E(\bar x)=0$.

Taking the transversality assumption $\partial^2_{\rho\tilde r}W(1,0;\bar x)\neq0$, we continue the proof assuming $\partial^2_{\rho\tilde r}W(1,0;\bar x)>0$. The argument for $\partial^2_{\rho\tilde r}W(1,0;\bar x)<0$ is analogous, and the necessary adjustment will be mentioned below. 

As $\Phi(\tilde r,x_0)=\partial_\rho W(1,\tilde r;x_0)$ is continuously differentiable on a neighbourhood of $(0,\bar x)$ by Lemma~\ref{lem:final}(ii), and $\partial_{\tilde r}\Phi(0,\bar x)=\partial^2_{\rho\tilde r}W(1,0;\bar x)>0$ by the transversality assumption, the Implicit Function Theorem yields $\varepsilon_1>0$ and a continuously differentiable function $\tilde r^*\colon(\bar x-\varepsilon_1,\bar x+\varepsilon_1)\to\mathbb{R}$ with $\Phi(\tilde r^*(x_0),x_0)=0$ and $\tilde r^*(\bar x)=0$. Its derivative at $\bar x$ is
\begin{align*}
\tilde{r}^*{}'(\bar{x}) = -\frac{\partial_{x_0}\Phi(0,\bar{x})}{\partial_{\tilde{r}}\Phi(0,\bar{x})} = -\frac{E'(\bar{x})}{\partial_{\tilde{r}}\Phi(0,\bar{x})} > 0,
\end{align*}
since $E'(\bar{x}) < 0$ and $\partial_{\tilde{r}}\Phi(0,\bar{x}) > 0$. Hence $\tilde{r}^*$ is strictly increasing through $\bar{x}$, so $\tilde{r}^*(x_0) > 0$ precisely for $x_0 \in (\bar{x}, \bar{x}+\varepsilon_1)$ (maybe reducing $\varepsilon_1$), the interval on which we work henceforth. In the case $\partial^2_{\rho\tilde r}W(1,0;\bar x)<0$, one would choose the left-sided expansion $(\bar x - \eps_1, \bar x)$ to still achieve $\tilde{r}^*{}'(\bar{x})>0$, and continue analogously.

Set $\eta^*(x_0) := 1/\tilde{r}^*(x_0)$ and $\gamma^*(r) := \eta^*(x_0)\, r$. At $\gamma_1 = \gamma_2 = \gamma^*(r)$ the effective rate is $r/\gamma^*(r) = 1/\eta^*(x_0) = \tilde{r}^*(x_0)$, so $\Phi(\tilde{r}^*(x_0), x_0) = 0$ gives $\partial_{\gamma_1} V_1(\gamma^*(r), \gamma^*(r); r) = 0$ for every $r > 0$. This proves part~(1). 

\textit{Global optimality.} Fix a bounded open neighbourhood $U'\subseteq(0,\infty)$ of $\rho=1$ and a constant $\bar r_{\mathrm c}\in(0,\bar r)$, provided by Lemma~\ref{lem:final}, such that $\partial_{\rho\rho}W$ is jointly continuous on $U'\times[0,\bar r_{\mathrm c})\times X'$. By Lemma~\ref{lem:final}(iii) and Theorem~\ref{thm:undiscounted-global-max}, $\partial_{\rho\rho}W(1,0;\bar x)=\partial_{\rho\rho}J_1^0(1,1;\bar x)<0$; by joint continuity we may shrink $U'$, $\bar r_{\mathrm c}$ and $X'$ so that $\partial_{\rho\rho}W(\cdot,\tilde r;x_0)<0$ throughout $U'$ for all $(\tilde r,x_0)\in[0,\bar r_{\mathrm c})\times X'$. Thus $W(\cdot,\tilde r;x_0)$ is strictly concave on $U'$ and has at most one critical point there. Shrinking $\varepsilon$ if necessary so that $(\bar x,\bar x+\varepsilon)\subseteq X'$ and $\tilde r^*(x_0)<\bar r_{\mathrm c}$ for $x_0\in(\bar x,\bar x+\varepsilon)$, the profile $\rho=1$ is by construction a critical point of $W(\cdot,\tilde r^*(x_0);x_0)$, hence its unique critical point in $U'$ and a strict local maximiser.

It remains to show that the global maximiser of $W(\cdot,\tilde r^*(x_0);x_0)$ over $(0,\infty)$ lies in $U'$. As $x_0\searrow\bar x$, $\tilde r^*(x_0)\to 0^+$; by Lemma~\ref{lem:uniform-convergence-W} with Remark~\ref{rem:uniform-in-x0} and by Lemma~\ref{lem:J0-x0-uniform}, $W(\cdot,\tilde r^*(x_0);x_0)\to J_1^0(\cdot,1;\bar x)$ uniformly on every compact subset of $(0,\infty)$ and on $[1,\infty)$. By Theorem~\ref{thm:undiscounted-global-max}, $J_1^0(\cdot,1;\bar x)$ attains a unique strict global maximum at $\rho=1$, is strictly increasing on $(0,1)$ and is strictly decreasing on $(1,\infty)$.

\textit{Confinement to a compact interval.} The constants in Lemma~\ref{lem:lower-tail-penalty} depend continuously on $x_0$, so may be chosen locally uniformly near $\bar x$; the lemma then provides $\underline{\rho}\in(0,\inf U')$ with $\sup_{\rho\in(0,\underline{\rho}]}W(\rho,\tilde r^*(x_0);x_0)<J_1^0(1,1;\bar x)-1<W(1,\tilde r^*(x_0);x_0)$ for $x_0$ close enough to $\bar x$, where the last inequality uses $W(1,\tilde r^*(x_0);x_0)\to J_1^0(1,1;\bar x)$.

For the upper tail, fix any $\gamma_{\mathrm{high}}>\sup U'$; since $J_1^0(\cdot, 1; \bar{x})$ is strictly decreasing on $(1, \infty)$ (Theorem~\ref{thm:undiscounted-global-max}) and $W \to J_1^0$ uniformly on $[1, \infty)$ (Lemma~\ref{lem:uniform-convergence-W}), we have $\sup_{\rho\ge\gamma_{\mathrm{high}}}W(\rho,\tilde r^*(x_0);x_0)<W(1,\tilde r^*(x_0);x_0)$ for $x_0$ close to $\bar x$. Hence the global maximiser lies in the compact interval $K := [\underline{\rho}, \gamma_{\mathrm{high}}] \supseteq\overline{U'}$.

\textit{Localisation at $\rho = 1$.} Since $K\setminus U'$ is compact and $J_1^0(\cdot,1;\bar x)$ is continuous with its unique strict maximum at $\rho=1\in U'$, the gap $\eta:=J_1^0(1,1;\bar x)-\sup_{\rho\in K\setminus U'}J_1^0(\rho,1;\bar x)$ is strictly positive. By uniform convergence on $K$, for $x_0$ close enough to $\bar x$ we have $\sup_{\rho\in K}|W(\rho,\tilde r^*(x_0);x_0)-J_1^0(\rho,1;\bar x)|<\eta/3$, whence
\begin{align*}
\sup_{\rho\in K\setminus U'}W(\rho,\tilde r^*(x_0);x_0)<J_1^0(1,1;\bar x)-\tfrac{2\eta}{3}<W(1,\tilde r^*(x_0);x_0)-\tfrac{\eta}{3}.
\end{align*}
Therefore the maximiser over $K$ lies in $U'$. Combined with the strict concavity of $W(\cdot,\tilde r^*(x_0);x_0)$ on $U'$, the maximiser over $K$ equals the unique critical point $\rho=1$, which is thus the unique strict global maximiser of $W(\cdot,\tilde r^*(x_0);x_0)$ over $(0,\infty)$.

\textit{From best response to Nash equilibrium.} Restoring the scaling equivalence of Lemma~\ref{lem:effective_discount}, $\gamma_1=\gamma^*(r)$ is the unique strict global best response of Firm~$1$ to $\gamma_2=\gamma^*(r)$ for every $r>0$. As the stage game is symmetric and $x_0=y_0$, the map $\gamma_2\mapsto V_2(\gamma^*(r),\gamma_2;r)$ is obtained from $\gamma_1\mapsto V_1(\gamma_1,\gamma^*(r);r)$ by interchanging the roles of the players, so $\gamma_2=\gamma^*(r)$ is likewise the unique strict global best response of Firm~$2$ to $\gamma_1=\gamma^*(r)$. Hence $(\gamma^*(r),\gamma^*(r))$ is a strict global symmetric Nash equilibrium. Shrinking $\varepsilon$ guarantees this for all $x_0\in(\bar x,\bar x+\varepsilon)$, proving part~(2).
\end{proof}

\subsection{Proof of Proposition~\ref{prop:delta_sensitivity}}

\begin{proof}
Fix a stage game satisfying the hypotheses, and denote its complementarity
parameter by $\delta_0<0$. Throughout the proof, hold
$\kappa$, $T-P$, $x_0=y_0$, and $r>0$ fixed. Write
$\bar x=\bar x(\delta_0)$ for the undiscounted root, define
$\Phi(\tilde r,x_0,\delta):=\partial_\rho W(1,\tilde r;x_0,\delta)$,
and let $\mathcal T_0:=\partial_{\tilde r}\Phi(0,\bar x,\delta_0)$.
By Lemma \ref{lem:universal_transversality} and the exclusion of the threshold in
Theorem \ref{thm:interior-NE-informal}, $\mathcal T_0\ne0$.

We first justify joint regularity, including dependence on $\delta$.
The  representation \eqref{eq:dW-minus} in the proof of Lemma~\ref{lem:final} gives
\[
  \Phi(\tilde r,x_0,\delta)
  =\int_0^{\tau(x_0,\delta)}
     e^{-\tilde r t}G(t,x_0,\delta)\,dt,
\]
where $\tau(x_0,\delta)=\frac{2}{\delta}
\ln(1+\delta x_0/\kappa)$ and
$G(t,x_0,\delta)=\partial_\rho F_1(t,1;x_0,\delta)$ is computed from
the unprojected interior trajectories. Near $(\bar x,\delta_0)$,
we have $\delta<0$ and $\kappa+\delta x_0>0$; hence $\tau$ and $G$
are smooth, and $\tau$ is locally bounded. The substitution
$t=s\tau(x_0,\delta)$ fixes the integration interval at $[0,1]$,
so differentiation under the integral shows that $\Phi$ is jointly
continuously differentiable in $(\tilde r,x_0,\delta)$. The same
representation extends $\Phi$ to an open interval of $\tilde r$ about
zero. In particular, $\partial_{\tilde r}\Phi$ and
$\partial_\delta\Phi$ are jointly continuous near $(0,\bar x,\delta_0)$.

At zero effective discounting, $\Phi(0,x_0,\delta)=E(x_0;\delta)$ given in \eqref{eq:E-def}.
Set $z=\delta x_0/\kappa$ and $g(z)=[\ln(1+z)-z]/z^2$.
The inequalities $\delta<0$ and
$\kappa+\delta x_0\geq\kappa+\delta=T-R>0$ imply $z\in(-1,0)$.
The expression for $E$ can be written as
$$E(x_0;\delta)=\frac{T-P}{\kappa}x_0^2\left[\frac12+g(z)\right]+x_0^2.
$$
To determine its derivative, write $g'(z)=N(z)/z^3$, where
$N(z)=2z-z^2/(1+z)-2\ln(1+z)$.
Since $N'(z)=z^2/(1+z)^2>0$ and $N(0)=0$, we have $N(z)<0$ on
$(-1,0)$ and therefore $g'(z)>0$. It follows that
$\partial_\delta E(x_0;\delta)
=\frac{(T-P)x_0^3}{\kappa^2}g'(z)>0$.
In particular,
$\partial_\delta\Phi(0,\bar x,\delta_0)
=\partial_\delta E(\bar x;\delta_0)>0$, where the initial condition
is held fixed in this partial derivative.

Joint continuity and $\mathcal T_0\ne0$ therefore give a neighbourhood
$\mathcal U$ of $(0,\bar x,\delta_0)$ on which
\[
  \left|\partial_{\tilde r}\Phi(\tilde r,x_0,\delta)
             -\mathcal T_0\right|
  <\frac{|\mathcal T_0|}{2},
  \qquad
  \partial_\delta\Phi(\tilde r,x_0,\delta)>0.
\]
The equilibrium branch constructed in Appendix \ref{sec:discounted} 
satisfies
$\Phi(\tilde r^*(x_0;\delta_0),x_0,\delta_0)=0$,
$\tilde r^*(x_0;\delta_0)>0$, and
$\tilde r^*(x_0;\delta_0)\to0$ as $x_0\to\bar x$ along its
admissible side. After shrinking the nonempty open equilibrium
neighbourhood $\mathcal N$ if necessary, its equilibrium points
$(\tilde r^*(x_0;\delta_0),x_0,\delta_0)$ therefore lie in
$\mathcal U$ for every $x_0\in\mathcal N$.
At each such point, $\partial_{\tilde r}\Phi$ has absolute value
greater than $|\mathcal T_0|/2$ and has the same sign as
$\mathcal T_0$.

Now fix $x_0\in\mathcal N$, set
$\tilde r_0:=\tilde r^*(x_0;\delta_0)$, and write
$P_*:=(\tilde r_0,x_0,\delta_0)$ for the equilibrium evaluation point.
The implicit function theorem at $P_*$ provides a continuously
differentiable local branch $\delta\mapsto\tilde r^*(x_0;\delta)$.
After reducing the parameter interval, this branch remains strictly
positive and its graph remains in $\mathcal U$. The strict concavity
and uniform payoff-gap estimates used to establish global optimality
in Appendix \ref{sec:discounted} persist under sufficiently small changes in $\delta$:
the trajectories, hitting times, and constants in the tail bounds
depend continuously on the game parameters. Thus the branch continues
to represent the symmetric equilibrium.

Differentiating the equilibrium condition with $x_0$ fixed and using
$\gamma^*(r,x_0;\delta)=r/\tilde r^*(x_0;\delta)$ yields
\[
\begin{aligned}
  \left.\frac{\partial\tilde r^*(x_0;\delta)}{\partial\delta}
       \right|_{\delta=\delta_0}
  &=-\frac{\partial_\delta\Phi(P_*)}
           {\partial_{\tilde r}\Phi(P_*)},\\[4pt]
  \left.\frac{\partial\gamma^*(r,x_0;\delta)}{\partial\delta}
       \right|_{\delta=\delta_0}
  &=\frac{r}{\tilde r_0^2}
     \frac{\partial_\delta\Phi(P_*)}
          {\partial_{\tilde r}\Phi(P_*)}.
\end{aligned}
\]
The numerator is positive, while the denominator has the sign of
$\mathcal T_0$. Hence the derivative of the equilibrium learning rate
has the sign of $\mathcal T_0$. Lemma \ref{lem:universal_transversality}  gives $\mathcal T_0>0$ when
$(T-P)/(P-S)>\bar t$ and $\mathcal T_0<0$ when
$(T-P)/(P-S)<\bar t$, proving the two claims.
The base parameter $\delta_0$ was arbitrary among games satisfying
the hypotheses.
\end{proof}

\end{document}